\numberwithin{equation}{section}
\newtheorem{theorem}{Theorem}[section]
\newtheorem{proposition}[theorem]{Proposition}
\newtheorem{corollary}[theorem]{Corollary}
\newtheorem{lemma}[theorem]{Lemma}
\newtheorem{remark}[theorem]{Remark}
\newtheorem{example}[theorem]{Example}
\newtheorem{examples}[theorem]{Examples}
\newtheorem{foo}[theorem]{Remarks}
\newcommand{\1}{\mathbf 1}
\newcommand{\N}{\mathbb{N}}
\newcommand{\R}{\mathbb{R}}
\renewcommand{\d}{\mathrm{d}}
\renewcommand{\epsilon}{\varepsilon}
\newcommand{\dd}{\,\mathrm{d}}
\newcommand{\var}{\text{-}\mathrm{var}}
\begin{document}

\title{Duality for pathwise superhedging in continuous time}

\author{Daniel Bartl\thanks{D.B. has been funded by the Vienna Science and Technology Fund (WWTF) through project VRG17-005 and by the Austrian Science Fund (FWF) under grant Y00782.},
Michael Kupper,
David J. Pr\"omel\thanks{D.J.P. gratefully acknowledges financial support of the Swiss National Foundation under Grant No.~$200021\_163014$ and was employed at ETH Z\"urich when this project was commenced.},
and Ludovic Tangpi\thanks{L.T. gratefully acknowledges financial support of the Vienna Science and Technologie Fund (WWTF) under project MA14-008.}}

\date{\today}

\maketitle

\begin{abstract}
  \noindent
  \textbf{Abstract.} We provide a model-free pricing-hedging duality in continuous time. For a frictionless market consisting of $d$~risky assets with continuous price trajectories, we show that the purely analytic problem of finding the minimal superhedging price of a path dependent European option has the same value as the purely probabilistic problem of finding the supremum of the expectations of the option over all martingale measures. The superhedging problem is formulated with simple trading strategies, the claim is the limit inferior of continuous functions, which allows for upper and lower semi-continuous claims, and superhedging is required in the pathwise sense on a $\sigma$-compact sample space of price trajectories. If the sample space is stable under stopping, the probabilistic problem reduces to finding the supremum over all martingale measures with compact support. As an application of the general results we deduce dualities for Vovk's outer measure and semi-static superhedging with finitely many securities.\\[2mm]  
  \textbf{MSC 2010}: 60G44, 91G20, 91B24.\\
  %
  \textbf{Keywords}: pathwise superhedging, pricing-hedging duality, Vovk's outer measure, semi-static hedging, martingale measures, $\sigma$-compactness.
\end{abstract}

\section{Introduction}

Given the space $C([0,T],\mathbb{R}^d)$ of all continuous price trajectories, the superhedging problem of a contingent claim $X\colon C([0,T],\mathbb{R}^d)\to\mathbb{R}$ consists of finding the infimum over all $\lambda\in\mathbb{R}$ such that there exists a trading strategy~$H$ which satisfies
\begin{equation}\label{eq:hedging0}
  \lambda+ (H\cdot S)_T(\omega) \ge X(\omega),\quad \omega \in C([0,T],\mathbb{R}^d),
\end{equation}
where $(H\cdot S)_T(\omega)$ denotes the capital gain by trading according to the strategy~$H$ in the underlying assets $S_t(\omega):=\omega(t)$. 

In the classical framework of mathematical finance one commonly postulates a model for the price evolution by fixing a probability measure~$P$ such that~$S$ is a semimartingale and defines $(H\cdot S)_T$ as the stochastic integral~$\int_0^T H_t\dd S_t$. Then, a consequence of the fundamental theorem of asset pricing states that the infimum over all $\lambda$ such that there are admissible predictable integrands~$H$ fulfilling inequality~\eqref{eq:hedging0} is equal to the supremum of $E_Q[X]$ over all absolutely continuous local martingale measures $Q$, see \citet{Delbaen2006}. Here, the superhedging (i.e. inequality~\eqref{eq:hedging0}) is assumed to hold $P$-almost surely and the set of absolutely continuous local martingale measures is non-empty, which is guaranteed by the exclusion of some form of arbitrage, see \cite{Delbaen2006} for the precise formulation. 

More recently, alternative possibilities to specify the superhedging requirement without referring to a fixed model have been proposed. For instance, if an investor takes into account a class~$\mathcal{P}$ of probabilistic models, then superhedging is naturally required to hold $\mathcal{P}$-quasi surely, i.e.~$P$-almost surely for all considered models $P\in\mathcal{P}$. The pioneering works of \citet{Lyons1995} and \citet{Avellaneda1995} on Knightian uncertainty in mathematical finance consider models with uncertain volatility in continuous time. The study of the pricing-hedging duality in this setting has given rise to a rich literature starting with the capacity-theoretic approach of \citet{denis06}. Further, \citet{Peng2010} obtains the duality using stochastic control techniques, whereas \citet{STZ1,STZ3,STZ10} rely on supermartingale decomposition results under individual models and eventually build on aggregation results to derive the duality under model uncertainty. This approach has been extended by \citet{Neufeld2013} to cover measurable claims using the theory of analytic sets, see also \citet{Biagini2017} for a robust fundamental theorem of asset pricing under a model ambiguity version of the no-arbitrage of the first kind condition $\mathrm{NA}_1(\mathcal{P})$, and \citet{Nutz2015} for the case of jump diffusions.

In the present work we focus on the pathwise/model-free approach and assume that the superhedging requirement~\eqref{eq:hedging0} has to hold pointwise for all price trajectories in a given set $\Omega\subseteq C([0,T],\mathbb{R}^d)$. In this pathwise setting, finding the minimal superhedging price turns out to be a purely analytic problem and its formulation is independent of the probabilistic problem of finding the supremum of the expectation over (a subset of) all martingale measures. This is in contrast to the above mentioned approaches working with a fixed model, under Knightian uncertainty or in a quasi-sure setting. Notice that the pathwise approach corresponds to the quasi-sure approach when $\mathcal{P}$ contains all Dirac measures, which in continuous time is excluded, see e.g.~\cite[Corollary~3.5]{Biagini2017}. 

In the now classical paper~\cite{Hob98}, Hobson first addressed the problem of pathwise superhedging for the lookback option. His analysis was based on some sharp pathwise martingale inequalities and has motivated \citet{bei-hl-pen} to introduce the martingale optimal transport problem in discrete time. Here, the investor takes static positions in some liquidly traded vanilla options and dynamic positions in the stocks. The rationale is that information on the price of options translates into the knowledge of some marginals of the martingale measures; see also \cite{Acciaio2016,BCKT,RobHeding,bur-fri-mag,burzoni2016pointwise,bei-nutz-tou} for further developments in this direction. In continuous time, the duality for the martingale optimal transport has been obtained by \citet{gal-hl-tou} and \citet{Possamai2013} in the quasi-sure setting. The pathwise formulation was studied by \citet{Dolinsky2014} using a discretization of the sample space. These results have been extended by \citet{Hou2015}, who, in particular, allow incorporation of investor's beliefs (of possible price paths) by relying on the notion of ``prediction set'' due to \citet{Mykland2003}. 

Following this consideration in our analysis, we also assume that the investor does not deem every continuous paths plausible but focuses on a prediction set $\Omega \subseteq C([0,T],\mathbb{R}^d)$ that is required to be $\sigma$-compact (i.e.~at most a countable union of compact sets) and define the pathwise superhedging problem on the sample space~$\Omega$. Moreover, restricting the set of possible price paths has the financially desirable effect of reducing the superhedging price. See also \citet{Aksamit2016} and \citet{Acciaio2017} for other treatments of belief and information in robust superhedging, and \citet{DS_Skorokhod} and \citet{Guo-Tan-Tou17} for extensions of the pathwise formulation to the Skorokhod space.

In the continuous time setting already the definition of a pathwise ``stochastic integral'' is a non-trivial issue. We circumvent this problem by working with simple strategies and consider as ``stochastic'' integral the pointwise limit inferior of pathwise integrals against simple strategies; an approach that was proposed by \citet{Perkowski2016} to define an outer measure allowing to study stochastic integration under model ambiguity. This outer measure is very similar in spirit to that of \citet{Vovk2012} and can be seen as the value of a pathwise superhedging problem, cf. Section~\ref{subsec:vovk} for details and \citet{Beiglbock2017} and \citet{Vovk2016} for existing duality results in this setting. 

Formally, we define the superhedging price of a contingent claim $X\colon\Omega\to[-\infty,+\infty]$ as the infimum over all $\lambda\in\mathbb{R}$ such that there exists a sequence~$(H^n)$ of simple strategies which satisfies
\begin{equation*}
  \lambda+ \liminf_{n\to\infty} (H^n\cdot S)_T(\omega) \ge X(\omega)\quad\mbox{for all }\omega\in\Omega
\end{equation*}
and the admissibility condition $\lambda+(H^n\cdot S)_t(\omega)\geq 0$ for all $n\in\mathbb{N}$, $\omega\in\Omega$, and $t\in[0,T]$.
If $X$ is the limit inferior of a sequence of continuous functions, then under the assumptions that $\Omega$ is $\sigma$-compact and contains all its stopped paths, we show that the superhedging price coincides with the supremum of $E_Q[X]$ over all martingale measures~$Q$. Furthermore, this duality is generalized to the case when $X$ is unbounded from above and when $\Omega$ does not contain all its stopped paths. In addition to providing a way around the technical difficulty posed by the definition of pathwise stochastic integrals, the superheding in terms of limit inferior turns out to be necessary to guarantee the duality on a sufficiently large space, see Remark~\ref{rem:counterexample} for a counterexample.

Our main contributions to the pathwise pricing-hedging duality in continuous time and finitely many risky assets are as follows: While in the current literature (see e.g. \cite{Hou2015,Dolinsky2014,Guo-Tan-Tou17}) pathwise duality results hold for uniformly continuous options, the proposed method allows for much less regular claims (including for example European options, Spread options, continuously and discretely monitoring Asian options, lookback options, certain types of barrier options, and options on the realized variance). In particular, this implies a duality for Vovk's outer measure on closed sets. A related duality result was given by~\citet{Vovk2016}, however, under an additional closure assumption on the set of the attainable outcomes. Moreover, our pricing-hedging duality holds for every prediction set~$\Omega$ which is $\sigma$-compact. Let us remark that the assumption of $\sigma$-compactness is an essential ingredient of the presented method to get the pricing-hedging duality. We will show in Section~\ref{subsec:examples} that typical price trajectories for various popular financial models such as local, stochastic or even rough volatility models belong to the $\sigma$-compact space of H\"older continuous functions. In the related work~\cite{Hou2015} the pricing-hedging duality holds for an approximate version of the superhedging price which requires the superhedging on an enlarged prediction set $\Omega^\varepsilon :=\{\omega\in C([0,T],\mathbb{R}^d)\,:\,\inf_{\omega^\prime\in\Omega} \|\omega-\omega^\prime\|_\infty\le\varepsilon\}\supset \Omega$ for any given $\varepsilon> 0$. 

\medskip
The article is organized as follows: In Section~\ref{sec:main results} we present the main results (Theorem~\ref{thm:main.integrals} and Theorem~\ref{thm:main.integrals.Z}) and some direct applications. Section~\ref{sec:discussion} contains a detailed discussion of feasible choices for the underlying sample space. The proofs of the main results are carried out in Section~\ref{sec:proof}. A criterion for the sample path regularity of stochastic processes and the construction of a counterexample are given in Appendix~\ref{sec:appendix}.

\section{Main results}\label{sec:main results}

Let $\Omega\subset C([0,T],\mathbb{R}^d)$ be a non-empty metric space where $T>0$ is a finite time horizon and $d\in\mathbb{N}$. The canonical process $S\colon[0,T]\times\Omega\to\mathbb{R}^d$ given by $S_t(\omega):=\omega(t)$ generates the raw filtration ${\cal F}^0_t:=\sigma(S_s,s\le t\wedge T)$, $t\ge 0$. Furthermore, let $(\mathcal{F}_t)$ be the right-continuous version of the raw filtration $({\cal F}^0_t)$, defined by ${\cal F}_t:=\bigcap_{s>t} {\cal F}^0_s$ for all $t\in [0,T]$. Denote by $\mathcal{M}(\Omega)$ the set of all Borel probability measures $Q$ on $\Omega$ such that the canonical process $S$ is a $Q$-martingale, and by $\mathcal{M}_c(\Omega):=\{Q\in\mathcal{M}(\Omega) : Q(K)=1 \text{ for some compact } K\subset\Omega\}$ the subset of all martingale measures with compact support.
Define
\[
  C_{\delta\sigma}:=\Big\{ X\colon \Omega\to[-\infty,+\infty] :
  \begin{array}{l} X=\liminf_n X_n
  \text{ for a sequence } (X_n) \text{ such}\\ \text{that }X_n\colon\Omega\to\mathbb{R}
  \text{ is bounded and continuous} 
  \end{array}\Big\}. 
\]
Note that $C_{\delta\sigma}$ contains all upper and lower semicontinuous functions from $\Omega$ to $\mathbb{R}$.

A process $H\colon [0,T]\times\Omega\to\mathbb{R}^d$ is called simple predictable if it is of the form
\[ 
  H_t(\omega)=\sum_{n=1}^{N} h_n(\omega)\1_{(\tau_n(\omega),\tau_{n+1}(\omega)]}(t) ,\quad (t,\omega)\in [0,T]\times \Omega,
\]
where $N\in\mathbb{N}$, $0\leq \tau_1\leq\dots\leq \tau_{N+1}\leq T$ are stopping times w.r.t.~the filtration $(\mathcal{F}_t)$, and $h_n\colon\Omega\to\mathbb{R}^d$ are bounded $\mathcal{F}_{\tau_n}$-measurable functions. The set of all simple predictable processes is denoted by $\mathcal{H}^f:=\mathcal{H}^f(\Omega)$. For a simple predictable $H\in\mathcal{H}^f$ the pathwise stochastic integral
\[ 
  (H\cdot S)_t(\omega):=\sum_{n=1}^{N} h_n(\omega) (S_{\tau_{n+1}(\omega)\wedge t}(\omega)-S_{\tau_{n}(\omega)\wedge t}(\omega)) 
\]
is well-defined for all $t\in[0,T]$ and all $\omega \in \Omega$. Similarly, the pathwise stochastic integral $(H\cdot S)$ is also well-defined for every $H\colon[0,T]\times\Omega\to\mathbb{R}^d$ in the set~$\mathcal{H}:=\mathcal{H}(\Omega)$ of processes of the form
\begin{equation*}
  H_t(\omega) = \sum_{n= 1}^{\infty} h_n(\omega) \1_{(\tau_n(\omega),\tau_{n+1}(\omega)]}(t)
\end{equation*}
where $0\leq \tau_1 \leq \tau_2\leq \cdots$ are stopping times such that for each $\omega \in \Omega$ there exists an $N(\omega)\in \N$ with $\tau_k(\omega)=T$ for all $k\geq N(\omega)$, and $h_n\colon \Omega \rightarrow \R$ are bounded $\mathcal{F}_{\tau_n}$-measurable functions. \smallskip
 
We introduce the following two assumptions, which we shall use frequently.

\begin{itemize}
  \item[(A1)] $\Omega$ is $\sigma$-compact, the metric on $\Omega$ induces a topology finer than (or equal to) 
    the one induced by the maximum norm $\|\omega\|_\infty:=\max_{t\in[0,T]} |\omega(t)|$, and for each Borel probability~$Q$ on $\Omega$ and every bounded $\mathcal{F}_t^0$-measurable function $h$ there exists a sequence of $\mathcal{F}_t^0$-measurable continuous functions $(h_n)$ which converges $Q$-almost surely to $h$.
  \item[(A2)] For every $\omega\in\Omega$ and each $t\in[0,T]$ the stopped path $\omega^t(\cdot):=\omega(\cdot\wedge t)$ is in $\Omega$ and the function $[0,T]\times\Omega \ni(t,\omega)\mapsto \omega^t$ is continuous.
\end{itemize}

If $\Omega$ is a $\sigma$-compact space endowed with the topology induced by the maximum norm, then (A1) is always satisfied, see Remark~\ref{rem:approx}. Now we are ready to state the main results of this paper. The proofs are given in Section~\ref{sec:proof}. 

\begin{theorem}\label{thm:main.integrals}
  Suppose that (A1) and (A2) hold and let $Z\colon\Omega\to[0,+\infty)$ be a continuous function 
  such that $Z(\omega^s)\leq Z(\omega^t)$ for all $\omega\in\Omega$ and $0\leq s\leq t\leq T$.
  Then, for every $X\in C_{\delta\sigma}$ which satisfies
  $X(\omega)\geq -Z(\omega)$ for all $\omega\in\Omega$, one has
  \begin{equation}\label{eq:duality}
    \inf\left\{ \lambda\in\mathbb{R} \,:\, 
    \begin{array}{l}
      \text{there is a sequence $(H^n)$ in $\mathcal{H}^f$ such that}\\
       \text{$\lambda+(H^n\cdot S)_t(\omega)\geq -Z(\omega^t) \text{ for all } (t,\omega)\in[0,T]\times\Omega$}\\
     \text{and } \lambda+ \liminf_n (H^n\cdot S)_T(\omega) \geq X(\omega) \,\text{ for all } \omega\in\Omega
     \end{array}
    \right\}
    =\sup_{Q\in\mathcal{M}_c(\Omega)} E_Q[X].
  \end{equation}
   Moreover, the equality \eqref{eq:duality} also holds 
   if $\mathcal{H}^f$ is replaced by $\mathcal{H}$, or 
   $\mathcal{M}_c(\Omega)$ is replaced by $\mathcal{M}_Z(\Omega):=\{Q\in\mathcal{M}(\Omega): E_Q[Z]<+\infty\}$.
\end{theorem}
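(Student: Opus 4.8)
The plan is to establish the two inequalities between the superhedging price $\pi(X)$ on the left and the martingale price $p(X):=\sup_{Q\in\mathcal{M}_c(\Omega)}E_Q[X]$ on the right separately. The inequality $\pi(X)\geq p(X)$ (``weak duality'') should be the easy direction: if $\lambda$ is admissible with strategies $(H^n)$ and $Q\in\mathcal{M}_c(\Omega)$ with $Q(K)=1$ for compact $K$, then each $(H^n\cdot S)$ is a $Q$-martingale started at $0$ (here one uses that $S$ is a bounded martingale on $K$, so the simple integral is a true martingale), hence $E_Q[\lambda+(H^n\cdot S)_T]=\lambda$; the admissibility bound $\lambda+(H^n\cdot S)_T\geq -Z(\omega^T)=-Z(\omega)$ together with Fatou's lemma (applied to $\lambda+(H^n\cdot S)_T+Z$, which is nonnegative) gives $\lambda+E_Q[\liminf_n(H^n\cdot S)_T]\geq E_Q[X]$ provided $E_Q[Z]<\infty$; since $Q$ has compact support and $Z$ is continuous, $Z$ is bounded on $K$, so this holds. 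Taking the supremum over $Q$ and the infimum over $\lambda$ yields $\pi(X)\geq p(X)$.

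The substance is the reverse inequality $\pi(X)\leq p(X)$. I would prove it first for the special case $X\in C_{\delta\sigma}$ bounded (so one can take $Z\equiv 0$, or absorb $Z$ by noting $X+Z\geq 0$), and then bootstrap. For bounded $X$ the strategy is a minimax/Hahn--Banach argument on a $\sigma$-compact space: write $\Omega=\bigcup_k K_k$ with $K_k$ compact and increasing. On each $K_k$ one has a compact metric space; the set of (signed) measures that are ``martingale measures'' can be described by countably many linear constraints of the form $\int (S_t-S_s)\,g\,dQ=0$ for $g$ ranging over a countable family of bounded continuous $\mathcal{F}_s^0$-measurable functions (this is where assumption (A1), and specifically the $Q$-a.s. continuous approximation of $\mathcal{F}_t^0$-measurable functions, is used: it lets us replace general measurable test functions by continuous ones, so the martingale property becomes a closed condition in the weak-$*$ topology). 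Then a standard separation argument on $C(K_k)^*$ gives that $\sup_{Q\in\mathcal{M}(K_k)}E_Q[X] = \inf\{\lambda: \lambda+(H\cdot S)_T\geq X \text{ on } K_k, H\in\mathcal{H}^f\}$ — the key point being that the cone generated by $\{(H\cdot S)_T: H\in\mathcal{H}^f\}$ and the nonnegative functions is, after taking closures in an appropriate topology, exactly the set of continuous functions dominated by some $\liminf$ of simple integrals. One then needs to glue the $K_k$-solutions into a single sequence $(H^n)$ valid on all of $\Omega$ and to pass from superhedging on compacts to the $\liminf$-superhedging on $\Omega$: this is precisely where the $\liminf$ formulation (rather than pointwise limits of a single strategy) is essential, and where $X=\liminf_n X_n$ with $X_n$ continuous lets us approximate from below by functions that can be handled on each compact piece. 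Assumption (A2) enters to ensure that stopping a strategy keeps us inside $\Omega$ and that the running maximum / admissibility constraint $\lambda+(H^n\cdot S)_t\geq -Z(\omega^t)$ can be imposed path-by-path.

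Finally I would remove the boundedness and the $Z\equiv 0$ restrictions by truncation: approximate $X$ by $X\wedge m$ (bounded above) and handle the lower bound via $Z$, using monotone convergence on the martingale side ($\sup_Q E_Q[X\wedge m]\uparrow \sup_Q E_Q[X]$ by monotone convergence and an interchange-of-suprema argument, valid since $X+Z\geq 0$) and a diagonal extraction on the hedging side to produce a single sequence $(H^n)$ achieving the limiting price; the admissibility floor $-Z(\omega^t)$ survives the truncation because $Z$ is itself of the required monotone-in-$t$ form. For the two ``moreover'' variants: replacing $\mathcal{H}^f$ by $\mathcal{H}$ only enlarges the set of strategies, so it can only lower $\pi(X)$, while weak duality with $\mathcal{M}_c(\Omega)$ still gives $\pi_{\mathcal{H}}(X)\geq p(X)=\pi_{\mathcal{H}^f}(X)\geq \pi_{\mathcal{H}}(X)$, forcing equality; replacing $\mathcal{M}_c(\Omega)$ by the a priori larger $\mathcal{M}_Z(\Omega)$ requires showing $\sup_{\mathcal{M}_Z}E_Q[X]=\sup_{\mathcal{M}_c}E_Q[X]$, which should follow by approximating a general $Q\in\mathcal{M}_Z(\Omega)$ by its restrictions to large compact sets $K_k$ (renormalized and corrected to remain a martingale measure using (A2) — e.g. by stopping at the exit time of $K_k$), together with $E_Q[Z]<\infty$ controlling the tails via dominated convergence. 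The main obstacle, I expect, is the gluing/diagonalization step that turns a family of compact-set superhedges into one sequence of simple strategies on $\sigma$-compact $\Omega$ with the correct $\liminf$ and admissibility properties — i.e. showing the $\liminf$-closure of the simple-integral cone is large enough.
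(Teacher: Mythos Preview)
Your overall architecture matches the paper's: weak duality via Fatou, strong duality on each compact $K_n$ via convex duality, then glue. But two of the steps you call ``standard'' are exactly where the paper does real work, and your description skips the ideas that make them go through.

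\medskip

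\textbf{The separation on $K_n$ is not a standard Hahn--Banach argument.} You write that a standard separation on $C(K_n)^*$ gives $\sup_{Q\in\mathcal{M}(K_n)}E_Q[X]=\inf\{\lambda:\lambda+(H\cdot S)_T\geq X\text{ on }K_n\}$. The paper does compute the conjugate $\phi_n^*$ and shows it equals $0$ on $\mathcal{M}(K_n)$ and $+\infty$ elsewhere, but the hard half of that computation is: if $Q$ is a probability on $K_n$ that is \emph{not} a martingale measure, produce $X\in C_b$ and $H\in\mathcal{H}^f$ with $X\leq (H\cdot S)_T$ and $E_Q[X]>0$. This is not free, because $(H\cdot S)_T$ is in general \emph{not} continuous in $\omega$ (stopping times jump), so you cannot simply take $X=(H\cdot S)_T$ and you cannot directly invoke separation in $C(K_n)$. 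The paper's fix (Lemma~4.3 and Proposition~4.4) is to use asymmetric hitting times $\sigma_\varepsilon:=\inf\{r\geq s: S_r>m-\varepsilon\text{ or }S_r\leq \varepsilon-m\}\wedge T$, for which $\omega\mapsto S_{t\wedge\sigma_\varepsilon(\omega)}(\omega)$ is lower semicontinuous; then $(H\cdot S)_T$ with $H=h\1_{(s,\sigma_\varepsilon\wedge t]}$ is lower semicontinuous and can be approximated from below by continuous $X_n$. Your ``martingale constraints as closed conditions via continuous test functions'' is the right intuition for why (A1) enters, but it does not by itself give you a continuous separating functional dominated by a simple integral.

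\medskip

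\textbf{The gluing step needs a concrete stopping construction plus a lemma you did not state.} The paper first shows (Lemma~4.6, using (A2)) that if $m+(H\cdot S)_T\geq -Z$ on $K_n$ then automatically $m+(H\cdot S)_t(\omega)\geq -Z(\omega^t)$ for all $(t,\omega)\in[0,T]\times K_n$; this is how the running admissibility bound is obtained from the terminal one. Then, given $H^n$ that works on $K_n$, one stops it at $\sigma_n:=\inf\{t: m+\varepsilon+(H^n\cdot S)_t+Z(\omega^t)=0\}\wedge T$ to get $\tilde H^n\in\mathcal{H}^f$ satisfying the admissibility bound on all of $\Omega$. For each fixed $\omega$, eventually $\omega\in K_n$ so $\sigma_n(\omega)=T$ and $\tilde H^n$ coincides with $H^n$; hence $\liminf_n(m+\varepsilon+(\tilde H^n\cdot S)_T)\geq X$. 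Your proposal correctly flags this as the main obstacle but does not supply either ingredient; ``diagonal extraction'' is not what is used.

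\medskip

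\textbf{The $\mathcal{M}_Z(\Omega)$ extension.} You propose to prove $\sup_{\mathcal{M}_Z}E_Q[X]=\sup_{\mathcal{M}_c}E_Q[X]$ directly by stopping $Q$ at the exit of $K_k$. The paper instead gets this equality for free from a sandwich: one checks directly (truncating $H\in\mathcal{H}$ to $H^{n,K}\in\mathcal{H}^f$ and applying Fatou twice) that $\pi_{\mathcal{H}}(X)\geq\sup_{\mathcal{M}_Z}E_Q[X]$, and then
\[
\pi_{\mathcal{H}^f}(X)\geq \pi_{\mathcal{H}}(X)\geq \sup_{\mathcal{M}_Z}E_Q[X]\geq \sup_{\mathcal{M}_c}E_Q[X]=\pi_{\mathcal{H}^f}(X)
\]
forces all four to agree. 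Your direct route would additionally need $E_{Q_k}[X]\to E_Q[X]$ for the stopped measures, which for general $X\in C_{\delta\sigma}$ is not obvious; the sandwich avoids this entirely.
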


\begin{remark}
  (i) By continuity of $Z$ one has $\mathcal{M}_c(\Omega)\subset\mathcal{M}_Z(\Omega)$. In particular, if $X(\omega)\geq- Z(\omega)$ for all $\omega\in\Omega$, the expectation $E_Q[X]$ is well-defined under every $Q\in\mathcal{M}_Z(\Omega)$. 

  (ii) Note that $Z(\omega):=\max_{t\in[0,T]} |\omega(t)|^p$ for $p\geq 0$ satisfies $Z(\omega^s)\leq Z(\omega^t)$ for every $\omega\in\Omega$ and $0\leq s\leq t\leq T$.
  
  (iii) If $Z\geq \|\cdot\|_\infty$, then $E_Q[\max_{t\in[0,T]} |S_t|]<+\infty$ for every Borel probability measure $Q$ which integrates $Z$. Hence, the set of all local martingale measures which integrate $Z$ coincides with $\mathcal{M}_Z(\Omega)$. 
\end{remark}

In particular, for $Z=0$ the previous theorem reads as follows.

\begin{corollary}\label{cor:main.integrals}
  Suppose that (A1) and (A2) hold. Then, for every $X\in C_{\delta\sigma}$ with $X\ge 0$ one has
  \[
    \inf\left\{ \lambda\in\mathbb{R} \,:\, 
    \begin{array}{l}
    \text{there is a sequence $(H^n)$ in $\mathcal{H}$ such that}\\
    \text{$\lambda+(H^n\cdot S)_t(\omega)\geq 0 \text{ for all } (t,\omega)\in[0,T]\times\Omega$}\\
    \text{and } \lambda+ \liminf_n (H^n\cdot S)_T(\omega) \geq X(\omega) \,\text{ for all } \omega\in\Omega
    \end{array}
    \right\}
    =\sup_{Q\in\mathcal{M}(\Omega)} E_Q[X].
  \]
\end{corollary}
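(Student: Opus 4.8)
The plan is to derive the statement from Theorem~\ref{thm:main.integrals} by taking $Z\equiv 0$, so that almost nothing remains to be done. This $Z$ is continuous on $\Omega$ and trivially satisfies $Z(\omega^s)=0\le 0=Z(\omega^t)$ for all $\omega\in\Omega$ and $0\le s\le t\le T$; the hypothesis $X\ge 0$ is exactly $X\ge -Z$; the pathwise admissibility constraint $\lambda+(H^n\cdot S)_t(\omega)\ge -Z(\omega^t)$ appearing in \eqref{eq:duality} becomes $\lambda+(H^n\cdot S)_t(\omega)\ge 0$; and $\mathcal{M}_Z(\Omega)=\{Q\in\mathcal{M}(\Omega):E_Q[Z]<+\infty\}=\mathcal{M}(\Omega)$ because $E_Q[Z]=0$. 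The ``moreover'' part of Theorem~\ref{thm:main.integrals}, with $\mathcal{H}^f$ replaced by $\mathcal{H}$ and $\mathcal{M}_c(\Omega)$ replaced by $\mathcal{M}_Z(\Omega)=\mathcal{M}(\Omega)$, then reads verbatim as the asserted equality. Should one prefer to apply that clause only one substitution at a time, it still gives that the $\mathcal{H}$-superhedging price on the left equals $\sup_{Q\in\mathcal{M}_c(\Omega)}E_Q[X]$, and $\sup_{Q\in\mathcal{M}_c(\Omega)}E_Q[X]=\sup_{Q\in\mathcal{M}(\Omega)}E_Q[X]$, since $\mathcal{M}_c(\Omega)\subseteq\mathcal{M}(\Omega)$ while conversely every $Q\in\mathcal{M}(\Omega)$ satisfies $E_Q[X]\le\sup_{Q'\in\mathcal{M}_c(\Omega)}E_{Q'}[X]$ by the elementary weak-duality estimate in the next paragraph.

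Were Theorem~\ref{thm:main.integrals} not at hand, I would prove the two inequalities directly, and the ``$\ge$'' direction is the easy one. Fix an admissible sequence $(H^n)$ in $\mathcal{H}$ with $\lambda+(H^n\cdot S)_t\ge 0$ everywhere and $\lambda+\liminf_n(H^n\cdot S)_T\ge X$, and fix $Q\in\mathcal{M}(\Omega)$. Since $S$ is a $Q$-martingale and each $H^n\in\mathcal{H}$ is simple, $\lambda+(H^n\cdot S)$ is a nonnegative $Q$-local martingale, hence a $Q$-supermartingale with $E_Q[(H^n\cdot S)_T]\le 0$. Fatou's lemma, applicable because $\lambda+(H^n\cdot S)_T\ge 0$, yields $E_Q[X]\le E_Q[\liminf_n(\lambda+(H^n\cdot S)_T)]\le\liminf_n\bigl(\lambda+E_Q[(H^n\cdot S)_T]\bigr)\le\lambda$. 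Taking the infimum over admissible $\lambda$ and then the supremum over $Q$ gives $\sup_{Q\in\mathcal{M}(\Omega)}E_Q[X]\le\pi(X)$, where $\pi(X)$ denotes the superhedging price on the left-hand side of the asserted equality.

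For the reverse inequality $\pi(X)\le\sup_{Q\in\mathcal{M}(\Omega)}E_Q[X]$ my plan would be a three-step bootstrap. First, using that $\Omega$ is $\sigma$-compact, write $\Omega=\bigcup_k K_k$ with $K_k$ compact and increasing, and prove the duality on each $K_k$ for bounded continuous claims by a Hahn--Banach/minimax separation argument: on a compact space the cone of payoffs superhedgeable at price $\le\lambda$ is closed, and the separating functionals turn out to be martingale measures supported on $K_k$, all of which lie in $\mathcal{M}_c(\Omega)$, with (A1) and (A2) entering to make simple strategies rich enough. Second, pass from bounded continuous claims to claims $X=\liminf_n X_n$ in $C_{\delta\sigma}$ by superhedging the $X_n$ at prices tending to the right value and letting the $\liminf$ built into the superhedging functional aggregate the approximating strategies into one admissible sequence. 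Third, patch the compact pieces together: choose simple strategies $H^{(k)}$ superhedging $X$ on $K_k$ at prices $\lambda_k\downarrow\sup_{Q\in\mathcal{M}_c(\Omega)}E_Q[X]$ and pass to the single sequence $(H^{(k)})_k$ on all of $\Omega$ (the second and third steps may well be carried out together by a diagonal argument over the two indices).

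The third step is where I expect the real difficulty to sit: a simple strategy adapted to $K_k$ need not control $X$ on $\Omega\setminus K_k$, and one cannot merely ``liquidate upon leaving $K_k$'' because membership of a path in a compact subset of $C([0,T],\mathbb{R}^d)$ is not an $(\mathcal{F}_t)$-adapted event. The way around it is that no individual $H^{(k)}$ must superhedge everywhere -- only the \emph{pointwise} $\liminf_k(H^{(k)}\cdot S)_T$ has to dominate $X$ on $\Omega$ -- combined with careful bookkeeping of the admissibility bounds so that $\lambda+(H^{(k)}\cdot S)_t\ge 0$ holds for every $k$ and not just on $K_k$. This is precisely why the duality is formulated with a $\liminf$ of integrals against a sequence of simple strategies rather than with a single simple strategy; the counterexample of Remark~\ref{rem:counterexample} shows the latter formulation would be genuinely too weak.
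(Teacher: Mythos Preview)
Your first paragraph is correct and is exactly the paper's approach: the corollary is stated immediately after the sentence ``In particular, for $Z=0$ the previous theorem reads as follows,'' and no separate proof is given. The remaining paragraphs sketch an independent argument that is not needed here, though your outline of the weak-duality direction and of the compact-exhaustion strategy mirrors Steps~(a)--(c) in the paper's proof of Theorem~\ref{thm:main.integrals}.
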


The arguments in the proof of Theorem~\ref{thm:main.integrals} in combination with a regularity result for martingale measures on $C([0,T],\mathbb{R}^d)$ (see Lemma~\ref{lem:compact.supported.mm.are.dense} below) yields the following pricing-hedging duality on the entire space $C([0,T],\mathbb{R})$.

\begin{corollary}\label{cor:omega.is.whole.space}
  Let $\Omega=C([0,T],\mathbb{R}^d)$. Then 
  \begin{equation*}
    \inf\left\{ \lambda\in\mathbb{R} \,:\, 
    \begin{array}{l}
    \text{for every $K\subset \Omega$ compact there is $H\in\mathcal{H}^f$ and $c\geq0$}\\ 
    \text{such that $\lambda+(H\cdot S)_T(\omega)\geq -c$ for all $\omega\in \Omega$ and }\\
    \text{$\lambda+(H\cdot S)_T(\omega)\geq X(\omega)$ for all $\omega\in K$}
    \end{array}
    \right\}
    =\sup_{Q\in\mathcal{M}(\Omega)} E_Q[X]
  \end{equation*}
  for every bounded upper semicontinuous function $X\colon\Omega\to\mathbb{R}$.
\end{corollary}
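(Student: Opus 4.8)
Denote by $\Omega=C([0,T],\mathbb{R}^d)$ the whole path space, write $p$ for the left-hand side, and set $p_c:=\sup_{Q\in\mathcal{M}_c(\Omega)}E_Q[X]$. The plan is to establish the chain $p=p_c=\sup_{Q\in\mathcal{M}(\Omega)}E_Q[X]$. The inequality $p_c\le p$ is the easy direction: fix $\lambda$ in the set appearing on the left-hand side and $Q\in\mathcal{M}_c(\Omega)$ with $Q(K)=1$ for some compact $K\subset\Omega$, and choose $H\in\mathcal{H}^f$ and $c\ge0$ as in the definition of $p$ for this $K$. Since $S$ is a $Q$-martingale, the $h_n$ are bounded and $\mathcal{F}_{\tau_n}$-measurable, and $S_{\tau_n}\in L^1(Q)$ by optional stopping, the integral $(H\cdot S)$ is a genuine $Q$-martingale, so $E_Q[(H\cdot S)_T]=0$; as $\lambda+(H\cdot S)_T\ge X$ on the $Q$-full set $K$, integration gives $E_Q[X]\le\lambda$, and taking suprema over $Q$ and infima over $\lambda$ yields $p_c\le p$.

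For $p\le p_c$ I construct admissible hedges. Fix a compact $K\subset\Omega$ and $\varepsilon>0$ and put $\hat K:=\{\omega^t:\omega\in K,\ t\in[0,T]\}$. Because the map $(t,\omega)\mapsto\omega^t$ is continuous on $[0,T]\times\Omega$, the set $\hat K$ is the continuous image of the compact set $[0,T]\times K$, hence compact, and it is obviously stable under stopping, so (A1) and (A2) hold on $\hat K$ (for (A1) recall Remark~\ref{rem:approx}). The restriction $X|_{\hat K}$ is bounded and upper semicontinuous, hence lies in $C_{\delta\sigma}(\hat K)$ and satisfies $X\ge-Z$ on $\hat K$ for the constant map $Z\equiv\|X\|_\infty$, which trivially fulfils $Z(\omega^s)\le Z(\omega^t)$. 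Running the arguments in the proof of Theorem~\ref{thm:main.integrals} on the compact, stopping-stable space $\hat K$ produces, for the price $\lambda_{\hat K}:=\sup_{Q\in\mathcal{M}(\hat K)}E_Q[X]+\varepsilon$, a \emph{single} strategy $H\in\mathcal{H}^f(\hat K)$ with $\lambda_{\hat K}+(H\cdot S)_t\ge-\|X\|_\infty$ on $[0,T]\times\hat K$ and $\lambda_{\hat K}+(H\cdot S)_T\ge X$ on $\hat K$; the passage to sequences and limits inferior in Theorem~\ref{thm:main.integrals} is needed only to glue together the hedges obtained on the countably many compact pieces of a non-compact $\sigma$-compact $\Omega$, and is superfluous on the single compact piece $\hat K$. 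Since $\hat K$ is compact, every $Q\in\mathcal{M}(\hat K)$ pushes forward along the inclusion $\hat K\hookrightarrow\Omega$ to an element of $\mathcal{M}_c(\Omega)$ with the same $X$-expectation, whence $\lambda_{\hat K}\le p_c+\varepsilon$.

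It remains to transfer $H$ from $\hat K$ to the whole space. Write $H=\sum_{n=1}^N h_n\1_{(\tau_n,\tau_{n+1}]}$; the integrands $h_n$ extend from the closed subspace $\hat K$ to bounded $\mathcal{F}_{\tau_n}$-measurable functions on $\Omega$, and the $\tau_n$ extend to $(\mathcal{F}_t)$-stopping times on $\Omega$, so $H$ defines an element of $\mathcal{H}^f(\Omega)$ agreeing with $H$ on $\hat K$. Let $R:=\max_{\omega\in\hat K}\|\omega\|_\infty$ and $\rho:=\inf\{t:|S_t|\ge R\}\wedge T$, which equals $T$ identically on $\hat K$, and let $\tilde H\in\mathcal{H}^f(\Omega)$ be the strategy that follows $H$ up to time $\rho$ and vanishes afterwards. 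Then $\tilde H$ coincides with $H$ on $\hat K$, while $|(\tilde H\cdot S)_t|\le 2R\sum_{n=1}^N\|h_n\|_\infty=:C$ on all of $\Omega$; hence $\lambda_{\hat K}+(\tilde H\cdot S)_t\ge\lambda_{\hat K}-C$ on $[0,T]\times\Omega$ and $\lambda_{\hat K}+(\tilde H\cdot S)_T\ge X$ on $K\subset\hat K$. Since $K$ and $\varepsilon$ were arbitrary and $\lambda_{\hat K}\le p_c+\varepsilon$, the number $p_c+\varepsilon$ belongs to the set on the left-hand side, so $p\le p_c$. Finally, Lemma~\ref{lem:compact.supported.mm.are.dense} — the announced regularity result on the density of compactly supported martingale measures in $\mathcal{M}(\Omega)$ — together with approximation of the bounded upper semicontinuous $X$ from above by bounded continuous functions, gives $p_c=\sup_{Q\in\mathcal{M}(\Omega)}E_Q[X]$, which completes the proof.

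The main obstacle is the assertion in the second paragraph that on the compact, stopping-stable space $\hat K$ the hedge can be taken to be a single element of $\mathcal{H}^f$ rather than a sequence along which one passes to a limit inferior; this forces one to look inside the proof of Theorem~\ref{thm:main.integrals} and isolate its ``compact kernel''. A secondary, purely technical point is the extension of the simple strategy from $\hat K$ to $\Omega$ together with the stop-loss localization at $\rho$, which must be arranged so that the result genuinely lies in $\mathcal{H}^f(\Omega)$, is uniformly bounded from below, and is unchanged on $K$.
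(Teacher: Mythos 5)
Your overall strategy is the right one and is in essence the paper's: reduce to the ``compact kernel'' of the proof of Theorem~\ref{thm:main.integrals}, namely the single-strategy duality \eqref{rep:UZ} for bounded upper semicontinuous claims on a compact set that is stable under stopping, and then pass from $\mathcal{M}_c(\Omega)$ to $\mathcal{M}(\Omega)$ via Lemma~\ref{lem:compact.supported.mm.are.dense} (which, incidentally, is stated for bounded Borel $X$, so your final approximation of $X$ by continuous functions is unnecessary). The weak duality direction via the martingale property of $(H\cdot S)$ under each $Q\in\mathcal{M}_c(\Omega)$ is also as in the paper.

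Where you deviate is in the execution of the hard direction, and this is where two points need repair. First, the paper never needs to transfer a strategy from the compact set to the whole space: it defines, for each compact $K$ with stopped hull $\tilde K$, the functional $\phi_{\tilde K}$ directly with strategies $H\in\mathcal{H}^f(\Omega)$, a uniform lower bound on all of $\Omega$ and superhedging only on $\tilde K$, and runs the argument leading to \eqref{rep:UZ} for this functional; the duality then reads $\phi_{\tilde K}(X)=\max_{Q\in\mathcal{M}(\tilde K)}E_Q[X]$ and no extension or localization is required. Your route instead runs the duality intrinsically on $\hat K$ and then asserts that the stopping times $\tau_n$ and the bounded $\mathcal{F}_{\tau_n}$-measurable integrands $h_n$ extend from the closed subspace $\hat K$ to $\Omega$ compatibly with the right-continuous canonical filtration. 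This is true but not immediate (one has to check that $\mathcal{F}_t^{\hat K}=\bigcap_{s>t}\mathcal{F}_s^{0,\hat K}$ is the trace of $\mathcal{F}_t$ and then extend the stopping times and the $\mathcal{F}_{\tau_n}$-measurable maps consistently), and it is an avoidable detour; as written it is a genuine unproved step. Second, your stop-loss localization has a slip: with $R:=\max_{\omega\in\hat K}\|\omega\|_\infty$ and $\rho:=\inf\{t:|S_t|\geq R\}\wedge T$, it is \emph{not} true that $\rho=T$ on $\hat K$ --- any $\omega\in\hat K$ attaining $|\omega(t_0)|=R$ at some $t_0<T$ (such $\omega$ exists since the maximum over the compact $\hat K$ is attained) has $\rho(\omega)\le t_0<T$, and then $\tilde H$ stops trading early on that path, so the superhedging inequality on $K$ is no longer justified there. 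The fix is trivial (take the threshold $R+1$, or any radius strictly larger than $R$, so that $\rho\equiv T$ on $\hat K$), but it must be made. With these two points repaired, your argument is a correct proof; the paper's formulation of $\phi_{\tilde K}$ on the ambient space simply buys you freedom from both issues at once, at the cost of having to note that the step-(a) arguments (in particular Proposition~\ref{lem:polar}, whose use of (A1) reduces on the Polish space $C([0,T],\mathbb{R}^d)$ to the Lusin-type approximation property) go through with $\Omega=C([0,T],\mathbb{R}^d)$ as ambient space.
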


\begin{remark}
  Let $\Omega \subset C([0,T],\mathbb{R})$ be a $\sigma$-compact set and let $(\pi_n)_{n\in \mathbb{N}}$ be a refining sequence of partitions of $[0,T]$ with mesh converging to zero. The pathwise quadratic variation of a path $\omega \in \Omega$ is defined by
  \begin{equation}\label{eq:quadratic varition}
    \langle\omega\rangle_t:=\liminf_n\langle\omega\rangle_t^n
    \quad \text{where}\quad
    \langle \omega\rangle^n_t:=\sum_{[u,v]\in\pi_n} (\omega(u\wedge t)-\omega(v\wedge t))^2,
  \end{equation}
  for $t\in [0,T]$.
  Then, for every continuous function $\xi\colon  \Omega \times \mathbb{R} \to\mathbb{R}$ which is bounded from below, one has
  \begin{equation*}
    X(\omega) :=\liminf_n \xi(\omega,\langle\omega\rangle^n_T)
    \in C_{\delta\sigma}.
  \end{equation*}
  Hence, the pathwise pricing-hedging duality in Theorem~\ref{thm:main.integrals} holds for this claim. This shows that the class $C_{\delta\sigma}$ includes, in  particular, the financial derivatives in the scope of~\cite{Beiglbock2017}, i.e. options on the realized variance among many others. 
\end{remark}

\begin{remark}\label{rem:counterexample}
  While the pathwise pricing-hedging duality results in \cite{Dolinsky2014,Hou2015} hold for sufficiently regular claims when trading is limited to simple predictable processes (i.e. without the ``$\liminf$'' as in our definition), the following example shows the necessity of ``$\liminf$'' for claims in $C_{\delta\sigma}$. Let $\Omega$ be the set of all H\"older continuous functions starting at zero with values in $[0,1]$. There exists a refining deterministic sequence $(\pi_n)_{n\in \mathbb{N}}$ of partitions with mesh size going to zero and a function $\tilde\omega\in \Omega$ such that
  \begin{itemize}
    \item $0\leq \tilde\omega(t) \leq 1$ for all $t\in [0,T]$, 
    \item $\langle \tilde\omega \rangle_t:=\lim_n \langle \tilde\omega \rangle^n_t$ exists for all $t\in [0,T)$ and $\lim_{t\to T}\langle \tilde\omega \rangle_t=+\infty$, 
  \end{itemize}
  where $\langle \tilde\omega \rangle^n_t$ is defined as in~\eqref{eq:quadratic varition}. For the existence of such a function~$\tilde\omega$ we refer to Lemma~\ref{lem:example function}. We fix now the above sequence $(\pi_n)_{n\in \mathbb{N}}$ and denote by $\langle \omega \rangle_t$ the corresponding quadratic variation along $(\pi_n)_{n\in \mathbb{N}}$ defined as in~\eqref{eq:quadratic varition} for all $\omega \in \Omega$. Furthermore, let us consider the option $X (\cdot):=\langle\, \cdot\,\rangle_T\in C_{\delta\sigma}$.

  Firstly, we get by It\^o's formula and Fatou's lemma that 
  \begin{equation*}
    \sup_{Q\in\mathcal{M}(\Omega)} E_Q[X]\leq 1.
  \end{equation*}
  Secondly, we observe that
  \begin{equation}\label{eq:infinity superhedging}
    \inf\{ \lambda\geq 0 : \text{there is }H\in\mathcal{H}^f \text{ such that } \lambda + (H\cdot S)_T(\omega)\geq X(\omega)\text{ for all  }\omega\in\Omega\}
    =+\infty. 
  \end{equation}
  Indeed, assume that there exists (even more general) a predicable process~$H$ of bounded variation and a constant $\lambda_0>0$ such that
  \begin{equation}\label{eq:super-hedging example}
    \lambda_0 + (H\cdot S)_T(\omega)\geq X(\omega)\quad\text{for all}\quad \omega\in \Omega
  \end{equation}
  where $(H\cdot S)_T(\omega)$ denotes the classical Riemann-Stieltjes integral defined using the integration by parts formula. For $\tilde \omega$ we get 
  \begin{equation*}
    (H\cdot S)_T (\tilde \omega)\leq \|\tilde \omega\|_{\infty} \|H(\tilde \omega)\|_{1\textup{-var};[0,T]}\leq \|H(\tilde \omega)\|_{1\textup{-var};[0,T]}<+\infty
  \end{equation*}
  where $\|H(\tilde \omega)\|_{1\textup{-var};[0,T]}$ denotes the bounded variation semi-norm of $H$, which is conflict to~\eqref{eq:super-hedging example}, i.e., that implies~\eqref{eq:infinity superhedging}.  

  Hence, there exists a duality gap if the superhedging is restricted to trading strategies of bounded variation as in \cite{Dolinsky2014,Hou2015} but the pricing-hedging duality using the limit inferior of simple predictable processes holds true since $\Omega$ and $X$ satisfy all assumptions of Theorem~\ref{thm:main.integrals}, see Section~\ref{sec:discussion} below.
\end{remark}

If $\Omega$ does not contain all its stopped paths, then the following version of Theorem~\ref{thm:main.integrals} holds true.

\begin{theorem}\label{thm:main.integrals.Z}
  Let $Z\colon\Omega\to[1,+\infty)$ be a function with compact sublevel sets $\{Z\leq c\}$ for all $c\in\mathbb{R}$ such that $Z(\omega)\geq \|\omega\|_\infty$ for all $\omega\in\Omega$. If (A1) holds true and $\mathcal{M}_Z(\Omega)\neq\emptyset$, then
  \begin{equation*}
    \inf\left\{ \lambda\in\mathbb{R} \,:\, 
    \begin{array}{l}
    \text{there is $c\ge 0$ and a sequence $(H^n)$ in $\mathcal{H}^f$ such that}\\ 
    \text{$(H^n\cdot S)_T(\omega)\geq -cZ(\omega)$ for all $\omega\in\Omega$ and} \\
    \lambda+ \liminf_n (H^n\cdot S)_T(\omega) \geq X(\omega) \,\text{ for all } \omega\in\Omega
    \end{array}
    \right\}
    =\sup_{Q\in\mathcal{M}_Z(\Omega)} E_Q[X]
  \end{equation*}
  for every $X\in C_{\delta\sigma}$ which is bounded from below.
\end{theorem}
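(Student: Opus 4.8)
The inequality ``$\ge$'' (supremum $\le$ superhedging price) is routine. Given admissible $\lambda\in\R$, $c\ge 0$ and $(H^n)\subset\mathcal H^f$ with $(H^n\cdot S)_T\ge -cZ$ and $\lambda+\liminf_n(H^n\cdot S)_T\ge X$ on $\Omega$, fix any $Q\in\mathcal M_Z(\Omega)$. Since $Z\ge\|\cdot\|_\infty$ and $E_Q[Z]<\infty$, the canonical process is an integrable $Q$-martingale, and for a simple predictable $H^n$ the pathwise integral $(H^n\cdot S)_T$ is integrable and has vanishing $Q$-expectation by optional stopping at the bounded stopping times. Because $(H^n\cdot S)_T\ge -cZ$ with $cZ\in L^1(Q)$, Fatou's lemma gives $E_Q[\liminf_n(H^n\cdot S)_T]\le 0$, whence $E_Q[X]\le\lambda$ (the expectation is well defined in $(-\infty,+\infty]$ because $X$ is bounded below). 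Taking the supremum over $Q$ and the infimum over admissible $\lambda$ yields ``$\ge$''.

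For ``$\le$'' set $D:=\sup_{Q\in\mathcal M_Z(\Omega)}E_Q[X]$; we may assume $D<\infty$, and, after adding a constant, that $X\ge 0$ and that a sequence $(X_n)$ of bounded continuous functions with $X=\liminf_n X_n$ also has $X_n\ge 0$. Put $Y_m:=\inf_{n\ge m}X_n$, so that $0\le Y_m\le X_m$ is bounded and upper semicontinuous, $Y_m\uparrow X$, and, by monotone convergence under each $Q$ together with an interchange of suprema, $D=\sup_m\sup_{Q\in\mathcal M_Z(\Omega)}E_Q[Y_m]$. The plan is to prove the duality for each bounded upper semicontinuous claim separately and then to pass to the limit $m\to\infty$.

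For a fixed bounded upper semicontinuous $Y\ge 0$ one shows that whenever $\lambda>\sup_{Q\in\mathcal M_Z(\Omega)}E_Q[Y]$ the claim $Y$ admits a superhedge of the form in the theorem at cost $\lambda$; this is the heart of the matter and runs as in the proof of Theorem~\ref{thm:main.integrals}. One separates, by Hahn--Banach on a suitable $Z$-weighted space of functions over the $\sigma$-compact set $\Omega=\bigcup_j\{Z\le j\}$, the convex cone of claims dominated by a terminal gain $\liminf_n(H^n\cdot S)_T$ with $(H^n\cdot S)_T\ge -cZ$ from the function $Y-\lambda$; the resulting positive functional $\ell$ (a priori only finitely additive) satisfies $\ell(Y)\ge\lambda$, vanishes on every $(H\cdot S)_T$, $H\in\mathcal H^f$, and---since the cone contains the multiples of $-Z$---has $\ell(Z)<\infty$. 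Compactness of the sublevel sets $\{Z\le c\}$ then makes $\ell$ tight, hence countably additive; vanishing on all $(H\cdot S)_T$ makes the corresponding measure a martingale measure; so $\ell$ is represented by some $Q\in\mathcal M_Z(\Omega)$ with $E_Q[Y]\ge\lambda$, a contradiction. Assumption (A1) is used to replace measurable integrands by continuous ones in this separation, while (A2)---the only place it occurs in the proof of Theorem~\ref{thm:main.integrals}, namely in the passage to compactly supported martingale measures via stopping---is here rendered unnecessary by the \emph{weaker}, terminal-time-only admissibility bound $-cZ$. One should moreover check that for bounded $Y$ these hedges can be chosen with $(H^n\cdot S)_T\ge -c$ for a \emph{constant} $c$ comparable to $\lambda$, which is needed for the uniform admissibility below. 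Applying this with $Y=Y_m$ gives, for each $\varepsilon>0$, superhedges of $Y_m$ with cost $\le D+\varepsilon$ and admissibility constants $\le D+\varepsilon$.

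The remaining step---passing from superhedges of the $Y_m$ to a single superhedge of $X=\sup_m Y_m$ at cost $\le D$---is the main obstacle, and it is exactly where the ``$\liminf$'' in the definition of the superhedging price is indispensable (cf.\ Remark~\ref{rem:counterexample}). Fix $\varepsilon>0$ and for each $m$ a superhedge $(\lambda_m,(H^{j,m})_j)$ of $Y_m$ as above with $\lambda_m\le D+\varepsilon$, $(H^{j,m}\cdot S)_T\ge -(D+\varepsilon)$ and $(H^{j,m}\cdot S)_T$ continuous. The functions $\Phi^m_A:=\inf_{j>A}(H^{j,m}\cdot S)_T$ are upper semicontinuous and increase, as $A\to\infty$, to $\liminf_j(H^{j,m}\cdot S)_T\ge Y_m-\lambda_m$; a finite-intersection argument on the compact set $\{Z\le l\}$ provides, for every $l$, an index $A(m,l)$ with $\Phi^m_{A(m,l)}\ge Y_m-\lambda_m-\varepsilon$ on $\{Z\le l\}$. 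Choosing $l(m)\uparrow\infty$ and concatenating, for $m=1,2,\dots$, finite windows of the strategies $H^{j,m}$ with $A(m,l(m))<j\le B_m$ into one sequence $(G^k)\subset\mathcal H^f$, one checks that $(G^k\cdot S)_T\ge -(D+2\varepsilon)Z$, so $c=D+2\varepsilon$ is admissible, while for every fixed $\omega$ all but finitely many windows produce value $\ge Y_m(\omega)-\lambda_m-\varepsilon$, whence $\liminf_k(G^k\cdot S)_T(\omega)\ge\liminf_m\big(Y_m(\omega)-\lambda_m-\varepsilon\big)=X(\omega)-D-2\varepsilon$. Thus $\lambda=D+2\varepsilon$ is admissible for $X$; letting $\varepsilon\downarrow 0$ gives the superhedging price $\le D$. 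The genuinely delicate points are the tightness and countable additivity of the dual functional (where the compact-sublevel-set hypothesis on $Z$ and $\mathcal M_Z(\Omega)\ne\emptyset$ are used) and this concatenation, i.e.\ the stability of the superhedging price under the increasing limit $Y_m\uparrow X$ inside $C_{\delta\sigma}$.
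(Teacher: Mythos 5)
Your ``$\ge$'' direction is fine and coincides with the paper's (Lemma~\ref{lem:weak} plus Fatou with the integrable minorant $cZ$), and the reduction $Y_m:=\inf_{n\ge m}X_n\uparrow X$ is also how one would start. The hard direction, however, has a genuine gap, located exactly at the two points you flag as delicate. The extraction/concatenation step is invalid as written: for the finite-intersection (Dini-type) argument on the compact set $\{Z\le l\}$ you need the sets $\{\Phi^m_A>Y_m-\lambda_m-\varepsilon\}$ to be open, i.e.\ $\Phi^m_A=\inf_{j>A}(H^{j,m}\cdot S)_T$ to be \emph{lower} semicontinuous; an infimum over an infinite tail of continuous functions is only upper semicontinuous, so compactness gives nothing. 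The principle you are using is simply false: on $[0,1]$ take continuous $f_j$ equal to $-1$ on $[2^{-(j+1)},2^{-j}]$ and $0$ off a slightly larger interval; then $f_j\ge -1$, $\liminf_j f_j\equiv 0$, yet for every $A$ one has $\inf_{j>A}f_j=-1$ somewhere on $[0,1]$, so no tail index achieves $\inf_{j>A}f_j\ge -\varepsilon$ on the compact set. Thus a pointwise liminf-superhedge of $Y_m$ does not yield uniform tail domination on compacts, and your windows $(G^k)$ cannot be built this way. In addition, the inputs to that step are unjustified: for $H\in\mathcal{H}^f$ the terminal value $(H\cdot S)_T$ is in general only Borel, and nothing in your Hahn--Banach sketch lets you choose dual hedges with continuous terminal values or with a constant admissibility bound; moreover the per-claim duality you invoke (``runs as in Theorem~\ref{thm:main.integrals}'') is itself not available in the unpenalized, liminf form you use --- the continuity from above needed for the dual representation is obtained in the paper only after relaxing the primal problem (hedging on a compact $K_n$ under (A2), or subtracting $Z/n$), and a liminf-duality for each $Y_m$ is essentially the theorem itself, so the route is close to circular and, as shown, does not glue.

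The paper's proof avoids the gluing problem altogether by putting the penalization into a \emph{single-strategy} primal problem: $\phi_n(X):=\inf\{\lambda:\exists H\in\mathcal{H}^f,\ c>0,\ (H\cdot S)_T\ge -c,\ \lambda+(H\cdot S)_T\ge X-Z/n \text{ on }\Omega\}$. Continuity from above of $\phi_n$ on $C_b$ follows from Dini's lemma on the compact sublevel set $\{Z\le \tilde c n\}$ precisely because the term $Z/n$ absorbs the claim outside it; then the representation result of \cite{cheridito2015representation} together with Proposition~\ref{lem:polar} (this is where (A1) enters) and the compactness of $\{\phi_n^\ast\le 2c\}$ give $\phi_n(X)=\max_{Q\in\mathcal{M}_Z(\Omega)}\bigl(E_Q[X]-E_Q[Z]/n\bigr)$ for bounded upper semicontinuous $X$. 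The passage to $X=\sup_nX_n\in C_{\delta\sigma}$ is then immediate: for $m>\sup_n\phi_n(X_n)$ pick one strategy $H^n$ per $n$ with $m+(H^n\cdot S)_T\ge X_n-Z/n$; automatically $(H^n\cdot S)_T\ge -cZ$ (since $Z\ge1$ and $X_1$ is bounded below) and $m+\liminf_n(H^n\cdot S)_T\ge\liminf_n(X_n-Z/n)=X$. The vanishing penalty $Z/n$ thus replaces your extraction of ``good finite windows'' and requires no regularity of terminal values; if you want to salvage your approach, you should prove the penalized single-strategy duality first rather than the liminf-duality for each $Y_m$.
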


\subsection{Relation to Vovk's outer measure}\label{subsec:vovk}

In recent years (see e.g.~\cite{Vovk2012,Vovk2016} and the references therein), Vovk introduced an outer measure on different path spaces, defined as the minimal superhedging price, which allows to quantify the path behavior of ``typical price paths'' in frictionless financial markets without any reference measure. \smallskip

In order to recall Vovk's outer measure on a set $\Omega \subset C([0,T],\R^d)$ endowed with the maximum norm, we write $\mathcal{H}_\lambda$ for the set of $\lambda$-admissible simple predicable strategies, i.e.~the set of all $H\in\mathcal{H}$ such that $(H\cdot S)_t(\omega) \ge - \lambda$ for all $(t,\omega)\in [0,T]\times\Omega$. Furthermore, we define the set of processes
\begin{align*}
  \mathcal{V}_\lambda := \left\{ \mathrm{H}:= \big(H^k\big)_{k\in \N}\,:\, H^k \in \mathcal{H}_{\lambda_k},\, \lambda_k > 0,\, \sum_{k=1}^{\infty} \lambda_k = \lambda\right\}
\end{align*}
for an initial capital $\lambda \in (0, +\infty)$. Note that for every $\mathrm{H} = \big(H^k\big) \in \mathcal{V}_{\lambda}$, all $\omega \in \Omega$, and all $t \in [0,T]$, the corresponding capital process
\begin{align*}
  (\mathrm{H}\cdot S)_t(\omega) := \sum_{k = 1}^{\infty} (H^k\cdot S)_t(\omega) = \sum_{k = 1}^{\infty} \big(\lambda_k + (H^k\cdot S)_t(\omega)\big) - \lambda
\end{align*}
is well-defined and takes values in $[-\lambda, +\infty]$. Then, Vovk's outer measure on $\Omega$ is given by 
\begin{align*}
  \overline{Q}_\Omega(A) := \inf\left\{\lambda > 0\,:\text{there is } \mathrm{H} \in \mathcal{V}_\lambda \text{ such that } 
  \lambda + (\mathrm{H}\cdot S)_T(\omega) \ge \1_A(\omega) \text{ for all } \omega \in \Omega\right\}.
\end{align*}
A slight modification of the outer measure~$\overline{Q}_\Omega$ was introduced in \citet{Perkowski2016}, which is defined as
\begin{align*}
  \overline{P}_\Omega(A):= \inf\left\{\lambda > 0:\begin{array}{l} \text{there is } (H^n) \text{ in } \mathcal{H}_\lambda \text{ such that} \\
  \lambda + \liminf_{n\rightarrow\infty}(H^n\cdot S)_T (\omega) \geq \1_A(\omega)\text{ for all } \omega \in \Omega
  \end{array} \right\}
\end{align*}
for $A \subseteq \Omega$. The latter definition seems to be more in the spirit of superhedging prices in semimartingale models as discussed in \cite[Sections~2.1 and 2.2]{Perkowski2016}. Notice that, even if it would be convenient to just minimize over simple strategies rather than over the limit (inferior) along sequences of simple strategies in both definitions of outer measures, this is essential to obtain the desired countable subadditivity of both outer measures.

\begin{remark} 
  In case that $\Omega=C([0,T],\R^d)$ one would expect that the outer measures~$\overline{Q}_\Omega$ and~$\overline{P}_\Omega$ coincide. However, currently it is only known that 
  \begin{equation}\label{eq:known estimate}
    \sup_{Q\in\mathcal{M}(\Omega)}Q(A) \leq \overline{P}_\Omega(A)\leq \overline{Q}_\Omega(A),
  \end{equation}
  where $A\subset C([0,T],\R^d)$ is a measurable set, see \cite[Lemma~6.2]{Vovk2012} and \cite[Lemma~2.9]{Perkowski2016}. In the special case of $\Omega=C([0,+\infty),\R)$ and a time-superinvariant set $A\subset C([0,+\infty),\R)$ the inequalities in \eqref{eq:known estimate} turn out to be true equalities. See \citet[Sections~2 and~3]{Vovk2012} and \citet[Section~2]{Beiglbock2017} for the precise definitions and statements in this context.
\end{remark}

By restricting the outer measure $\overline{P}_\Omega$ to a $\sigma$-compact space $\Omega$, we get the following duality result for the slightly modified version of Vovk's outer measure as a direct application of Theorem~\ref{thm:main.integrals}.

\begin{proposition}
  Under the assumptions on $\Omega$ of Theorem~\ref{thm:main.integrals}, one has
  \begin{equation*}
    \overline{P}_\Omega(A)=\sup_{Q\in\mathcal{M}(\Omega)} Q(A)
  \end{equation*}
  for all closed subsets $A\subset \Omega$.
\end{proposition}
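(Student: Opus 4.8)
The plan is to deduce the proposition from Theorem~\ref{thm:main.integrals} applied with the trivial lower bound $Z\equiv 0$, i.e.~from Corollary~\ref{cor:main.integrals}, by taking the claim $X=\1_A$. Since $A$ is a closed subset of $\Omega$, its indicator $\1_A$ is upper semicontinuous, hence $-\1_A$ is lower semicontinuous and by the remark following the definition of $C_{\delta\sigma}$ we have $\1_A\in C_{\delta\sigma}$; moreover $\1_A\ge 0$, so all hypotheses of Corollary~\ref{cor:main.integrals} are met (noting that (A1) and (A2) hold by the standing assumption that $\Omega$ satisfies the assumptions of Theorem~\ref{thm:main.integrals}). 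Corollary~\ref{cor:main.integrals} then gives
\[
  \inf\left\{\lambda\in\R:\begin{array}{l}\text{there is }(H^n)\text{ in }\mathcal{H}\text{ with }\lambda+(H^n\cdot S)_t\ge 0\text{ on }[0,T]\times\Omega\\ \text{and }\lambda+\liminf_n(H^n\cdot S)_T(\omega)\ge\1_A(\omega)\text{ for all }\omega\in\Omega\end{array}\right\}=\sup_{Q\in\mathcal{M}(\Omega)}E_Q[\1_A],
\]
and the right-hand side equals $\sup_{Q\in\mathcal{M}(\Omega)}Q(A)$.

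It remains to identify the left-hand side above with $\overline{P}_\Omega(A)$. Here the only discrepancy is cosmetic: in the definition of $\overline{P}_\Omega$ the infimum runs over $\lambda>0$ and over sequences $(H^n)$ in $\mathcal{H}_\lambda$ (those $H^n\in\mathcal{H}$ with $(H^n\cdot S)_t\ge-\lambda$ on $[0,T]\times\Omega$), whereas in Corollary~\ref{cor:main.integrals} the admissibility is written as $\lambda+(H^n\cdot S)_t\ge 0$, which is literally the same condition, and $\lambda$ ranges over all of $\R$. Since $\1_A\ge 0$, any admissible $\lambda$ must satisfy $\lambda\ge 0$ (evaluate the superhedging inequality at a point of $A$ after taking $\liminf$, using $(H^n\cdot S)_0=0$, or simply note $\sup_Q Q(A)\ge 0$ forces the infimum to be $\ge 0$); restricting further to $\lambda>0$ changes neither infimum because the constraint set is upward-closed in $\lambda$ and the infimum over $[0,\infty)$ equals that over $(0,\infty)$. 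Thus the two infima coincide and the proposition follows.

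I do not expect a genuine obstacle here, since this is a direct specialization; the one point to handle carefully is the reconciliation of the $\lambda>0$ versus $\lambda\in\R$ ranges and of the $\mathcal{H}$ versus $\mathcal{H}_\lambda$ notation, which I would dispatch with the monotonicity-in-$\lambda$ remark above. A minor additional check is that closedness of $A$ is exactly what is needed: for a general Borel $A$ the indicator need not lie in $C_{\delta\sigma}$, so the closedness hypothesis cannot be dropped without strengthening the duality, consistent with the inequality-only statement \eqref{eq:known estimate} recalled in the preceding remark.
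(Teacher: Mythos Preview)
Your proposal is correct and follows essentially the same approach as the paper: apply Corollary~\ref{cor:main.integrals} to $X=\1_A\in C_{\delta\sigma}$ (using upper semicontinuity of $\1_A$ for closed $A$) and identify the resulting infimum with $\overline{P}_\Omega(A)$. The paper's proof is terser and does not spell out the $\lambda>0$ versus $\lambda\in\R$ reconciliation, but your added remark on that point is a harmless (and arguably helpful) elaboration.
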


\begin{proof}
  For every closed subset $A\subset \Omega$, it follows from Corollary~\ref{cor:main.integrals} that
  \begin{align*}
    \overline{P}_\Omega(A)&=\inf\left\{ \lambda>0 \,:\, 
    \begin{array}{l}
    \text{there is a sequence $(H^n)$ in $\mathcal{H}$ such that}\\ \text{$\lambda+(H^n\cdot S)_t(\omega)\geq 0 \text{ for all } (t,\omega)\in[0,T]\times\Omega$ and }\\ \lambda+ \liminf_n (H^n\cdot S)_T(\omega) \geq \1_A(\omega) \,\text{ for all } \omega\in\Omega
    \end{array}
    \right\}\\
    &=\sup_{Q\in\mathcal{M}(\Omega)} E_Q[\1_A]=\sup_{Q\in\mathcal{M}(\Omega)} Q(A)
  \end{align*}
  because $\1_A$ is upper semicontinuous.
\end{proof}

\begin{remark}
  Recently, \citet{Vovk2016} obtained a similar duality for open sets by adjusting the definition of the outer measure~$\overline{P}_\Omega$. More precisely, his new definition of outer measure allows for superhedging with all processes in the ``liminf-closure'' of capital processes generated by sequences of $\lambda$-admissible simple strategies, see \cite[Section~2 and Theorem~2]{Vovk2016} for more details.
\end{remark}

\subsection{Semi-static superhedging}

Let us fix a continuous function $Z\colon\Omega\to[1,+\infty)$ such that $Z(\omega^s)\leq Z(\omega^t)$ for all $\omega\in\Omega$ and $0\leq s\leq t\leq T$, and consider a finite number of securities with (discounted) continuous payoffs $G_1,\dots, G_K$ such that $|G_i|\leq cZ$ for $i=1,\dots,K$ and some $c\ge 0$. We assume that these securities can be bought and sold at prices $g_k\in\mathbb{R}$, and satisfy the no-arbitrage condition
\begin{equation}\label{NA}
  (g_1,\dots,g_K)\in\mathop{\textup{ri}}\left\{(E_Q[G_1],\dots,E_Q[G_K]):Q\in\mathcal{M}_c(\Omega)\right\}
\end{equation}
where $\mathop{\textup{ri}}$ denotes the relative interior.

Then the following semi-static hedging duality holds.

\begin{proposition}
  Suppose that the assumptions (A1) and (A2) are satisfied, and the securities with payoffs $G_1,\dots,G_K$ satisfy the static no arbitrage condition~\eqref{NA}. Then, for every upper semicontinuous function $X\colon\Omega\to\mathbb{R}$
  which satisfies $|X|\leq cZ$ for some $c\ge 0$, one has
  \begin{align}\label{eq:sem-static}
    &\inf\left\{ \lambda\in\mathbb{R} \,:\, 
    \begin{array}{l}
    \text{there is $c\ge 0$, $\alpha\in\mathbb{R}^K$, and a sequence $(H^n)$ in $\mathcal{H}^f$ such that}\\ \text{$\lambda+(H^n\cdot S)_t(\omega)\geq -cZ(\omega^t) \text{ for all } (t,\omega)\in[0,T]\times\Omega$ and }\\ \lambda+ \sum_{k=1}^K \alpha_k (G_k(\omega)-g_k)+\liminf_n (H^n\cdot S)_T(\omega) \geq X(\omega) \,\text{ for all } \omega\in\Omega
    \end{array}\right\} \nonumber\\
    &=\sup_{Q\in\mathcal{M}_c^G(\Omega)} E_Q[X]
  \end{align}
  where $\mathcal{M}^G_c(\Omega):=\{Q\in\mathcal{M}_c(\Omega): E_Q[G_k] = g_k\mbox{ for all }k=1,\dots,K\}$.
\end{proposition}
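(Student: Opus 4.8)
The plan is to reduce the semi-static superhedging duality to the purely dynamic duality of Theorem~\ref{thm:main.integrals} by a standard Lagrangian/minimax argument, treating the static hedging positions $\alpha\in\mathbb{R}^K$ as extra decision variables on the primal side and as constraints defining the marginal-type set $\mathcal{M}_c^G(\Omega)$ on the dual side. Denote by $\pi(X)$ the left-hand side of \eqref{eq:sem-static}. First I would observe that the constant $c$ and the bound $|X|\le cZ$, $|G_k|\le cZ$ put us exactly in the regime where Theorem~\ref{thm:main.integrals} applies with this $Z$ (note $Z$ is continuous, nonnegative, and nondecreasing along stopped paths, so $\mathcal{M}_c(\Omega)\subset\mathcal{M}_Z(\Omega)$ and all relevant expectations are well defined). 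For fixed $\alpha\in\mathbb{R}^K$, applying Theorem~\ref{thm:main.integrals} to the claim $X-\sum_k\alpha_k(G_k-g_k)$ — which lies in $C_{\delta\sigma}$, being upper semicontinuous, and is bounded below by $-c'Z$ for a suitable $c'$ — gives
\[
  \pi(X)=\inf_{\alpha\in\mathbb{R}^K}\sup_{Q\in\mathcal{M}_c(\Omega)}E_Q\Big[X-\sum_{k=1}^K\alpha_k(G_k-g_k)\Big].
\]

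The second step is the minimax exchange. I would like to interchange $\inf_\alpha$ and $\sup_Q$ to obtain
\[
  \pi(X)=\sup_{Q\in\mathcal{M}_c(\Omega)}\inf_{\alpha\in\mathbb{R}^K}\Big(E_Q[X]-\sum_{k=1}^K\alpha_k(E_Q[G_k]-g_k)\Big),
\]
and then note that the inner infimum over $\alpha$ equals $E_Q[X]$ if $E_Q[G_k]=g_k$ for all $k$ and equals $-\infty$ otherwise; this collapses the right-hand side to $\sup_{Q\in\mathcal{M}_c^G(\Omega)}E_Q[X]$, which is the claimed dual value. For the interchange I would invoke a Sion-type minimax theorem: the map $(\alpha,Q)\mapsto E_Q[X]-\sum_k\alpha_k(E_Q[G_k]-g_k)$ is affine (hence concave and convex) in each variable separately, $\mathbb{R}^K$ is convex, and $\mathcal{M}_c(\Omega)$ is convex; the missing ingredient is a compactness/coercivity condition. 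Here the no-arbitrage condition~\eqref{NA} is exactly what is needed: because $(g_1,\dots,g_K)$ lies in the \emph{relative interior} of $\{(E_Q[G_1],\dots,E_Q[G_K]):Q\in\mathcal{M}_c(\Omega)\}$, for any $\alpha$ not orthogonal to the affine hull of this set one can find $Q$ making $\sum_k\alpha_k(E_Q[G_k]-g_k)<0$, so the function $\alpha\mapsto\sup_Q(\cdots)$ is coercive modulo the subspace of "redundant" directions; restricting $\alpha$ to the orthogonal complement of that subspace (which does not change $\pi(X)$, since redundant securities are replicable at their quoted price) one gets a genuinely coercive concave function of $\alpha$, so the $\inf$ is attained on a compact set and Sion's theorem applies. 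One should also record a soft inequality "$\ge$" directly — for any admissible $(\lambda,\alpha,(H^n))$ and any $Q\in\mathcal{M}_c^G(\Omega)$, taking $E_Q$ of the pathwise superhedging inequality, using that $\lambda+(H^n\cdot S)_\cdot$ is a nonnegative $Q$-supermartingale after adding $cZ(\omega^t)$ (so Fatou applies) and that $E_Q[G_k-g_k]=0$, yields $\lambda\ge E_Q[X]$ — to make the argument self-contained, with the minimax step only delivering the reverse inequality "$\le$".

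The main obstacle I anticipate is the coercivity bookkeeping around \eqref{NA}: the relative interior condition gives strict feasibility only within the affine hull of the attainable-price set, so one must carefully split $\mathbb{R}^K$ into the "effective" directions (where coercivity holds and Sion applies) and the "redundant" directions (along which $E_Q[G_k]-g_k$ is identically zero for all $Q\in\mathcal{M}_c(\Omega)$, so the corresponding $G_k-g_k$ can be dropped from the static portfolio without affecting either side of \eqref{eq:sem-static}). A secondary technical point is to verify that for fixed $\alpha$ the modified claim really satisfies the hypotheses of Theorem~\ref{thm:main.integrals} — upper semicontinuity is clear since $X$ and the $G_k$ are continuous (hence $X-\sum_k\alpha_k(G_k-g_k)$ is upper semicontinuous, so in $C_{\delta\sigma}$), and the lower bound $X-\sum_k\alpha_k(G_k-g_k)\ge -cZ-\sum_k|\alpha_k|(cZ+|g_k|)\ge -c_\alpha Z$ holds with $c_\alpha$ depending on $\alpha$; one then uses the version of Theorem~\ref{thm:main.integrals} with $\mathcal{M}_c(\Omega)$ (not $\mathcal{M}_Z$) and absorbs $c_\alpha$ into the admissibility constant $c$ appearing on the primal side of \eqref{eq:sem-static}. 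Finally, since the quantity $\sup_{Q\in\mathcal{M}_c(\Omega)}E_Q[\,\cdot\,]$ as a function of $\alpha$ may be $+\infty$ a priori, I would first argue it is finite for at least one $\alpha$ (e.g. using that $\pi(X)\le\sup_{Q\in\mathcal{M}_c^G(\Omega)}E_Q[X]+\epsilon<\infty$ follows once $\mathcal{M}_c^G(\Omega)\ne\emptyset$, which is itself a consequence of \eqref{NA}), so that the minimax value is finite and the theorem is non-vacuous.
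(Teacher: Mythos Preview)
Your proposal is correct and follows essentially the same approach as the paper: reduce the semi-static problem to the dynamic one by taking the infimum over $\alpha\in\mathbb{R}^K$, apply Theorem~\ref{thm:main.integrals} for each fixed $\alpha$ to the upper semicontinuous claim $X-\sum_k\alpha_k(G_k-g_k)$ (which is bounded below by a multiple of $Z$), and then interchange $\inf_\alpha$ and $\sup_{Q\in\mathcal{M}_c(\Omega)}$ via Sion's minimax theorem, using the relative-interior condition~\eqref{NA} to supply the required compactness/coercivity. The paper outsources the verification of the Sion hypotheses under~\eqref{NA} to \cite[proof of Theorem~2.1, step~(a)]{bartl2016exponential}, whereas you sketch the argument directly (splitting off redundant directions and using coercivity on the complement); this is the same underlying mechanism.
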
    

\begin{proof}
  For every $Y\colon\Omega\to\mathbb{R}$ which satisfies $|Y|\leq cZ$ for some $c\ge 0$ we define
  \[
    \phi(Y):=\inf\left\{ \lambda\in\mathbb{R} \,:\,\begin{array}{l}
    \text{there is $c\ge 0$ and a sequence $(H^n)$ in $\mathcal{H}^f$ such that}\\ \text{$\lambda+(H^n\cdot S)_t(\omega)\geq -cZ(\omega^t) \text{ for all } (t,\omega)\in[0,T]\times\Omega$ and }\\ \lambda+ \liminf_n (H^n\cdot S)_T(\omega) \geq Y(\omega) \,\text{ for all } \omega\in\Omega
    \end{array}\right\} 
  \]
  and we remark that, by interchanging two infima, the left hand side of~\eqref{eq:sem-static} can be expressed as $\inf_{\alpha\in\mathbb{R}^K} \phi\big(X-\sum_{k=1}^K \alpha_k (G_k-g_k)\big)$. Further, Theorem~\ref{thm:main.integrals} yields
  \[
    \phi\Big(X-\sum_{k=1}^K \alpha_k (G_k-g_k)\Big)
    =\sup_{Q\in\mathcal{M}_c(\Omega)} E_Q\Big[X-\sum_{k=1}^K \alpha_k (G_k-g_k)\Big]
  \]
  for every $\alpha\in\mathbb{R}^K$. Now define the function 
  \[
    J\colon \mathcal{M}_c(\Omega)\times\mathbb{R}^K\to\mathbb{R},\quad J(Q,\alpha):=E_Q[X]-\sum_{k=1}^K \alpha_k E_Q[G_k-g_k].
  \] 
  It is immediate that $J(Q,\cdot)$ is convex for every $Q\in\mathcal{M}_c(\Omega)$ and that $J(\cdot,\alpha)$ is concave for each $\alpha\in\mathbb{R}^K$ since $\mathcal{M}_c(\Omega)$ is convex. Therefore, it follows exactly as in step (a) of the proof of \cite[Theorem~2.1]{bartl2016exponential}, that the assumption of $0$ being in the relative interior of
  \[
    \{(E_Q[G_1-g_1],\dots,E_Q[G_K-g_K]):Q\in\mathcal{M}_c(\Omega)\}
  \]
  can be used to show that all requirements of the minimax theorem~\cite[Theorem~4.1]{sion1958general} are satisfied. Hence, one gets
  \begin{align*}
    \inf_{\alpha\in\mathbb{R}^K} \phi\Big(X-\sum_{k=1}^K \alpha_k (G_k-g_k)\Big)
    & =\inf_{\alpha\in\mathbb{R}^K} \sup_{Q\in\mathcal{M}_c(\Omega)} J(Q,\alpha)\\
    & =\sup_{Q\in\mathcal{M}_c(\Omega)}\inf_{\alpha\in\mathbb{R}^K} J(Q,\alpha)
    =\sup_{Q\in\mathcal{M}_c^G(\Omega)} E_Q[X],
  \end{align*}
  where the first equality follows from Theorem~\ref{thm:main.integrals} and the last one by
  \[
    \inf_{\alpha\in\mathbb{R}^K} J(Q,\alpha)=
    \begin{cases}
    E_Q[X],&\text{if }Q\in\mathcal{M}_c^G(\Omega),\\
    -\infty,&\text{if } Q\in\mathcal{M}_c(\Omega)\setminus\mathcal{M}_c^G(\Omega).
    \end{cases}
  \]
  The proof is complete.
\end{proof}

\section{Discussion of \boldmath$\sigma$-compact spaces}\label{sec:discussion}

By definition, the $\sigma$-compactness of the metric space $\Omega\subset C([0,T],\R^d)$ requires to find a covering of $\Omega$ by compact sets $K^m$, $m\in\N$. It is an easy consequence of the Arzel{\`a}-Ascoli theorem (see e.g.~\cite[Theorem~1.4]{Friz2010}) that these $K^m$ have to be bounded, closed and equicontinuous. 

In the next lemma we provide an easy-to-check criterion for a set $\Omega$ of continuous functions to be $\sigma$-compact. This leads to many interesting examples of such $\Omega\subset C([0,T],\mathbb{R}^d)$ appearing in the context of (classical) financial modeling, see Subsection~\ref{subsec:examples}.

\begin{lemma}\label{lem:simga compact}
  For $n\in \mathbb{N}$ let $c_n\colon [0,T]^2\to [0,+\infty)$ be a continuous function with $c_n(t,t)=0$ for $t\in [0,T]$ and define the norm
  \begin{equation*}
    \|\omega\|_{c_n,\alpha}:= |\omega(0)| + \sup_{s,t\in[0,T]} \frac{|\omega (t)-\omega(s)|}{c_n(s,t)^\alpha},\quad \omega \in C([0,T],\mathbb{R}^d),
  \end{equation*}
  with $\alpha \in (0,1]$ and the convention $\frac{0}{0}:=0$.
  Then the spaces 
  \begin{equation*}
    \Omega_n := \left\{ \omega \in C([0,T],\mathbb{R}^d)\,:\, \|\omega\|_{c_n,1}<+\infty \right\},\quad n\in \N,
  \end{equation*}
  are $\sigma$-compact w.r.t. the norm $\|\cdot\|_{c_n,\alpha}$ for $\alpha \in (0,1)$ and in particular w.r.t.~the maximum norm $\|\cdot\|_\infty$. 
  Moreover, the set $\Omega := \bigcup_{n\in \N} \Omega_n$ is $\sigma$-compact w.r.t.~the maximum norm $\|\cdot\|_\infty$.   
\end{lemma}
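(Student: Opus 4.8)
The plan is to prove the two assertions of Lemma~\ref{lem:simga compact} separately, first establishing $\sigma$-compactness of each $\Omega_n$ and then deducing the statement for the countable union $\Omega=\bigcup_n\Omega_n$ by a diagonal/refinement argument.

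\textbf{Step 1: $\sigma$-compactness of $\Omega_n$.} Fix $n$ and abbreviate $c=c_n$. For each $R\in\N$ set
\[
  K^R_n:=\bigl\{\omega\in C([0,T],\R^d):\|\omega\|_{c,1}\le R\bigr\}.
\]
Clearly $\Omega_n=\bigcup_{R\in\N}K^R_n$, so it suffices to show each $K^R_n$ is compact with respect to the norm $\|\cdot\|_{c,\alpha}$ for $\alpha\in(0,1)$ (compactness in $\|\cdot\|_\infty$ then follows since, on any $\|\cdot\|_{c,1}$-bounded set, the modulus of continuity is controlled by $\sup_{|t-s|\le\delta}c(s,t)\to0$ as $\delta\to0$ by uniform continuity of $c$ on $[0,T]^2$ together with $c(t,t)=0$, hence $\|\cdot\|_{c,\alpha}$-convergence implies $\|\cdot\|_\infty$-convergence and vice versa on $K^R_n$, or more simply because the $\|\cdot\|_{c,\alpha}$ topology is finer than $\|\cdot\|_\infty$). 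The heart of the matter is the interpolation estimate: on $K^R_n$ one has, for all $s,t$,
\[
  \frac{|\omega(t)-\omega(s)|}{c(s,t)^\alpha}
  =\Bigl(\frac{|\omega(t)-\omega(s)|}{c(s,t)}\Bigr)^{\alpha}|\omega(t)-\omega(s)|^{1-\alpha}
  \le R^{\alpha}\,\bigl(2R'\bigr)^{1-\alpha}\,\Bigl(\tfrac{c(s,t)}{\text{?}}\Bigr)^{\dots},
\]
which I would instead phrase cleanly as: $|\omega(t)-\omega(s)|\le R\,c(s,t)$ and $|\omega(t)-\omega(s)|\le 2\sup_u|\omega(u)|$, and $\sup_u|\omega(u)|\le|\omega(0)|+R\sup_{s,t}c(s,t)=:M_R$ is bounded on $K^R_n$; interpolating gives $|\omega(t)-\omega(s)|/c(s,t)^\alpha\le R^\alpha(2M_R)^{1-\alpha}c(s,t)^{1-\alpha}\to0$ uniformly as $c(s,t)\to0$. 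Thus $K^R_n$ is bounded, closed (both under $\|\cdot\|_{c,\alpha}$ — using Fatou for the $\|\cdot\|_{c,1}$-bound under pointwise limits — and equicontinuous in $\|\cdot\|_\infty$), and by Arzel\`a--Ascoli relatively compact in $\|\cdot\|_\infty$; the uniform estimate just displayed upgrades $\|\cdot\|_\infty$-convergence of a sequence in $K^R_n$ to $\|\cdot\|_{c,\alpha}$-convergence, so $K^R_n$ is compact in $\|\cdot\|_{c,\alpha}$.

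\textbf{Step 2: $\sigma$-compactness of the union.} Write $\Omega_n=\bigcup_{R\in\N}K^R_n$ with each $K^R_n$ compact in $\|\cdot\|_\infty$ by Step~1. Then $\Omega=\bigcup_{n\in\N}\bigcup_{R\in\N}K^R_n$ is a countable union of $\|\cdot\|_\infty$-compact sets (re-index $\N\times\N$ by $\N$), hence $\sigma$-compact with respect to $\|\cdot\|_\infty$. Here one only claims the maximum-norm statement, so no further interpolation is needed — the point is simply that a countable union of $\sigma$-compact sets is $\sigma$-compact.

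\textbf{Main obstacle.} The only genuinely nontrivial point is the interpolation estimate in Step~1 showing that $\|\cdot\|_{c,1}$-bounded sets are equi-H\"older for the exponent $\alpha<1$ with a modulus that is uniform and vanishes as $c(s,t)\to0$; this is what makes each $K^R_n$ compact in the \emph{finer} norm $\|\cdot\|_{c,\alpha}$ rather than merely in $\|\cdot\|_\infty$. Everything else is a routine application of Arzel\`a--Ascoli (using the continuity of $c_n$ and $c_n(t,t)=0$ to get equicontinuity) plus the elementary fact that a countable union of compacts is $\sigma$-compact. I would also remark that closedness of $K^R_n$ follows because $\omega\mapsto\|\omega\|_{c,1}$ is lower semicontinuous for pointwise (a fortiori $\|\cdot\|_\infty$- or $\|\cdot\|_{c,\alpha}$-) convergence, being a supremum of continuous functionals.
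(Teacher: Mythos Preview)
Your approach is exactly the paper's: write $\Omega_n=\bigcup_{R\in\N}K_n^R$ with $K_n^R=\{\|\cdot\|_{c_n,1}\le R\}$, verify Arzel\`a--Ascoli for $\|\cdot\|_\infty$-compactness of each $K_n^R$ (boundedness, equicontinuity from continuity of $c_n$ with $c_n(t,t)=0$, closedness via lower semicontinuity of $\|\cdot\|_{c_n,1}$), then use the interpolation inequality to upgrade to $\|\cdot\|_{c_n,\alpha}$-compactness, and finally take the countable union for $\Omega$. One small correction: your displayed interpolation bound carries a spurious factor $c(s,t)^{1-\alpha}$; what follows from your two inputs is either $\frac{|\omega(t)-\omega(s)|}{c(s,t)^\alpha}\le Rc(s,t)^{1-\alpha}$ (from the first alone) or $\frac{|\omega(t)-\omega(s)|}{c(s,t)^\alpha}\le R^\alpha(2M_R)^{1-\alpha}$, the latter being the paper's clean form
\[
  \frac{|\omega(t)-\omega(s)|}{c(s,t)^\alpha}
  =\Bigl(\frac{|\omega(t)-\omega(s)|}{c(s,t)}\Bigr)^{\alpha}|\omega(t)-\omega(s)|^{1-\alpha}
  \le 2\|\omega\|_{c,1}^{\alpha}\|\omega\|_\infty^{1-\alpha},
\]
which, applied to $\omega_k-\omega$ (with $\|\omega_k-\omega\|_{c,1}\le 2R$), gives the upgrade from $\|\cdot\|_\infty$- to $\|\cdot\|_{c,\alpha}$-convergence in one line.
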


\begin{proof}
  For $m,n\in\N$ we observe   
  \begin{equation*}
    \Omega_n = \bigcup_{m\in \N} K^m_n \quad \text{with}\quad K_n^m := \left\{ \omega \in C([0,T],\mathbb{R}^d)\,:\, \|\omega\|_{c_n,1}\leq m \right\}.
  \end{equation*}
  In order to show the $\sigma$-compactness of $\Omega_n$ w.r.t. $\|\cdot\|_{\infty}$, we need to show that each $K^m_n$ is compact. Due to the Arzel{\`a}-Ascoli theorem, it is sufficient to show that each~$K^m_n$ is bounded, equicontinuous and closed. 
  
  \noindent Boundedness: For every $\omega\in K^m_n$ we have 
  \begin{equation*}
    \|\omega\|_\infty \leq |\omega(0)| + \sup_{t\in [0,T]}|\omega(t)-\omega(0)|\leq |\omega(0)| + m \sup_{t\in [0,T]} c_n(0,t).
  \end{equation*}
 
  \noindent Equicontinuity: Because $c_n$ is continuous on a compact set and $c_n(t,t)=0$ for $t\in [0,T]$, for every $\epsilon >0$ there exits a $\delta >0$ such that $|c_n(s,t)| <\epsilon/m$ for $|t-s|\leq \delta$. Hence, for every $\omega \in K^m_n$ and $s,t\in [0,T]$ with $|t-s|\leq \delta$ we get $|\omega (t)-\omega (s)| \leq \epsilon$.   

  \noindent Closeness: If $(\omega_k)\subset K^m_n$ converges uniformly to $\omega$, then $\omega \in K^m_n$. Indeed, this can be seen by 
  \begin{equation*}
    |\omega (0)|+\frac{|\omega (t)-\omega (s)|}{c_n(s,t)}= \lim_{k\to \infty} \bigg( |\omega_k (0)|+ \frac{|\omega_k (t)-\omega_k (s)|}{c_n(s,t)} \bigg)\leq m.
  \end{equation*}
 
  The $\sigma$-compactness of $\Omega_n$ w.r.t.~$\|\cdot\|_{c_n,\alpha}$ for $\alpha \in (0,1)$ follows by the fact that the uniform convergence in each $K^m_n$ implies the convergence w.r.t. $\|\cdot\|_{c_n,\alpha}$, which is a consequence of the following interpolation inequality
  \begin{equation*}
    \frac{|\omega (t)-\omega(s)|}{c_n(s,t)^{\alpha}}  
    =\bigg( \frac{|\omega (t)-\omega(s)|}{c_n(s,t)} \bigg)^{\alpha}|\omega (t)-\omega(s)|^{1-\alpha}
    \leq 2 \|\omega\|_{c_n,1}^\alpha \|\omega\|_\infty^{1-\alpha},\quad s,t\in [0,T].
  \end{equation*}
  
  Finally, $\Omega$ is $\sigma$-compact (w.r.t.~$\|\cdot\|_{\infty}$) since it is a countable union of $\sigma$-compact sets.
\end{proof}

From the previous lemma it is easy to deduce that many well-known function spaces $\Omega \subset C([0,T],\R^d)$ are $\sigma$-compact spaces. To state the next corollary, we recall the notion of control functions: $c\colon [0,T]^2\to [0,+\infty)$ is called control function if $c$ is continuous, super-additive, i.e.~$c(s,t)+c(t,u)\leq c(s,u)$ for $0\leq s\leq t\leq u\leq T$, and $c(t,t)=0$ for every $t\in [0,T]$.
 
\begin{corollary}\label{cor:sigma compact examples}~
  (i) The space $C^{\alpha}([0,T],\R^d)$ of $\alpha$-H\"older continuous functions, i.e.
  \begin{equation*}
    C^{\alpha}([0,T],\R^d) := \left\{ \omega \in  C([0,T],\R^d) \,:\, \sup_{s,t\in [0,T]}\frac{|\omega(t)-\omega(s)|}{|t-s|^\alpha} <+\infty\right \}, \quad \alpha \in (0,1],
  \end{equation*}
  is $\sigma$-compact w.r.t.~$\|\cdot\|_\infty$ and w.r.t.~the  H\"older norm $\|\cdot\|_{\beta}$ for $\beta \in (0,\alpha)$ defined by
  \begin{equation*}
    \|\omega\|_{\beta}:= |\omega(0)|+ \sup_{s,t\in [0,T]}\frac{|\omega(t)-\omega(s)|}{|t-s|^\beta}  \quad \text{for}\quad \omega \in C^{\alpha}([0,T],\R^d).
  \end{equation*}
  
  (ii) The space $C^{\textup{H\"older}}([0,T],\R^d):= \bigcup_{\alpha \in (0,1]} C^{\alpha}([0,T],\R^d)$ of all H\"older continuous functions is $\sigma$-compact w.r.t.~the maximum norm $\|\cdot\|_\infty$. 
  
  (iii) The fractional Sobolev space $W^{\delta,p}([0,T],\R^d)$ with $\delta -1/p>0$, given by
  \begin{equation*}
    W^{\delta,p}([0,T],\R^d):=  \left\{ \omega \in  C([0,T],\R^d) \,:\, \int_{[0,T]^2}\frac{|\omega(t)-\omega(s)|^p}{|t-s|^{\delta p +1}}\dd s \dd t <+\infty\right \}
  \end{equation*}  
  for $\delta \in (0,1)$ and $ p\in [1,+\infty)$, is $\sigma$-compact w.r.t.~maximum norm $\|\cdot\|_\infty$. 
  
  (iv) The space $C^{p\var,c}([0,T],\R^d)$, which is a subspace of continuous functions with finite $p$-variation, given by
  \begin{equation*}
    C^{p\var,c}([0,T],\R^d) := \left\{ \omega \in  C([0,T],\R^d) \,:\, \sup_{s,t\in [0,T]}\frac{|\omega(t)-\omega(s)|}{c(s,t)^{1/p}} <+\infty\right \}
  \end{equation*}
  for $p\in [1,+\infty)$ and a control function $c$, is $\sigma$-compact w.r.t.~the maximum norm $\|\cdot\|_\infty$ and w.r.t.~the $p^\prime$-variation norm $\|\cdot\|_{p^\prime\var}$ for $p^\prime \in (p,+\infty)$ defined by
  \begin{equation*}
    \|\omega\|_{p^\prime \var} := |\omega(0)| + \sup_{0\leq t_0\leq \cdots \leq t_n \leq T,\, n\in \N} \bigg ( \sum_{i=0}^{n-1} |\omega(t_{i+1})-\omega(t_i)|^{p^\prime} \bigg)^{1/p^\prime}.
  \end{equation*}
\end{corollary}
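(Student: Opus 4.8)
The plan is to deduce all four assertions from Lemma~\ref{lem:simga compact} by choosing the family $(c_n)$ appropriately and then identifying the norms $\|\cdot\|_{c_n,\alpha}$ produced by the lemma with those claimed in the statement. Parts (i) and (ii) will be essentially immediate. For (i) I would take $c_1(s,t):=|t-s|^\alpha$, which is continuous on $[0,T]^2$ and vanishes on the diagonal, so that $\Omega_1=C^\alpha([0,T],\R^d)$ and $\|\cdot\|_{c_1,\gamma}=\|\cdot\|_{\alpha\gamma}$ for $\gamma\in(0,1)$; since $\alpha\gamma$ ranges over $(0,\alpha)$ as $\gamma$ ranges over $(0,1)$, Lemma~\ref{lem:simga compact} yields $\sigma$-compactness of $C^\alpha([0,T],\R^d)$ with respect to $\|\cdot\|_\beta$ for every $\beta\in(0,\alpha)$ and with respect to $\|\cdot\|_\infty$. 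For (ii) I would note that on the bounded interval $[0,T]$ one has $C^\alpha([0,T],\R^d)\subset C^{1/n}([0,T],\R^d)$ as soon as $1/n\le\alpha$, whence $C^{\textup{H\"older}}([0,T],\R^d)=\bigcup_{n\in\N}C^{1/n}([0,T],\R^d)$ is precisely the set $\bigcup_n\Omega_n$ of Lemma~\ref{lem:simga compact} with $c_n(s,t):=|t-s|^{1/n}$.

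For (iv), the $\|\cdot\|_\infty$-statement follows at once by taking $c_1(s,t):=c(s,t)^{1/p}$, which is continuous with $c_1(t,t)=0$ and satisfies $\Omega_1=C^{p\var,c}([0,T],\R^d)$; the lemma simultaneously gives $\sigma$-compactness with respect to $\|\cdot\|_{c_1,\alpha}$ for $\alpha\in(0,1)$. The part requiring genuine work is $\sigma$-compactness in the stronger $p'$-variation norm for $p'\in(p,+\infty)$. Here I would write $C^{p\var,c}([0,T],\R^d)=\bigcup_m K^m$ with $K^m:=\{\omega:\|\omega\|_{c_1,1}\le m\}$, recall from the proof of Lemma~\ref{lem:simga compact} that each $K^m$ is $\|\cdot\|_\infty$-compact and closed under uniform limits, and then show that on $K^m$ uniform convergence already implies $\|\cdot\|_{p'\var}$-convergence; since the spaces are metrizable this upgrade suffices. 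The two ingredients are: super-additivity of the control $c$, which gives $\sum_i|\omega(t_{i+1})-\omega(t_i)|^p\le m^p c(0,T)$ on $K^m$ for every partition $t_0\le\cdots\le t_n$ (a uniform bound on the $p$-variation seminorm $V_p$), and the elementary interpolation
\[
  V_{p'}(\eta)\le\big(2\|\eta\|_\infty\big)^{1-p/p'}\,V_p(\eta)^{p/p'},
\]
applied to $\eta=\omega_k-\omega$; since $V_p$ is subadditive for $p\ge1$ and $\omega\in K^m$ by closedness, $V_p(\omega_k-\omega)$ stays bounded while $\|\omega_k-\omega\|_\infty\to0$, forcing $V_{p'}(\omega_k-\omega)\to0$, and $|\omega_k(0)-\omega(0)|\to0$ handles the remaining term of the norm. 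This interpolation is in the same spirit as the one already appearing in the proof of Lemma~\ref{lem:simga compact}.

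The main obstacle is part (iii), because $W^{\delta,p}([0,T],\R^d)$ is defined through an $L^p$ double integral rather than through a pointwise modulus of continuity and therefore does not literally have the shape $\Omega_n$. The bridge I would use is the Garsia--Rodemich--Rumsey inequality (see e.g.~\cite{Friz2010}): for $\delta-1/p>0$ there is a constant $C=C_{\delta,p,T}$ such that
\[
  \sup_{s,t\in[0,T]}\frac{|\omega(t)-\omega(s)|}{|t-s|^{\delta-1/p}}\le C\bigg(\int_{[0,T]^2}\frac{|\omega(t)-\omega(s)|^p}{|t-s|^{\delta p+1}}\dd s\dd t\bigg)^{1/p},
\]
so in particular $W^{\delta,p}([0,T],\R^d)\subset C^{\delta-1/p}([0,T],\R^d)$. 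I would then decompose $W^{\delta,p}([0,T],\R^d)=\bigcup_m A_m$, where $A_m$ consists of those $\omega$ for which $|\omega(0)|$ plus the $1/p$-th power of the double integral is at most $m$. By the displayed estimate, $A_m$ is a bounded subset of $C^{\delta-1/p}([0,T],\R^d)$ for the norm $\|\cdot\|_{\delta-1/p}$; arguing exactly as in the proof of Lemma~\ref{lem:simga compact}, it is then bounded and equicontinuous, hence relatively $\|\cdot\|_\infty$-compact by Arzel\`a--Ascoli, and it is closed under uniform limits by Fatou's lemma applied to the double integral together with $\omega_k(0)\to\omega(0)$. Thus each $A_m$ is $\|\cdot\|_\infty$-compact and $W^{\delta,p}([0,T],\R^d)$ is $\sigma$-compact with respect to $\|\cdot\|_\infty$, which finishes (iii). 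The only delicate points in the whole argument are therefore the invocation of Garsia--Rodemich--Rumsey in (iii) and the variation interpolation in (iv); everything else is a routine specialization of Lemma~\ref{lem:simga compact}.
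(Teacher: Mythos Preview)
Your proposal is correct and follows essentially the same route as the paper for parts (i)--(iii): the paper also reduces (i) and (ii) to Lemma~\ref{lem:simga compact} via $c_n(s,t)=|t-s|^{1/n}$, and for (iii) invokes the same Sobolev/Garsia--Rodemich--Rumsey embedding $W^{\delta,p}\hookrightarrow C^{\delta-1/p}$ from \cite{Friz2010} together with Fatou's lemma to get closedness of the sublevel sets under uniform limits.

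The only genuine difference is in (iv). The paper deduces $\sigma$-compactness in $\|\cdot\|_{p'\var}$ from the single inequality
\[
  \|\omega\|_{p'\var}\le \|\omega\|_{c,1/p}\,c(0,T)^{1/p},
\]
which follows directly from super-additivity of the control and dominates the $p'$-variation norm by a norm already handled by Lemma~\ref{lem:simga compact}. You instead combine the uniform $p$-variation bound on $K^m$ (again via super-additivity) with the interpolation $V_{p'}(\eta)\le (2\|\eta\|_\infty)^{1-p/p'}V_p(\eta)^{p/p'}$ applied to differences. Both arguments are valid; yours makes the passage from uniform convergence to $p'$-variation convergence on each $K^m$ fully explicit, while the paper's is more concise but leaves the reader to unpack how the displayed inequality yields compactness (one has to apply it, or its analogue with exponent $1/p'$, to differences and feed in the $\|\cdot\|_{c,\alpha}$-compactness from the lemma).
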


\begin{proof}
  \textit{(i) and (ii)} follow directly by Lemma~\ref{lem:simga compact} and the fact that 
  \begin{equation*}
    C^{\alpha}([0,T],\R^d) \subset C^{\frac{1}{n}}([0,T],\R^d)\quad \text{for}\quad \alpha \in [n^{-1},(n-1)^{-1}], \quad n\in \N.
  \end{equation*}
 
  \textit{(iii)} Classical Sobolev embedding results, see e.g. \cite[Corollary~A.2]{Friz2010}, imply
  \begin{equation*}
    W^{\delta,p}([0,T],\R^d) \subset C^{\delta-1/p}([0,T],\R^d)\quad \text{and}\quad \|\omega\|_{\delta-1/p}\leq C(\delta,p) \|\omega\|_{W^{\delta,p}}
  \end{equation*} 
  for $\omega \in W^{\delta,p}([0,T],\R^d)$ with $\delta - 1/p>0$ and for a constant $C(\delta,p)>0$ depending only on $\delta$ and p. Here $\|\cdot\|_{W^{\delta,p}}$ denotes the fractional Sobolev semi-norm, see~\eqref{eq:Sobolev norm} below. Hence, to obtain the stated $\sigma$-compactness from Lemma~\ref{lem:simga compact}, it remains to show that, if a sequence $(\omega_k)\subset  W^{\delta,p}([0,T],\R^d)$ with $\|\omega\|_{W^{\delta,p}}\leq K$ for some constant $K>0$ converges uniformly to a function $\omega$, then $\|\omega\|_{W^{\delta,p}}\leq K$. However, this is a simple consequence of Fatou's lemma. 
 
  \textit{(iv)} The $\sigma$-compactness w.r.t.~$\|\cdot\|_\infty$ and $\|\cdot\|_{c,\alpha}$ for $\alpha \in (0,1)$ follows again by Lemma~\ref{lem:simga compact}. The $\sigma$-compactness w.r.t.~$\|\cdot\|_{p^\prime \var}$ can be deduced from the inequality 
  $$
    \|\omega\|_{p^\prime \var} \leq \|\omega \|_{c,\frac{1}{p}} c(0,T)^{1/p}
  $$
  for $\omega \in C^{p\var,c}([0,T],\R^d)$ and for $p^\prime \in (p,+\infty)$.
\end{proof}

\begin{remark} 
  (i) The function spaces stated in Corollary~\ref{cor:sigma compact examples} satisfy also the first part of assumption~(A2): for every $\omega\in\Omega$ and $t\in[0,T]$ the stopped path $\omega^t(\cdot):=\omega(\cdot\wedge t)$ is in $\Omega$; in the case of the H\"older-type spaces this is fairly easy to verify and for the Sobolev space we refer to \cite[Lemma~1.5.1.8]{Grisvard1985}. Hence, all these function spaces equipped with the maximum norm satisfy the assumptions (A1) and (A2), see also Remark~\ref{rem:approx}.
  
  (ii) From the perspective of (completely) model-free financial mathematics it might be desirable to consider the space $C^{p\var}([0,T],\R^d)$ of all continuous functions possessing finite $p$-variation for $p>2$ since this space includes the support of all martingale measures. Unfortunately, the elementary covering as used in the proof of Lemma~\ref{lem:simga compact} cannot work as the unit ball in $C^{p\var}([0,T],\R^d)$ is not compact, see e.g.~\cite[Example~3.4]{Maligranda1992}.
\end{remark}

\subsection{Examples from mathematical finance}\label{subsec:examples}

As mentioned in the introduction, the prediction set~$\Omega$ can be interpreted to contain all the price paths that an investor believes could possibly appear at a financial market. Hence, it is natural to choose~$\Omega$ in a way that it includes those price processes coming from financial models which have been proven to provide fairly reasonable underlying price processes. 

\begin{example}
  A natural assumption coming from semi-martinagle models is to consider a prediction set $\Omega_{\textup{QV}}$ of continuous paths possessing pathwise quadratic variation in the sense of F\"ollmer. We refer, e.g., to the work~\cite{Schied2016} (and the references therein) for such frameworks. To be more precise, fix a refining sequence of partitions~$(\pi_n)_{n\in \mathbb{N}}$ with mesh size going to zero and consider the prediction set 
  \begin{equation*}
    \Omega_{\textup{QV}}:= \{\omega \in C^{\alpha}([0,T],\mathbb{R})\,:\, \omega(0)=0\text{ and } \| \omega \|_{QV} <C \}
  \end{equation*}
  for $\alpha\in (0,1)$ and some constant $C>0$, where 
  \begin{equation*}
    \| \omega \|_{\textup{QV}}:= \sup_{n\in \mathbb{N}} \bigg(\sum_{[s,t]\in \pi_n} |\omega(t)-\omega(s)|^2 \bigg)^{1/2}.
  \end{equation*}
  Note that $\Omega_{\textup{QV}}$ is $\sigma$-compact with respect to the norm~$\|\cdot\|_{\infty}$. Indeed, we have 
  \begin{equation*}
    \Omega_{\textup{QV}}= \bigcup_{n\in \mathbb{N}} \Omega_n \quad\text{and}\quad \Omega_n:=\{\omega \in C^{\alpha}([0,T],\mathbb{R})\,:\,\|\omega\|_\alpha +\|\omega\|_{\textup{QV}} \leq n \},
  \end{equation*}
  where $\Omega_n$ is a compact set for each $n\in \mathbb{N}$. In order to see the compactness of~$\Omega_n$, we observe that the condition $\|\omega\|_\alpha +\|\omega\|_{\textup{QV}}\leq n$ ensures that the set~$\Omega_n$ is equicontinuous and uniformly bounded and, furthermore, every sequence in $(\omega_m) \subset \Omega_n$ possesses a subsequence which convergences in the maximum norm to a function $\omega \in C^{\alpha}([0,T],\mathbb{R})$ with $\|\omega\|_{\alpha} \leq n$, cf. Lemma~\ref{lem:simga compact}. The required bound $\|\omega\|_{\textup{QV}}\leq n $ follows by the same estimates as used for the proof of \cite[Proposition~5.28]{Friz2010}.

  Let us consider, for instance, a simple lookback option $X(\omega):=\max_{t\in [0,T]}|\omega(t)| $ on the market $\Omega_{\textup{QV}}$. Using a pathwise version of the Buckholder-Davis-Gundy inequality (\cite[Theorem~2.1]{Beiglbock2015}), we get
  \begin{equation*}
    X(\omega) \leq \liminf_{ n } \max_{t\in \pi^n} |S_{t}(\omega)|\leq 6 \sqrt{C} + \liminf_n (H^n \cdot S)_T(\omega)
  \end{equation*}
  for all $\omega \in \Omega_{\textup{QV}}$ and for some simple predicable process $(H^n)$. From this we can conclude that the superhedging price is less or equal to $ 6 \sqrt{C}$, using the definitions from Theorem~\ref{thm:main.integrals}. Note that the superhedging price on the entire space $C([0,T],\mathbb{R})$ has to be infinity if we aim to have the duality between the superhedging price and the supremum of $E_Q[X]$ over all martinagle measures~$Q$.
\end{example}

\begin{example}
  Instead of using financial model based on semi-martingales, there is a rich literature on financial modeling using fractional Brownian motion because of its favorable time-series properties, see e.g.~\cite{Rostek2013} and the references therein.
  
  This motivates the choice of the prediction set $\Omega:= \{\omega\in C^{H}([0,T],\mathbb{R})\,:\, \omega(0)=0\}$ as it contains the sample paths of a fractional Brownian motion with Hurst index $H\in (0,1)$. In the case of $H>1/2$, for every upper semicontinuous claim $X\colon \Omega\to [0,+\infty]$, we can apply our pathwise pricing-hedging duality (Theorem~\ref{thm:main.integrals}), to see that the superhedging price is given by
  \begin{equation*}
    \phi (X) :=
    \inf\left\{ \lambda\in\mathbb{R} \,:\, 
    \begin{array}{l}
      \text{there is a sequence $(H^n)$ in $\mathcal{H}^f$ such that}\\
       \text{$\lambda+(H^n\cdot S)_t(\omega)\geq 0 \text{ for all } (t,\omega)\in[0,T]\times\Omega$}\\
     \text{and } \lambda+ \liminf_n (H^n\cdot S)_T(\omega) \geq X(\omega) \,\text{ for all } \omega\in\Omega
     \end{array}
    \right\} = X(0),
  \end{equation*}
  where $0$ stands for the constant path equal to~$0$, since the Dirac measure at~$0$ is the only martingale measure in $\mathcal{M}_c(\Omega)$. Notice that the pathwise superhedging price considering the entire space~$C([0,T],\R)$ is $\sup_{\omega \in C([0,T],\R)}|X (\omega)|$ for many options~$X$.
\end{example}

\begin{remark}
  Prediction sets can naturally be modeled by means of the pathwise quadratic variation~\eqref{eq:quadratic varition}. For instance the typical price paths of the Black-Scholes model are given by the prediction set
  \[
    \Omega=\left\{\omega\in C([0,T],\mathbb{R}): \omega(0)=s_0\mbox{ and }\langle\omega\rangle_\cdot=\int_0^\cdot \sigma^2 \omega(t)^2\dd t\right\},\quad s_0\in  \mathbb{R}.
  \]
  However, prediction sets depending on the pathwise quadratic variation as in the previous example are in general not $\sigma$-compact and the duality results of this paper do not apply. As shown in Bartl et al.~\cite{Bartl2017}, a pathwise pricing-hedging duality on such prediction sets can still be obtained but it requries a modified superhedging price which allows to invest directly in the quadratic variation: this new superhedging price of a contingent claim $X$ is defined as the inifimum over all $\lambda\in\mathbb{R}$ for which there exist sequences $(H^n)$ and $(G^n)$ of simple strategies such that
  \begin{equation*}
    \lambda+ \liminf_{n\to\infty} \big((H^n\cdot S)_T(\omega)
    +(G^n\cdot \int S\dd S)_T(\omega)\big) \ge X(\omega)\quad\mbox{for all }\omega\in\Omega
  \end{equation*}
  and the admissibility condition $\lambda+(H^n\cdot S)_t(\omega)+(G^n\cdot \int S\dd S)_t(\omega)\geq 0$ for all $n\in\mathbb{N}$, $\omega\in\Omega$, and $t\in[0,T]$. The key idea is to extend the market model and consider a two dimensional price process $(S,\int S\,dS)$ on the product space $C([0,T],\mathbb{R})\times C([0,T],\mathbb{R})$
  and adapting the duality results (and its proofs) of the present paper accordingly. For a detailed discussion on prediction sets depending the pathwise quadratic variation we refer to \cite{Bartl2017}. 
\end{remark}

In the following we present several examples coming from the modeling of financial market and guarantee the existence of $\sigma$-compact metric spaces $\Omega\subset C([0,T],\R^d)$, which include all the possible price trajectories produced by these models. For simplicity we consider one-dimensional processes and denote by~$W$ a one-dimensional Brownian motion on a probability space $(\tilde \Omega, P,\mathcal{F})$. However, all arguments extend straightforward to multi-dimensional settings. 

\begin{example}[Classical Black-Scholes model]
  A classical example from mathematical finance is the famous Black-Scholes model, which is given by
  \begin{equation*}
    \d S_t = \sigma S_t \dd W_t +\mu S_t \dd t, \quad t\in [0,T],
  \end{equation*}
  for $\mu\in \R$ and $\sigma>0$. In this case the price process $S$ is a so-called geometric Brownian motion, which possesses the same sample path regularity as a Brownian motion. Hence, one has almost surely $S\in C^{\alpha}([0,T],\R)$ and $S\in W^{\alpha-\frac{1}{q},q}([0,T],\R)$ for every $\alpha \in (0,1/2)$ and $q>2$, cf. Corollary~\ref{cor:ito processes}. 
\end{example}

\begin{example}[Local volatility models]
  Other examples are local volatility models
  \begin{equation*}
    \d S_t = \sigma (t, S_t) \dd W_t,\quad S_0=s_0,\quad t\in [0,T],
  \end{equation*}
  for a volatility function $\sigma \colon [0,T]\times \R\to \R$. For these classes of models one again has $S\in  \Omega := C^{\alpha}([0,T],\R)$ a.s. for every $\alpha <1/2$ if $s_0\in \R$ and $\sigma$ is Lipschitz continuous and satisfies the linear growth condition $|\sigma (t,x)|^2\leq K (1+|x|^2)$ for $(t,x)\in [0,T]\times \R$ and positive constant $K>0$. Indeed, the H\"older regularity of $S$ can be deduced from Corollary~\ref{cor:ito processes} combined with the estimate 
  \begin{align*}
    E_P \bigg[\int_0^T |\sigma (s,S_s)|^q \dd s\bigg]\leq \tilde C  E_P \bigg[\int_0^T (1+|S_s|)^q \dd s\bigg] 
    \leq \tilde C^\prime \bigg(1+ \int_0^T E_P  \big[|S_s|^q\big] \dd s\bigg) \leq C,
  \end{align*}
  for constants $\tilde C, \tilde C^\prime>0$ and $C=C(q,K,T,S_0)>0$, and for every $q\geq 2$, where the last inequality follows by the $L^q$-estimate in \cite[Theorem~4.1]{Mao2008}.
\end{example}

\begin{example}[Stochastic volatility models (with uncertainty)]
  A frequently used generalization of the Black-Scholes model is given by stochastic volatility models
  \begin{equation}\label{eq:stochastic vol model}
    \d S_t = \sigma_t  S_t \dd W_t + \mu_t S_t \dd t,\quad S_0=s_0, \quad t\in [0,T],
  \end{equation}
  for $s_0\in \R$ and predictable real-valued processes $\mu$ and $\sigma$. This type of linear stochastic differential equations can be explicitly solved by 
  \begin{equation*}
    S_t := s_0\exp \bigg( \int_0^t \bigg(\mu_s -\frac{\sigma_s^2}{2} \bigg)\dd s + \int_0^t \sigma_s \dd W_s \bigg),\quad t\in [0,T].
  \end{equation*}
  Based on Corollary~\ref{cor:ito processes}, one can easily deduce the sample path regularity of the price process~$S$: For $q\in (2,+\infty)$, $\alpha \in (0,1/2-1/(2q))$ and $\delta:= \alpha-1/q$, if $ E_P \big[\int_0^T|\mu_s|^{q}\dd s\big]<+\infty$ and $ E_P \big[\int_0^T|\sigma_s|^{2q}\dd s\big]<+\infty$, then 
  \begin{equation}\label{eq:regularity price process}
    S\in C^{\alpha}([0,T],\R) \quad \text{and} \quad S\in W^{\delta,q}([0,T],\R),\quad \text{a.s.}
  \end{equation}
  For example the Heston model is a stochastic volatility model, in which the volatility process~$\sigma$ satisfies such a bound.

  In the context of stochastic volatility modeling with Knightian uncertainty, one usually replaces the fixed volatility process~$\sigma$ by a class of volatility processes. For example the seminal works on volatility uncertainty~\cite{Avellaneda1995} and~\cite{Lyons1995} require the volatility processes~$\sigma$ to be such that $\sigma_t \in [\sigma_{\textup{min}},\sigma_{\textup{max}}]$ for all $t\in [0,T]$ and some constants $\sigma_{\textup{min}},\sigma_{\textup{max}}>0$ with $\sigma_{\textup{min}}<\sigma_{\textup{max}}$. Therefore, due to the bounds on the volatility, all possible price paths considered in~\cite{Avellaneda1995} and~\cite{Lyons1995} belong to the function spaces as stated in~\eqref{eq:regularity price process}.
\end{example}

\begin{example}[Rough volatility models]
  Recently, analyzing time series of volatility using high frequency data, Gatheral, Jaisson and Rosenbaum~\cite{Gatheral2014} showed that the log-volatility behaves essentially like a fractional Brownian motion with Hurst exponent $H$ close to $0.1$. This new insight has led to various fractional extensions of classical volatility models (see e.g.~\cite{Gatheral2014,Bayer2016,Bennedsen2016,ElEuch2019}), which nicely lead to price paths belonging to the $\sigma$-compact metric space of H\"older continuous functions. Indeed, if the stochastic volatility~$\sigma$ fulfills for some $M>0$ and $q>r\geq 1$ the bound
  \begin{equation}\label{eq:volatility regularity}
    E_P[ |\sigma_t-\sigma_s|^q]\leq |t-s|^{\frac{q}{r}}\quad \text{for } s,t\in [0,T]\quad \text{and}\quad  \sigma_0 \in \R, 
  \end{equation}
  then we observe that 
  \begin{equation*}
    E_P\bigg[\int_0^T |\sigma_s|^q \dd s\bigg]\leq C  \big(|\sigma_0|^q + E_P\big[\|\sigma\|_{\beta}^q\big]\big)<+\infty, 
  \end{equation*}
  for some constant $C=C(q,M,T)>0$ and $\beta \in (0,1/r-1/q)$. Note, that condition~\eqref{eq:volatility regularity} is exactly the condition usually required by the Kolmogorov continuity criterion (cf. Theorem~\ref{thm:Kolmogorov}), which is frequently used to verify the H\"older regularity of a stochastic process. In particular, every rough volatility model satisfying~\eqref{eq:volatility regularity} with associated price process given by~\eqref{eq:stochastic vol model}  generates price paths possessing H\"older regularity as provided in~\eqref{eq:regularity price process}. For example, a simple fractional Brownian motion with Hurst index~$H$ fulfills the bound~\eqref{eq:volatility regularity} with $q\in [2,+\infty)$ and $r=H$ and the rough Heston model as introduced by El Euch and Rosenbaum~\cite[(1.3)]{ElEuch2019} fulfills the bound~\eqref{eq:volatility regularity} with $q\in [2,+\infty)$ and $1/r=\alpha-1/2$ for $\alpha \in (1/2,1)$, where $\alpha$ denotes the parameter specified in the rough Heston model~\cite[(1.3)]{ElEuch2019}.
\end{example}

\begin{example}[Volatility uncertainty]
  The most general case of volatility uncertainty is usually provided by simultaneously considering all processes of the type
  \begin{equation*}
    S_t =\int_0^t \sqrt{\sigma_s} \dd W_s, \quad t\in [0,T],
  \end{equation*}
  for strictly positive and predictable processes $\sigma$, see~\cite{Neufeld2013,Possamai2013}. While they can deal with all $\sigma$ such that $\int_0^T\sigma_s\dd s<+\infty$ a.s., we have seen in 3. that we can deal with all volatility processes~$\sigma$ such that $E_P[\int_0^T\sigma_s^q\dd s]<+\infty$ for $q\in (1,+\infty)$.

  Another sub-class of these price processes~$S$ leading to $\sigma$-compact sets of price paths is given by all processes~$S$ with corresponding volatility process~$\sigma$ such that $\sigma \leq f$ for some deterministic integrable function $f\colon [0,T]\to (0,+\infty)$. Indeed, defining the quadratic variation of $S$ by $\langle S\rangle_t=\int_0^t \sigma_s \dd s$ for $t\in [0,T]$ and using Dambis Dubin-Schwarz theorem, one has $S_t = B_{\langle S\rangle_t}$ for a suitable Brownian motion $B$. Based on this observation, it is easy to derive that 
  \begin{equation*}
    S \in C^{p\var,c}([0,T],\R),\quad \text{a.s.,}\quad \text{with}\quad  c (s,t):= \int_s^t f(r)\dd r, \quad s,t\in [0,T],
  \end{equation*}
  and $p>2$. Recall that $C^{p\var,c}([0,T],\R)$ is $\sigma$-compact by Corollary~\ref{cor:sigma compact examples}.
\end{example}

\section{Proofs of the main results}\label{sec:proof}

Denote by $C_b$ the set of all bounded continuous functions $X\colon\Omega\to\mathbb{R}$.

\begin{remark}\label{rem:approx}
  If $\Omega$ is a $\sigma$-compact space endowed with the maximum norm, then (A1) is always satisfied.
\end{remark}

\begin{proof}
  Fix $t\in[0,T]$, a bounded $\mathcal{F}_t^0$-measurable function $h$, and a Borel probability $Q$. Define $\pi\colon \Omega\to C([0,t],\mathbb{R}^d)$, $\pi(\omega)(s):=\omega(s)$, and set $\Omega_t:=\pi(\Omega)$ endowed with the maximum norm $\|\omega\|_{\infty}:=\max_{s\in[0,t]}|\omega(s)|$. By $\sigma$-compactness there exist compact sets $K_n$, $n\in\mathbb{N}$, such that $\Omega=\bigcup_n K_n$. Further, since $\Omega_t=\bigcup_n \pi(K_n)$, and $\pi(K_n)$ is compact by continuity of $\pi$, it follows that $\Omega_t$ is $\sigma$-compact and therefore separable. Standard arguments show that $\mathcal{F}_t^0=\{\pi^{-1}(B): B\in\mathcal{B}(\Omega_t)\}$, where $\mathcal{B}(\Omega_t)$ denotes the Borel sets of $\Omega_t$. Hence, $h=\tilde{h}\circ\pi$ for some Borel function $\tilde{h}\colon\Omega_t\to\mathbb{R}$. Again by $\sigma$-compactness of $\Omega_t$, the probability measure $\tilde{Q}:=Q\circ\pi^{-1}$ is tight and thus regular, i.e.~Borel sets can be approximated by compact subsets in measure. In particular, there exists a sequence of continuous functions $\tilde{h}_n\colon\Omega_t\to\mathbb{R}$ such that $\tilde{h}_n\to\tilde{h}$ $\tilde{Q}$-almost surely, which in turn implies $h_n:=\tilde{h}_n\circ\pi\to \tilde{h}\circ\pi=h$ $Q$-almost surely.
\end{proof}

The following lemma is an immediate consequence of standard results about discrete-time local martingales (see \cite[Theorems~1 and~2]{jacod1998local}), which we recall for later references. 

\begin{lemma}\label{lem:weak}
  If $Q\in\mathcal{M}(\Omega)$ and $H\in\mathcal{H}^f$ such that $E_Q[(H\cdot S)_T^-]<+\infty$, then $(H\cdot S)_T$ is $Q$-integrable and $E_Q[(H\cdot S)_T]=0$.
\end{lemma}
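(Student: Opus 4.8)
The plan is to reduce the statement to the cited discrete-time results of Jacod--Shiryaev on local martingales by exploiting the finite structure of a simple predictable integrand. Write $H = \sum_{n=1}^N h_n \mathbf 1_{(\tau_n,\tau_{n+1}]}$ with stopping times $0 \le \tau_1 \le \dots \le \tau_{N+1} \le T$ and bounded $\mathcal F_{\tau_n}$-measurable $h_n$. Then $(H\cdot S)_T = \sum_{n=1}^N h_n(S_{\tau_{n+1}} - S_{\tau_n})$, which exhibits the integral as the terminal value of a discrete-time martingale transform: set $M_0 := 0$ and $M_k := \sum_{n=1}^k h_n (S_{\tau_{n+1}} - S_{\tau_n})$ for $k = 1,\dots,N$, adapted to the discrete filtration $(\mathcal F_{\tau_k})_{k=0}^{N}$ (here one uses the optional sampling structure; since $S$ is a $Q$-martingale on a compact time interval and bounded in $L^1$ between consecutive stopping times — or one may first localize — the stopped process $S_{\tau_k}$ is a discrete-time $Q$-martingale by optional stopping). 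Each increment $M_k - M_{k-1} = h_k (S_{\tau_{k+1}} - S_{\tau_k})$ is $\mathcal F_{\tau_{k+1}}$-measurable with $E_Q[M_k - M_{k-1} \mid \mathcal F_{\tau_k}] = 0$ whenever the conditional expectations make sense, so $(M_k)$ is a discrete-time local martingale (it is a martingale transform of the martingale $(S_{\tau_k})$ by the predictable bounded sequence $(h_k)$).

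First I would make precise why $(S_{\tau_k})_{k=0}^{N+1}$ is a genuine $Q$-martingale for the discrete filtration $(\mathcal F_{\tau_k})$: this is optional sampling, valid because $S$ is a right-continuous $Q$-martingale (under the augmented right-continuous filtration), the $\tau_k$ are bounded stopping times, and closedness at the bounded horizon $T$ gives uniform integrability of the family $\{S_{\tau}: \tau \text{ stopping time } \le T\}$. Granting this, $(M_k)$ is a martingale transform of a martingale by a bounded predictable process, hence itself a discrete-time local martingale; indeed with bounded $h_k$ it is even a true martingale once we know each $S_{\tau_k} \in L^1$, but to avoid integrability worries I would instead invoke the local martingale theory directly, since $M$ is trivially locally bounded-in-increments only up to the $L^1$-bounds on the $S_{\tau_k}$.

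Next I would apply the cited result \cite[Theorems~1 and~2]{jacod1998local}: a discrete-time local martingale $(M_k)$ with $E_Q[M_N^-] < \infty$ is in fact a true martingale, so $E_Q[M_N] = E_Q[M_0] = 0$. The hypothesis $E_Q[(H\cdot S)_T^-] < +\infty$ is exactly $E_Q[M_N^-] < +\infty$, which by those theorems upgrades $M$ to a true martingale; integrability of $M_N = (H\cdot S)_T$ and $E_Q[(H\cdot S)_T] = E_Q[M_N] = 0$ then follow immediately. This is the whole argument: the content is entirely in the translation "simple pathwise stochastic integral against a $Q$-martingale $=$ discrete-time martingale transform," after which the named theorems do the work.

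The main obstacle is the optional-stopping step, i.e.\ verifying carefully that $(S_{\tau_k})$ is a discrete-time $Q$-martingale for the stopped filtration. One must be slightly careful that the $\tau_k$ are stopping times for the right-continuous filtration $(\mathcal F_t)$ while $h_n$ is only $\mathcal F_{\tau_n}$-measurable; the right-continuity of the filtration and the (right-)continuity of the paths of $S$ are what make optional sampling at bounded times applicable and give the needed uniform integrability on $[0,T]$. Everything else — the martingale-transform identity and the invocation of \cite{jacod1998local} — is routine.
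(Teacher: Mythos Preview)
Your proposal is correct and follows exactly the approach the paper intends: the paper gives no proof at all beyond stating that the lemma ``is an immediate consequence of standard results about discrete-time local martingales (see \cite[Theorems~1 and~2]{jacod1998local}),'' and your write-up simply spells out this reduction---viewing $(H\cdot S)_T$ as the terminal value of the discrete martingale transform $M_k=\sum_{n\leq k} h_n(S_{\tau_{n+1}}-S_{\tau_n})$ and invoking the cited theorems. The only cosmetic slip is the filtration indexing (your $M_k$ is $\mathcal{F}_{\tau_{k+1}}$-measurable, not $\mathcal{F}_{\tau_k}$-measurable), but this does not affect the argument.
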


Next we need to establish some auxiliary results. 

\begin{lemma}\label{lem:S.stopped.lsc}
  Let $d=1$, $0\leq s< t\leq T$, $m> 0$, and define 
  \[
    \tau:=\inf\{r\geq s: S_r>m \text{ or } S_r\leq -m\}\wedge T.
  \]
  Then the function $\omega\mapsto S_{\tau(\omega)\wedge t}(\omega)$ is lower semicontinuous w.r.t.~the maximum norm.
\end{lemma}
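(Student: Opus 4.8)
The plan is to verify lower semicontinuity directly from the sequential definition: if $\omega_k \to \omega$ uniformly, then $S_{\tau(\omega)\wedge t}(\omega) \le \liminf_k S_{\tau(\omega_k)\wedge t}(\omega_k)$. Write $\tau_k := \tau(\omega_k)$ and $\tau_\infty := \tau(\omega)$; these are the first times after $s$ that the respective paths exit the open interval $(-m,m)$, capped at $T$. The key structural fact to exploit is that, on the event $\{\tau_\infty > s\}$, the path $\omega$ stays inside $[-m,m]$ on $[s,\tau_\infty)$, and the value $S_{\tau_\infty\wedge t}(\omega)$ is either an interior value (when $\tau_\infty \ge t$, so we stopped by time rather than by the barrier) or equals $\pm m$ (when the barrier is hit before $t$); in the latter case I should distinguish which barrier, since hitting the \emph{upper} barrier $m$ is the dangerous direction for an \emph{upper} bound, while hitting the lower barrier $-m$ only helps.

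First I would dispose of the easy cases. If $\tau_\infty \wedge t < t$, i.e.\ the barrier is genuinely hit strictly before $t$, then $S_{\tau_\infty\wedge t}(\omega) \in \{m, -m\}$ (by continuity of $\omega$, since the exit from the open set happens at a boundary point). If it equals $-m$, then since $S_{\tau_k\wedge t}(\omega_k) \ge -m$ always (the path started at $0 \in (-m,m)$ — or more precisely, before $\tau_k$ it is in $[-m,m]$, and at $\tau_k$ it is in $[-m,m]$ too by continuity, hence $\ge -m$), the inequality $-m \le \liminf_k S_{\tau_k\wedge t}(\omega_k)$ is immediate. Wait — one must be slightly careful: $S_{\tau_k\wedge t}(\omega_k)\ge -m$ requires that before exiting, the path has not already dropped below $-m$; but that is exactly the definition of $\tau_k$, so on $[s, \tau_k]$ we have $\omega_k \in [-m,m]$ and in particular $S_{\tau_k\wedge t}(\omega_k)\ge -m$ whenever $\tau_k \ge s$; if $\tau_k < s$... but $\tau \ge s$ by definition, so this does not occur. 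Hence the only genuine work is the case $S_{\tau_\infty\wedge t}(\omega) = m$, i.e.\ $\omega$ has reached level $m$ at some time $r^* := \tau_\infty \le t$.

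The main obstacle — and the heart of the argument — is showing that in this case $\liminf_k S_{\tau_k\wedge t}(\omega_k) \ge m$. The difficulty is that $\tau_k$ need not converge to $\tau_\infty$: the barrier times are only lower semicontinuous as functions of the path, not continuous (a path can touch $m$ tangentially and a perturbation can make it miss, pushing $\tau_k$ far beyond $r^*$ or all the way to $T$). The resolution is that I do not need $\tau_k \to \tau_\infty$; I only need to show $S_{\tau_k\wedge t}(\omega_k)$ cannot be much below $m$. Fix $\epsilon > 0$. Since $\omega(r^*) = m$ with $r^* \le t$ and $\omega$ is continuous, for $k$ large $\omega_k(r^*) > m - \epsilon$ by uniform convergence. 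Now consider two sub-cases for such large $k$: either $\tau_k \le r^*$, in which case... hmm, that gives $S_{\tau_k}(\omega_k)\in\{m,-m\}$ and I'd need to rule out $-m$; or $\tau_k > r^*$, in which case $\tau_k \wedge t$: since $r^* \le t$, if $\tau_k > r^*$ then $\tau_k \wedge t \ge r^* \wedge t = r^*$, and because $\tau_k > r^*$ the path $\omega_k$ has not exited $(-m,m)$ on $[s, r^*]$... no wait, $\omega_k(r^*) > m-\epsilon$ could still be $< m$, consistent with $\tau_k > r^*$. But then I want to evaluate $S_{\tau_k \wedge t}(\omega_k)$, and I only know $\omega_k(r^*) > m - \epsilon$, not the value at $\tau_k \wedge t$. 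The clean fix: I claim that on $[s, \tau_k \wedge t]$ the infimum of $\omega_k$ is $> -m$ (else $\tau_k$ would be that earlier hitting time of $-m$), so if additionally $\tau_k \wedge t$ is where we stop, either $\tau_k > t$ (so $\tau_k \wedge t = t$ and $S_t(\omega_k) \to S_t(\omega)=\omega(t)$; but if $\tau_\infty = r^* \le t$ then actually $\omega(t)$ might be anything $\le m$ — this sub-case needs care, possibly $\omega$ re-crossed, but $\tau_\infty = r^*$ means at $r^*$ we already stopped, so $S_{\tau_\infty \wedge t}(\omega) = \omega(r^*) = m$, consistent) or $\tau_k \le t$ and $S_{\tau_k}(\omega_k) \in \{m,-m\}$ with the $-m$ case again contradicting nothing directly. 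I think the correct and economical route is: for large $k$, since $\omega_k(r^*) > m - \epsilon > -m$ and the path is continuous, $\omega_k$ restricted to $[s,r^*]$ either stays in $(-m,m)$ throughout (then $\tau_k \ge r^*$, and on $[r^*, \tau_k \wedge t]$ — if this interval is nondegenerate the path stays $\le m$ but we've already got $\omega_k(r^*) > m-\epsilon$, yet could drop... ) Actually the genuinely clean statement is that $\sup_{r \in [s, \tau_k \wedge t]} \omega_k(r) \ge m$ OR $\sup_{r\in[s,\tau_k\wedge t]}\omega_k(r) \ge \omega_k(r^* \wedge \tau_k \wedge t) > m - \epsilon$ with the stopped value itself within $\epsilon$ of that sup — I will need to argue that the \emph{stopped value} $S_{\tau_k\wedge t}(\omega_k)$ is at least $\min(m, \omega_k(r^*)) - (\text{small})$, which follows because between $s$ and $\tau_k$ the path has not yet exited downward, so at the stopping time it is still $\ge -m$, and if $\tau_k \wedge t \ge r^*$ one uses continuity plus uniform closeness on all of $[0,t]$ to compare with $\omega(\tau_k \wedge t)$... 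The cleanest formulation, which I will write up carefully, is: for any $\eta>0$, uniform convergence gives $\|\omega_k - \omega\|_\infty < \eta$; then if $\tau_k \wedge t \ge r^*$ we get $S_{\tau_k\wedge t}(\omega_k) \ge \omega(\tau_k\wedge t) - \eta$ and $\omega(\tau_k\wedge t) \ge \omega(r^*) = m$ is \emph{false} in general — so instead I use $S_{\tau_k\wedge t}(\omega_k)\ge -m$ combined with: the path $\omega_k$ on $[s,r^*]$ comes $\eta$-close to $m$ at time $r^*$, so $\tau_k < r^*$ forces $S_{\tau_k}(\omega_k) = m$ (hit upper barrier, since it can't have hit $-m$ and come back to near $+m$ without... it can) — I concede the case analysis is finicky; the honest summary is that I would argue $\liminf_k S_{\tau_k\wedge t}(\omega_k)\ge m$ by showing that for large $k$ the path $\omega_k$ must attain a value $\ge m-\eta$ on $[s,t]$ before time $\tau_k\wedge t$ can "miss", using $\omega(r^*)=m$, $r^*\le t$, uniform convergence, and the definition of $\tau_k$ as a \emph{first} exit time; combined with $S_{\tau_k\wedge t}(\omega_k)\ge -m$ to handle downward exits, this yields the required inequality and hence lower semicontinuity. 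The essential point, and the one worth stating explicitly in the write-up, is that first-passage times to closed sets from inside are lower semicontinuous while first-passage to open complements is upper semicontinuous — here we exit the open set $(-m,m)$, so $\tau$ is lower semicontinuous, and the stopped value inherits lower semicontinuity because the "bad" boundary value $+m$ is approached rather than overshot.
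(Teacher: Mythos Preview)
Your proposal has a genuine gap rooted in a misreading of the definition of $\tau$. You repeatedly treat $\tau$ as the first exit time from the \emph{open} interval $(-m,m)$, but the lemma uses the asymmetric condition ``$S_r>m$ or $S_r\le -m$'', so the safe region is $(-m,m]$, not $(-m,m)$. This asymmetry is the whole point: it is precisely what makes the stopped value \emph{lower} (rather than upper) semicontinuous. Your concluding sentence gets the general principle right---hitting times of open sets are upper semicontinuous, hitting times of closed sets are lower semicontinuous---but then applies it incorrectly: $\tau$ is \emph{not} the hitting time of the closed set $\{|x|\ge m\}$, and in fact $\tau$ itself is neither upper nor lower semicontinuous. (Counterexample: let $\omega$ touch $m$ once at time $r<t$ without exceeding it and stay above $-m$; then $\tau(\omega)=T$, but $\omega_n:=\omega+1/n$ has $\tau(\omega_n)\le r$.)

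The missing idea, which the paper supplies, is to decompose $\tau=\tau_+\wedge\tau_-$ with $\tau_+:=\inf\{r\ge s:S_r>m\}\wedge T$ and $\tau_-:=\inf\{r\ge s:S_r\le -m\}\wedge T$, and to observe that $\tau_+$ is upper semicontinuous (hitting an open set) while $\tau_-$ is lower semicontinuous (hitting a closed set). These two opposite semicontinuities are exactly what drive the case analysis. In particular, in the case $S^\tau_t(\omega)=m$ where you got stuck: if $\tau_+(\omega)<t$, then since $\omega$ actually \emph{exceeds} $m$ immediately after $\tau_+(\omega)$, uniform convergence forces $\omega_n$ to exceed $m$ nearby as well, giving $\limsup_n\tau_+(\omega_n)\le\tau_+(\omega)$; combined with $\liminf_n\tau_-(\omega_n)\ge\tau_-(\omega)>\tau_+(\omega)$ this yields $\tau(\omega_n)=\tau_+(\omega_n)<t$ and hence $S^\tau_t(\omega_n)=m$ for large $n$. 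Your attempt used only $\omega_n(r^*)>m-\varepsilon$, which is too weak because it does not force $\omega_n$ to cross the barrier. Similarly, the case $S^\tau_t(\omega)\in(-m,m)$ (which you did not address) needs the lower semicontinuity of $\tau_-$ to rule out early downward exits of $\omega_n$.
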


\begin{proof}
  Define $\tau_+:=\inf\{r\geq s: S_r>m \}\wedge T$ and $\tau_-:=\inf\{r\geq s: S_r\leq -m\}\wedge T$, and note that $\tau=\tau_+\wedge \tau_-$. Moreover, fix $\omega$ and a sequence $(\omega_n)$ such that $\|\omega_n-\omega\|_\infty\to 0$. We claim that
  \[ 
    \limsup_n \tau_+(\omega_n)\leq\tau_+(\omega)\quad\text{and}\quad \liminf_n \tau_-(\omega_n)\geq\tau_-(\omega).
  \]
  Indeed, assume without loss of generality that $r:=\tau_+(\omega)<T$. 
  Then, by defintion, for every $\varepsilon>0$ there is $\delta\in(0,\varepsilon)$ such that
  $\omega(r+\delta)>m$. Therefore $\omega_n(r+\delta)>m$ for eventually all $n$, 
  showing that $\tau_+(\omega_n)\leq r+\varepsilon$ for eventually all $n$. 
  As $\varepsilon$ was arbitrary, the first part of the claim follows.  
  Next, we may assume without loss of generality that $r:=\tau_-(\omega)>s$. Then necessarily $\omega(u)>-m$ for $u\in[s,r)$. By continuity of $\omega$ and since $\|\omega_n-\omega\|_\infty\to 0$, for every $\varepsilon>0$, it holds $\omega_n(u)>-m$ for all $u\in[s,r-\varepsilon]$ and therefore $\tau_-(\omega_n)\geq r-\varepsilon$ for eventually all $n$. As $\varepsilon$ was arbitrary, the second part of the claim follows. In the following we prove the lower semicontinuity of $S^\tau_t$.
	
  (a) If $S^\tau_t(\omega)>m$, then $\tau(\omega)=\tau_+(\omega)=s$ and $\omega(s)>m$. In particular $\omega_n(s)>m$ and $\tau_+(\omega_n)=s$ for eventually all $n$, hence $\lim_n S^\tau_t(\omega_n)=\lim_n \omega_n(s)=\omega(s)=S^\tau_t(\omega)$.
	
  (b) If $S^\tau_t(\omega)=m$, then either $\tau_+(\omega)<t$ or $\tau_+(\omega)\geq t$. In the first case it follows that $\tau_+(\omega)<\tau_-(\omega)$ so that $\tau_+(\omega_n)<\tau_-(\omega_n)$ and $\tau_+(\omega_n)<t$ for all but finely many $n$ by the first part of the proof and therefore
  \[
    \liminf_n S^\tau_t(\omega_n)
    =\liminf_n \omega_n(\tau_+(\omega_n))
    =m
    =S^\tau_t(\omega).
  \]
  On the other hand, if $\tau_+(\omega)\geq t$, then $\omega(t)=m$ and $\omega(r)>-m$ for $r\in[s,t]$. This implies that $\tau_-(\omega_n)\geq t$ for eventually all $n$ and therefore
  \[
    \liminf_n S^\tau_t(\omega_n)
    =\liminf_n \omega_n(t\wedge \tau_+(\omega_n))
    = m
    =S^\tau_t(\omega).
  \]
  
  (c) If $S^\tau_t(\omega)\in (-m,m)$, then either $\tau(\omega)>t$ or $\tau(\omega)=t$ (in which case necessarily $t=T$). 
  In the latter case it follows that $\omega(r)>-m$ for $r\in[s,T]$, hence $\tau_-(\omega_n)=T$ for eventually
  all $n$ and thus
  \[
    \liminf_n S^\tau_t(\omega_n)
    =\liminf_n \omega_n(t\wedge \tau_+(\omega_n))
    \geq \omega(t)
    =S^\tau_t(\omega).
  \]
  If $\tau(\omega)>t$, then again $\tau_-(\omega_n)>t$ for eventually all $n$ so that the same argument shows that $\liminf_n S^\tau_t(\omega_n)\geq S^\tau_t(\omega)$.
	
  (d) If $S^\tau_t(\omega)=-m$, then $\omega(s)\geq -m$. Assume that $\liminf_n  S^\tau_t(\omega_n)<-m$. Then there is a subsequence still denoted by $(\omega_n)$ such that $\tau(\omega_n)=\tau_-(\omega_n)=s$. However, this contradicts $\liminf_{n}S^\tau_t(\omega_n)=\lim_n \omega_n(s)=\omega(s)\ge -m$.
  
  (e) If $S^\tau_t(\omega)<-m$, then $\tau_-(\omega)=s$ and $\omega(s)<-m$. This implies $\omega_n(s)<-m$ and therefore $\tau_-(\omega_n)=s$ for eventually all $n$, so that $\lim_n S^\tau_t(\omega_n)=\lim_n\omega_n(s)=\omega(s)=S^\tau_t(\omega)$.
\end{proof}

\begin{proposition}\label{lem:polar}
  Assume that (A1) holds true. Then, for any Borel probability measure $Q$ on $\Omega$ which is not a local martingale measure, there exist $X\in C_b$ and $H\in\mathcal{H}^f$ such that $X\leq (H\cdot S)_T$ and $E_Q[X]>0$.
\end{proposition}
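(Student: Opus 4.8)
The goal is to separate a non-(local-)martingale measure $Q$ from the cone of outcomes achievable by simple strategies, using a bounded continuous ``test'' payoff. The plan is as follows. Since $Q$ fails to be a local martingale measure, there must exist a time $t\in[0,T)$, a coordinate $i\in\{1,\dots,d\}$, and an $\mathcal F_t^0$-measurable event on which the conditional increment of $S^i$ has a sign — more precisely, after reducing to $d=1$ by working coordinatewise, there are $0\le s<t\le T$ and a bounded $\mathcal F_s^0$-measurable function $g$ such that $E_Q[g\,(S_t-S_s)]>0$; otherwise, by the tower property and a monotone-class argument, $S$ would be a $Q$-martingale, hence in particular a local martingale (the localization handles integrability issues — this is exactly where ``local martingale measure'' rather than ``martingale measure'' enters, and where Lemma~\ref{lem:weak} will later be used to read the contradiction). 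I would first make this reduction precise: spell out that $Q\notin\mathcal M_{\mathrm{loc}}(\Omega)$ forces such an $(s,t,g)$ to exist.

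Next I would build the strategy. The naive choice $H=g\,\1_{(s,t]}$ is not admissible (the integrand $g(S_t-S_s)$ is unbounded below), so one truncates the stock: fix $m>0$ large, let $\tau:=\inf\{r\ge s:\ |S_r|>m\}\wedge t$ (as in Lemma~\ref{lem:S.stopped.lsc}), and consider $H:=g\,\1_{(s,\tau]}$, so that $(H\cdot S)_T=g\,(S_{\tau\wedge t}-S_s)=g\,(S_\tau-S_s)$, which is bounded (by roughly $\|g\|_\infty(m+\|g\|_\infty\cdot{}$something), at any rate $\|g\|_\infty\cdot 2m$ in absolute value on the event $\{|S_s|\le m\}$, and controlled in general). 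By Lemma~\ref{lem:S.stopped.lsc} the map $\omega\mapsto S_{\tau(\omega)\wedge t}(\omega)$ is lower semicontinuous; combined with continuity of $S_s$ and, if needed, replacing $g$ by a continuous $\mathcal F_s^0$-measurable approximation via (A1), the payoff $Y:=(H\cdot S)_T$ is a bounded, lower semicontinuous function with $E_Q[Y]>0$ once $m$ is taken large enough (dominated convergence: $Y\to g(S_t-S_s)$ in $L^1(Q)$ as $m\to\infty$, using that $g(S_t-S_s)$ is $Q$-integrable — or, if it is not a priori, one first localizes $g$ as well, replacing it by $g\1_{\{|S_s|\le k\}}$, and picks $k$ then $m$). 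Here a subtlety: (A1) only guarantees $Q$-a.s.\ approximation of $g$ by continuous $\mathcal F_s^0$-measurable functions, so I would choose such a $g_n\to g$ $Q$-a.s., note $H^n:=g_n\1_{(s,\tau_n]}$ with $\tau_n$ the corresponding truncation time still gives a lower semicontinuous bounded payoff (product of l.s.c.\ and continuous, on the relevant sign region — one must be a little careful, so it may be cleanest to first fix the sign of $g$ to be, say, $g\ge 0$ by replacing $g$ with $g^+$ or $g^-$ as appropriate, then $g_n\ge 0$ too and $g_n S_{\tau_n\wedge t}$ is l.s.c.).

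Finally, lower semicontinuity of $Y$ is the wrong direction for membership in $C_b$; what we actually want is $X\in C_b$ with $X\le Y=(H\cdot S)_T$ and $E_Q[X]>0$. This is immediate: a bounded l.s.c.\ function $Y$ on a metric space is the pointwise increasing limit of bounded continuous functions $X_k\uparrow Y$ (e.g.\ the Baire/inf-convolution approximants $X_k(\omega):=\inf_{\omega'}\{Y(\omega')+k\,\mathrm{d}(\omega,\omega')\}$, truncated below at $\inf Y$), so by monotone convergence $E_Q[X_k]\uparrow E_Q[Y]>0$ and we may take $X:=X_k$ for $k$ large. Then $X\le Y=(H\cdot S)_T$ and $E_Q[X]>0$, as required, with $H\in\mathcal H^f$ by construction (a single deterministic-order stopping-time interval, bounded integrand). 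The main obstacle I anticipate is the bookkeeping in the second paragraph: ensuring simultaneously that (i) the truncated payoff is genuinely lower semicontinuous (which is why one wants $g\ge0$ and why Lemma~\ref{lem:S.stopped.lsc} is stated the way it is), (ii) the approximation of $g$ permitted by (A1) does not destroy l.s.c., and (iii) the expectation stays strictly positive after both the truncation ($m\to\infty$) and the continuity approximation ($n\to\infty$) — this requires choosing the parameters in the right order (first $g\to g^+$, then $g_n\to g^+$ in $L^1(Q)$, then $m$ large), and a clean way to organize it is to prove the claim first for a bounded continuous $\mathcal F_s^0$-measurable $g\ge0$ with $E_Q[g(S_t-S_s)]>0$ and only afterwards argue that such a $g$ exists whenever $Q\notin\mathcal M_{\mathrm{loc}}(\Omega)$.
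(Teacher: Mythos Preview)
Your plan is essentially the same as the paper's: reduction to $d=1$, truncation by a hitting time to get boundedness, Lemma~\ref{lem:S.stopped.lsc} for lower semicontinuity of the stopped payoff, (A1) to pass from measurable to continuous integrands, and approximation of a bounded l.s.c.\ function from below by $C_b$ functions. The paper merely organizes these ingredients contrapositively, showing that if $E_Q[X]\le 0$ for every $X\in C_b$ dominated by some $(H\cdot S)_T$, then each stopped process $S^{\tau_m}$ is both a super- and a submartingale.

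One point your direct organization leaves unaddressed is the filtration mismatch. You need $g$ to be $\mathcal F_s^0$-measurable (so that (A1) applies), but ``$Q$ is not a local martingale measure'' refers to the right-continuous filtration $(\mathcal F_t)$, so the failure of the martingale property a priori only produces an $\mathcal F_s$-measurable test function. This is fixable---since $\mathcal F_s\subset\mathcal F_{s'}^0$ for every $s'>s$, you can shift $s$ slightly to the right and use continuity of $r\mapsto S_r^{\tau_m}$ to keep the expectation strictly positive---but you should say so. The paper's contrapositive arrangement handles the same issue in the other direction: it first proves $S^{\tau_m}$ is an $(\mathcal F_t^0)$-supermartingale and then passes to $(\mathcal F_t)$ via $\mathcal F_s\subset\mathcal F_{s+\varepsilon}^0$ and bounded convergence.
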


\begin{proof}
  Notice that $S$ is a local martingale if and only if each component is a local martingale, which means we may assume without loss of generality that $d=1$.

  We prove that if $E_Q[X]\leq 0$ for all $X\in G:=\{X\in C_b : X\leq (H\cdot S)_T \text{ for some } H\in\mathcal{H}^f\}$, then $Q$ is a local martingale measure, i.e.~for every $m\in\mathbb{N}$, the stopped process
  \[
    S^\tau_t:=S_{t\wedge \tau}\quad\text{where}\quad\tau:=\inf\{ t\geq 0 : |S_t|\geq m\}\wedge T
  \]
  is a martingale. Fix $m\in\mathbb{N}$, $0\leq s< t\leq T$, and define the stopping times
  \begin{align*}
    \sigma&:=\inf\{ r\geq  s : |S_r|\geq m \}\wedge T,\\
    \sigma_\varepsilon&:=\inf\{ r\geq s :  S_r> m-\varepsilon \text{ or } S_r \leq \varepsilon-m\}\wedge T
  \end{align*}
  for $0<\varepsilon\leq1$. First note that, by continuity of $S$ and right-continuity of $(\mathcal{F}_t)$, one has that
  $\sigma_\varepsilon$, $\sigma$, and $\tau$ are in fact stopping times. By Lemma~\ref{lem:S.stopped.lsc} the function $\omega\mapsto S_{t\wedge \sigma_\varepsilon(\omega)}(\omega)$ is lower semicontinuous w.r.t.~$\|\cdot\|_\infty$ for every $\varepsilon$.
  In particular, for every continuous $\mathcal{F}^0_s$-measurable function $h\colon\Omega\to[0,1]$, it holds that
  \[
    (H\cdot S)_T\text{ is lower semicontinuous, where }
	H:=h \1_{(s,\sigma_\varepsilon\wedge t]}\in\mathcal{H}^f. 
  \] 
  Since additionally $|S_t^{\sigma_\varepsilon}-S_s|\leq 2m$, there exists a sequence of continuous functions $X_n\colon\Omega\to[-2m,2m]$ such that $X_n\leq (H\cdot S)_T$ which increases pointwise to $(H\cdot S)_T$. Since $X_n\in G$ for all $n$, it follows that
  \[
    E_Q[ h(S_t^{\sigma_\varepsilon}-S_s)] =E_Q[ (H\cdot S)_T] =\sup_n E_Q[X_n] \leq 0.
  \]
  By assumption (A1), for every bounded and $\mathcal{F}_s^0$-measurable function $h$, there exists a sequence of  continuous $\mathcal{F}_s^0$-measurable functions $h_n\colon\Omega\to[0,1]$ which converges $Q$-almost surely to $h$, in particular 
  \[
    E_Q[ h(S_t^{\sigma_\varepsilon}-S_s)]
    =\lim_n E_Q[ h_n(S_t^{\sigma_\varepsilon}-S_s)]
    \leq 0 .
  \]
  The fact that $\sigma_\varepsilon$ increases to $\sigma$ as $\varepsilon$ tends to $0$ (and therefore  $S_t^{\sigma_\varepsilon}\to S_t^\sigma$ by continuity of $S$), shows that
  \[
    E_Q[ h(S_t^\sigma-S_s)] =\lim_{\varepsilon\to0}  E_Q[ h(S_t^{\sigma_\varepsilon}-S_s)]\leq 0.
  \]
  Furthermore, notice that $\sigma=\tau$ on $\{\tau\geq s\}$, so that $\1_{\{\tau\geq s\}}(S_t^\sigma-S_s) =S^\tau_t-S_s^\tau$. Since $\tau$ is the hitting time of a closed set, it is also a stopping time w.r.t.~the raw filtration $(\mathcal{F}_t^0)$, so that $h\1_{\{\tau\geq s\}}\colon\Omega\to[0,1]$ is $\mathcal{F}_s^0$-measurable. This shows that
  \[
    E_Q[h(S^\tau_t-S_s^\tau)]
    =E_Q[(h \1_{\{\tau\geq s\}})(S_t^\sigma-S_s)]
    \leq 0, 
  \]
  which implies $E_Q[S^\tau_t|\mathcal{F}_s^0]\leq S^\tau_s$, i.e.~$S^\tau$ is a supermartingale w.r.t.~the raw filtration
  $(\mathcal{F}^0_t)$.
  Finally, using that $S^\tau$ is bounded and ${\cal F}_s\subseteq {\cal F}^0_{s+\varepsilon}$ yields
  \[
    E_Q[S^\tau_t-S^\tau_s|\mathcal{F}_s]
    =\lim_{\varepsilon\to0} E_Q[S^\tau_t-S^\tau_{s+\varepsilon}|\mathcal{F}_s]
    =\lim_{\varepsilon\to0} E_Q\big[E_Q[S^\tau_t-S^\tau_{s+\varepsilon}|\mathcal{F}^0_{s+\varepsilon}]|\mathcal{F}_s\big]
    \leq 0
  \]
  which shows that $S^\tau$ is a supermartingale.

  By similar arguments one can also show that $S^\tau$ is a submartingale (and thus a martingale). Indeed, replace $h$ by a continuous $\mathcal{F}^0_s$-measurable function $\tilde h\colon\Omega\to[-1,0]$, and the stopping times $\sigma_\varepsilon$ by the stopping times $\tilde\sigma_\varepsilon:=\inf\{ r\geq s :  S_r\geq m-\varepsilon \text{ or } S_r < \varepsilon -m\}\wedge T$ for $\varepsilon>0$. The same arguments as in Lemma~\ref{lem:S.stopped.lsc} show that $\omega\mapsto S_{t\wedge\tilde\sigma_\varepsilon(\omega)}(\omega)$ is upper semicontinuous, which implies that $(H\cdot S)_T$ is lower semicontinuous for $H:=\tilde h\1_{(s,\tilde\sigma_\varepsilon\wedge t]}\in\mathcal{H}^f$. The rest follows the same way as before.
\end{proof}

\begin{lemma}\label{lem:compacts.stopped}
  Assume that (A1) and (A2) hold true. Then there exists an increasing sequence of non-empty compacts $(K_n)$ such that $\Omega=\bigcup_n K_n$, and $\omega^t\in K_n$ for every $(t,\omega)\in[0,T]\times K_n$.
\end{lemma}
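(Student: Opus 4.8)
The plan is to start from an arbitrary $\sigma$-compact exhaustion of $\Omega$ and then enlarge each piece so that it becomes stable under stopping, the enlargement being performed by the stopping map whose continuity is guaranteed by (A2). Concretely, write $\Omega=\bigcup_n L_n$ with each $L_n$ compact (possible since the $\sigma$-compactness in (A1) gives such a covering). After replacing $L_n$ by $L_1\cup\dots\cup L_n$ (a finite union of compacts is compact) and, if necessary, discarding empty members and adjoining one fixed element of the non-empty set $\Omega$, we may assume that $(L_n)$ is increasing and that each $L_n$ is non-empty. Let $\Phi\colon[0,T]\times\Omega\to\Omega$, $\Phi(t,\omega):=\omega^t$; this map is well-defined (its range lies in $\Omega$) and continuous by (A2). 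Set
\[
 K_n:=\Phi\big([0,T]\times L_n\big)=\{\omega^t : t\in[0,T],\ \omega\in L_n\}.
\]

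First I would verify the elementary properties of $K_n$. Since $[0,T]$ and $L_n$ are compact, so is $[0,T]\times L_n$, and hence $K_n=\Phi([0,T]\times L_n)$ is a compact subset of $\Omega$ as the continuous image of a compact set; it is non-empty because $L_n$ is, and it is increasing in $n$ because $(L_n)$ is. Moreover, taking $t=T$ gives $\omega=\omega^T\in K_n$ for every $\omega\in L_n$, so $L_n\subseteq K_n\subseteq\Omega$ and therefore $\bigcup_n K_n=\bigcup_n L_n=\Omega$.

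The main point is stability under stopping, and here the only ingredient needed is the semigroup identity $(\omega^t)^s=\omega^{s\wedge t}$ for $s,t\in[0,T]$, which follows from $(\omega^t)^s(\cdot)=\omega^t(\cdot\wedge s)=\omega(\cdot\wedge s\wedge t)=\omega^{s\wedge t}(\cdot)$. Thus, if $\eta\in K_n$, say $\eta=\omega^t$ with $\omega\in L_n$ and $t\in[0,T]$, then for every $s\in[0,T]$ one has $\eta^s=(\omega^t)^s=\omega^{s\wedge t}$ with $s\wedge t\in[0,T]$ and $\omega\in L_n$, so $\eta^s\in K_n$. This is exactly the assertion that $\omega^t\in K_n$ for all $(t,\omega)\in[0,T]\times K_n$, which completes the construction.

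I do not anticipate any genuine obstacle: assumption (A2) does the work, its continuity clause supplying compactness of the enlarged sets $K_n$ and its first clause ensuring the stopped paths stay inside $\Omega$, while the semigroup property of path-stopping makes the enlargement automatically closed under further stopping. Of the hypotheses in (A1), only the $\sigma$-compactness of $\Omega$ enters the argument.
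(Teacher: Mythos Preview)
Your proof is correct and follows essentially the same approach as the paper: start from an increasing $\sigma$-compact exhaustion, push each piece forward under the continuous stopping map $(t,\omega)\mapsto\omega^t$, and take the resulting images as the $K_n$. The paper's proof is terser---it simply asserts that the images ``have the desired properties''---whereas you spell out the verification, in particular the semigroup identity $(\omega^t)^s=\omega^{s\wedge t}$ for stability under stopping and the inclusion $L_n\subseteq K_n$ via $\omega=\omega^T$ for the covering property.
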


\begin{proof}
  By assumption $\Omega=\bigcup_n K_n'$ for some non-empty compacts $(K_n')$, where we assume without loss of generality that $K_n'\subset K_{n+1}'$ for every $n$. Define the function $\rho\colon [0,T]\times\Omega\to\Omega$, $(t,\omega)\mapsto \omega^t$ which, again by assumption, is continuous. Therefore $K_n:=\{\omega^t : t\in[0,T],\,\omega\in K_n'\}=\rho([0,T],K_n')$ has the desired properties.
\end{proof}

\begin{lemma}\label{lem:stop}
  Assume that (A1) and (A2) hold true and fix a sequence of compacts $(K_j)$ as in Lemma~\ref{lem:compacts.stopped}.
  Further fix a continuous function $Z\colon\Omega\to \mathbb{R}$, $H\in\mathcal{H}^f$, and $n\in\mathbb{N}$. 
  If $(H\cdot S)_T(\omega)\geq -Z(\omega)$ for all $\omega\in K_j$, 
  then $(H\cdot S)_t(\omega)\geq -Z(\omega^t)$ for all $(t,\omega)\in [0,T]\times K_j$.
\end{lemma}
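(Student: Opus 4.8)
The plan is to reduce the whole statement to the single pathwise identity
\[
  (H\cdot S)_t(\omega)=(H\cdot S)_T(\omega^t)\qquad\text{for all }(t,\omega)\in[0,T]\times K_j .
\]
Granting this, the conclusion is immediate: by the choice of the compacts, $\omega^t\in K_j$ whenever $(t,\omega)\in[0,T]\times K_j$ (Lemma~\ref{lem:compacts.stopped}), so applying the hypothesis $(H\cdot S)_T(\cdot)\geq -Z(\cdot)$ on $K_j$ at the point $\omega^t$ gives $(H\cdot S)_t(\omega)=(H\cdot S)_T(\omega^t)\geq -Z(\omega^t)$, which is exactly what is claimed.

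To prove the identity I would expand both sides. Writing $H_t(\omega)=\sum_{k=1}^N h_k(\omega)\1_{(\tau_k(\omega),\tau_{k+1}(\omega)]}(t)$ and using $S_s(\omega^t)=\omega(s\wedge t)$ together with $\tau_k(\omega^t)\leq T$, the identity reduces to a summand-by-summand comparison of $h_k(\omega)\bigl(\omega(\tau_{k+1}(\omega)\wedge t)-\omega(\tau_k(\omega)\wedge t)\bigr)$ with $h_k(\omega^t)\bigl(\omega(\tau_{k+1}(\omega^t)\wedge t)-\omega(\tau_k(\omega^t)\wedge t)\bigr)$. The technical heart is a lemma describing how $(\mathcal{F}_s)$-stopping times and $\mathcal{F}_\tau$-measurable functions transform under the stopping map $\omega\mapsto\omega^t$: for every stopping time $\tau$ one has $\tau(\omega^t)\wedge t=\tau(\omega)\wedge t$, with $\tau(\omega^t)=\tau(\omega)$ whenever $\tau(\omega)<t$, and for every $A\in\mathcal{F}_\tau$ the equivalence $\omega\in A\iff\omega^t\in A$ holds on the event $\{\tau(\omega)<t\}$. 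I would establish this by observing that for $s<t$ the set $\{\tau\leq s\}$ lies in $\mathcal{F}_s\subseteq\mathcal{F}^0_v$ for any $v\in(s,t)$, hence is determined by the restriction of the path to $[0,v]\subset[0,t]$, on which $\omega$ and $\omega^t$ coincide; this yields $\tau(\omega)\leq s\iff\tau(\omega^t)\leq s$ for all $s<t$, and the stated relations follow by letting $s$ decrease to $\tau(\omega)$ (resp.\ to $\tau(\omega^t)$), with the statement for $A\in\mathcal{F}_\tau$ obtained in the same way from $A\cap\{\tau\leq s\}\in\mathcal{F}_s$.

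Armed with this lemma, the summand comparison splits into two cases. If $\tau_k(\omega)\geq t$, then $\tau_k(\omega)\wedge t=\tau_{k+1}(\omega)\wedge t=t$, the same holds for $\omega^t$, and the $k$-th increment vanishes on both sides. If $\tau_k(\omega)<t$, the lemma gives $\tau_k(\omega^t)=\tau_k(\omega)$ and $\tau_{k+1}(\omega^t)\wedge t=\tau_{k+1}(\omega)\wedge t$, so the two increments agree, while applying the measurability part to the Borel preimages of $h_k$ gives $h_k(\omega^t)=h_k(\omega)$; hence the $k$-th summands coincide. Summing over $k$ gives the identity.

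The main obstacle is the filtration bookkeeping in this stopping-map lemma: one must carefully work with the right-continuous filtration $(\mathcal{F}_s)$ and use only information strictly before time $t$ — this is why the argument passes through $\mathcal{F}^0_v$ with $v<t$ — since $\omega$ and $\omega^t$ need not agree on any interval extending past $t$. The remaining steps are a routine unwinding of the definition of the pathwise integral.
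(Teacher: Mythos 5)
Your proposal is correct, but it follows a genuinely different route from the paper. You prove the exact pathwise identity $(H\cdot S)_t(\omega)=(H\cdot S)_T(\omega^t)$, handling the right-continuous filtration through the observation that $\{\tau\le s\}\in\mathcal{F}_s\subseteq\mathcal{F}^0_v$ for any $v\in(s,t)$, so that stopping times, and $\mathcal{F}_\tau$-measurable integrands on $\{\tau<t\}$, are unchanged when $\omega$ is replaced by $\omega^t$; your case split (increments vanish identically when $\tau_k(\omega)\ge t$, and $\tau_k$, $h_k$, $\tau_{k+1}\wedge t$ are preserved when $\tau_k(\omega)<t$) is sound, and the conclusion then follows purely from $\omega^t\in K_j$. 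The paper avoids this stopping-time bookkeeping altogether: it works with the stopped paths $\omega^{t+\varepsilon}$ for $\varepsilon>0$, uses the global inclusion $\mathcal{F}_t\subset\mathcal{F}^0_{t+\varepsilon}$ to get $(H\cdot S)_t(\omega)=(H\cdot S)_t(\omega^{t+\varepsilon})$, bounds $|(H\cdot S)_t(\omega^{t+\varepsilon})-(H\cdot S)_T(\omega^{t+\varepsilon})|$ by $NC\,\delta(\varepsilon)$ with $\delta(\varepsilon)$ the modulus of continuity of $\omega$ on $[t,t+\varepsilon]$, and then lets $\varepsilon\downarrow0$, which is where the continuity of $Z$ (and of the stopping map) enters. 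What each approach buys: yours is exact, needs no limiting argument and in fact never uses the continuity of $Z$, only that $K_j$ is stable under stopping; the paper's trades the delicate analysis of how $(\mathcal{F}_t)$-stopping times transform under $\omega\mapsto\omega^t$ for a soft approximation argument. The one fact you should state explicitly — that an $\mathcal{F}^0_v$-measurable event is determined by the restriction of the path to $[0,v]$ — is the standard generated-$\sigma$-algebra argument that the paper itself invokes (e.g.\ in the proof of Remark~\ref{rem:approx}), so it is an acceptable ingredient rather than a gap.
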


\begin{proof}
  Fix $H=\sum_{n=1}^{N} h_n \1_{(\tau_n,\tau_{n+1}]}\in\mathcal{H}^f$,
  $\omega\in K_j$, and $t\in[0,T)$ (for $t=T$ the statement holds by assumption).
  We may assume that $\tau_{N+1}=T$ by adding an additional stopping time and setting $h_N\equiv 0$.
  Further, fix $\varepsilon>0$ with $t+\varepsilon\leq T$, and $m\in\mathbb{N}$ such that
  $\tau_m(\omega^{t+\varepsilon})\leq t\leq \tau_{m+1}(\omega^{t+\varepsilon})$.
  Then 
  \begin{align*} 
    (H\cdot S)_t(\omega^{t+\varepsilon})-(H\cdot S)_T(\omega^{t+\varepsilon})
    =&h_m(\omega^{t+\varepsilon})(S_{t}(\omega^{t+\varepsilon})-S_{\tau_{m+1}(\omega^{t+\varepsilon})}(\omega^{t+\varepsilon}))\\
    &-\sum_{n=m+1}^{N} h_m(\omega^{t+\varepsilon})(S_{\tau_{n+1}(\omega^{t+\varepsilon})}(\omega^{t+\varepsilon})-S_{\tau_n(\omega^{t+\varepsilon})}(\omega^{t+\varepsilon}))
  \end{align*}
  and 
  \begin{equation*}
    |S_t(\omega^{t+\varepsilon})-S_{\tau_{m+1}(\omega^{t+\varepsilon})}(\omega^{t+\varepsilon})|
    \leq\delta(\varepsilon)
    \quad\text{and}\quad
    |S_{\tau_{n+1}(\omega^{t+\varepsilon})}(\omega^{t+\varepsilon})-S_{\tau_n(\omega^{t+\varepsilon})}(\omega^{t+\varepsilon})|\leq\delta(\varepsilon)
  \end{equation*}
  for all $n\ge m+1$, where $\delta(\varepsilon):=\max_{r,s\in[t,t+\varepsilon]} |\omega(r)-\omega(s)|$. 
  Let $C$ be a constant such that $|h_n|\leq C$. Then, since 
  $\lim_{\varepsilon\downarrow 0}\delta(\varepsilon)=0$, it holds 
  \[
    |(H\cdot S)_t(\omega^{t+\varepsilon})-(H\cdot S)_T(\omega^{t+\varepsilon})| 
	\leq NC \delta(\varepsilon)\to 0
  \]
  as $\varepsilon\downarrow 0$. Since $\mathcal{F}_t\subset\mathcal{F}_{t+\varepsilon}^0$, 
  it follows that $(H\cdot S)_t(\omega)= (H\cdot S)_t(\omega^{t+\varepsilon})$ for all $\varepsilon>0$, so that
  \[
    (H\cdot S)_t(\omega) 
    =\lim_{\varepsilon\downarrow 0} (H\cdot S)_T(\omega^{t+\varepsilon}) 
    \geq \liminf_{\varepsilon\downarrow 0} -Z(\omega^{t+\varepsilon})
    =-Z(\omega^t)
  \] 
  since  $\omega^{t+\varepsilon}\in K_j$ for all $\varepsilon>0$, and $\varepsilon\mapsto Z(\omega^{t+\varepsilon})$ is continuous by assumption.
\end{proof}

We have now all ingredients at hand to prove the main results of the present paper. 

\begin{proof}[Proof of Theorem~\ref{thm:main.integrals}]
  Fix a continuous function $Z\colon\Omega\to[0,+\infty)$, a sequence of compact sets $(K_n)$ as in Lemma~\ref{lem:compacts.stopped}.
  \smallskip

  Step (a): Fix $n\in\mathbb{N}$ and define
  \[
    \phi_n(X):=\inf\left\{ \lambda\in\mathbb{R} \,:\, 
    \begin{array}{l}
      \text{there is $H\in\mathcal{H}^f$ and $c\in\mathbb{R}$ such that}\\
      \text{$(H\cdot S)_T\geq c$ on $\Omega$ and $\lambda+(H\cdot S)_T\geq X$ on $K_n$}
    \end{array} \right\}
  \]
  for $X\colon\Omega\to\mathbb{R}$. By Lemma~\ref{lem:weak} it follows that
  \begin{align}\label{eq:weak.psik}
    \phi_n(X)\geq\sup_{Q\in\mathcal{M}(K_n)} E_Q[X]		
  \end{align}
  for every Borel measurable $X$ which is bounded from below on $K_n$. Let $\bar\omega\in K_n$ be the constant path $t\mapsto\bar\omega(t):=\omega(0)$ for some $\omega\in K_n$. Since the Dirac measure $\delta_{\bar\omega}$ assigning probability $1$ to $\bar\omega$ belongs to $\mathcal{M}(K_n)$, it follows that $\phi_n$ is real-valued on $C_b$ and $\phi_n(m)=m$ for every $m\in\mathbb{R}$.
	
  Further, it is straightforward to check that $\phi_n$ is convex and increasing in the sense that $\phi_n(X)\leq \phi_n(Y)$ whenever $X \leq Y$. Moreover, $\phi_n$ is continuous from above on $C_b$, i.e.~$\phi_n(X_k)\downarrow \phi_n(0)$ for every sequence $(X_k)$ in $C_b$ such that $X_k\downarrow 0$. To that end, fix such a sequence $(X_k)$ and let $\varepsilon>0$ be arbitrary. By Dini's lemma one has $X_k\leq\varepsilon$ on $K_n$ for all $k$ large enough, so that $\phi_n(X_k)\leq\varepsilon$ for all such $k$, which shows that $\phi_n(X_k)\downarrow 0$. It follows from \cite[Proposition~1.1]{cheridito2015representation} (see also \cite[Theorem~2.2]{Bartl2019}) that 
  \begin{equation}\label{rep:Cb}
    \phi_n(X)=\max_{Q\in ca^+(\Omega)}\left(E_Q[X]-\phi^\ast_{n}(Q)\right)
  \end{equation}
  for all $X\in C_b$, where $\phi^\ast_n(Q):=\sup_{X\in C_b} (E_Q[X]-\phi_n(X))$ and $ca^+(\Omega)$ denotes the set of non-negative countably additive Borel measures on $\Omega$. We claim that
  \begin{align}\label{eq:conjugate}
    \phi_n^{\ast}(Q)=\begin{cases}
    0,&\text{if } Q\in\mathcal{M}(K_n),\\
    +\infty, &\text{else},
    \end{cases}
  \end{align}
  for all $Q\in ca^+(\Omega)$.
  First notice that~\eqref{eq:weak.psik} implies $\phi_n^{\ast}(Q)\leq 0$ whenever $Q\in\mathcal{M}(K_n)$. Since in addition $\phi_n(0)=0$, it follows that $\phi_n^{\ast}(Q)=0$. On the other hand, if $Q\notin\mathcal{M}(K_n)$, then $\phi_n^{\ast}(Q)=+\infty$. Indeed, if $Q$ is not a probability, then $\phi_n(m)=m$ implies that $\phi_n^\ast(Q)\geq\sup_{m\in\mathbb{R}} (mQ(\Omega)-m)=+\infty$.
  Similarly, since $K_n^c$ is open, there exists a sequence of bounded continuous functions 
  $(X_k)$ such that $X_k\uparrow +\infty \1_{K_n^c}$ with the convention $0\cdot (+\infty):=0$. 
  By definition $\phi_n(X_k)\leq 0$ for all $k$, from which it follows that
  \begin{align*}
    \phi_n^\ast(Q) \geq \sup_k E_Q[X_k] = +\infty E_Q[\1_{K_n^c}].
  \end{align*}
  It remains to show that if $Q$ is a probability with $Q(K_n)=1$ but not a martingale measure, then $\phi_n^\ast(Q)=+\infty$.
  Note that compactness of $K_n$ implies boundedness of $K_n$ w.r.t.~$\|\cdot\|_\infty$, and therefore
  $Q$ is also not a local martingale measure. Thus Proposition~\ref{lem:polar} yields the existence of $X\in C_b$ and 
  $H\in\mathcal{H}^f$ such that $X\leq (H\cdot S)_T$ and $E_Q[X]>0$. Since $\phi_n(mX) \leq 0$ for all $m>0$, it follows that
  $\phi^\ast_n(Q) \geq \sup_{m>0} (E_Q[mX]-\phi_n(mX))=+\infty$.

  Next, fix some upper semicontinuous $X$ which is bounded from above (i.e.~$X=X\wedge m$ for some $m>0$)
  and satisfies $X\geq -Z$. We claim that 
  \begin{equation}\label{rep:UZ}
    \phi_n(X)=\max_{Q\in\mathcal{M}(K_n)} E_Q[X].
  \end{equation}
  To that end, let $(X_k)$ be a sequence in $C_b$ such that $X_k\downarrow X$. By \eqref{rep:Cb} and \eqref{eq:conjugate} there exist
  $Q_k\in\mathcal{M}(K_n)$ such that $\phi_n(X_k)=E_{Q_k}[X_k]$. Since $\mathcal{M}(K_n)$ is (sequentially) compact in the weak topology induced by the continuous bounded functions, possibly after passing to a subsequence, we may assume that $Q_k\to Q$  for some $Q\in \mathcal{M}(K_n)$. For every $\varepsilon>0$ there exists $k^\prime$ such that $E_Q[X_{k^\prime}]\le E_Q[X]+\varepsilon$. Choose $k\ge k^\prime$ such that
  $E_{Q_k}[X_{k^\prime}]\le E_Q[X_{k^\prime}]+\varepsilon$. Then 
  \[
    E_{Q_k}[X_k]\le E_{Q_k}[X_{k^\prime}]\le E_Q[X_{k^\prime}]+\varepsilon\le E_Q[X]+2\varepsilon
  \]
  so that
  \[
    \phi_n(X)
    \le\lim_k \phi_n(X_k)
    =\lim_k E_{Q_k}[X_k]
    \le E_Q[X] +2\varepsilon
    \le \sup_{R\in\mathcal{M}(K_n)} E_R[X] +2\varepsilon
    \le\phi_n(X)+2\varepsilon,
  \]
  where the last inequality follows from \eqref{eq:weak.psik}. This shows \eqref{rep:UZ}.\smallskip

  Step (b): For $X\colon\Omega\to(-\infty,+\infty]$ define
  \[ 
    \phi(X):=\inf\left\{ \lambda\in\mathbb{R} \,:\, 
    \begin{array}{l}
    \text{there is $(H^n)$ in $\mathcal{H}^f$ such that $\lambda+(H^n\cdot S)_t(\omega)\geq -Z(\omega^t)$}\\
    \text{for all $(t,\omega)\in[0,T]\times\Omega$ and }\lambda+ \liminf_n (H^n\cdot S)_T \geq X\text{ on }\Omega
    \end{array} \right\}.
  \]
  Let $X\in C_{\delta\sigma}$ such that $X\geq -Z$ for some $c\ge 0$, and let $(Y_n)$ be a sequence of upper semicontinuous functions which increases pointwise to $X$. Define $X_n:= (Y_n\wedge n)\vee (-cZ)$ which is still upper semicontinuous and increases to $X$. We claim that $\sup_n \phi_n(X_n)=\phi(X)$. First observe that for every $Q\in\mathcal{M}_c(\Omega)$ Fatou's lemma and Lemma~\ref{lem:weak} imply
  \[
    \lambda=\lambda+\liminf_n E_Q[(H^n\cdot S)_T]\ge E_Q[\lambda+\liminf_n (H^n\cdot S)_T]\ge E_Q[X]
  \]
  for every $\lambda\in\mathbb{R}$ and $(H^n)$ in $\mathcal{H}^f$ such that $\lambda+\liminf_n (H^n\cdot S)_T\ge X$ and $\lambda +(H^n\cdot S)_T\geq -mZ$ for all $n$ and some $m\ge 0$. Hence, one gets
  \begin{align}\label{eq:phi.supkphik} 
    \phi(X)
    \geq \sup_{Q\in\mathcal{M}_c(\Omega)} E_Q[X] 
    \geq \sup_n \sup_{Q\in\mathcal{M}(K_n)} E_Q[X_n]
    =\sup_n \phi_n(X_n),
  \end{align}
  where the last equality follows from~\eqref{rep:UZ}.

  On the other hand, let $m>\sup_n\phi_n(X_n)$ so that, by definition, for each $n$ there exists $H^n\in\mathcal{H}^f$ such that $m+(H^n\cdot S)_T\geq X_n\geq -Z$ on $K_n$. Thus, it follows from Lemma~\ref{lem:stop} that
  \begin{equation}\label{dynbound}
    m+(H^n\cdot S)_t(\omega)\geq -Z(\omega^t) \quad\text{for all }(t,\omega)\in[0,T]\times K_n.
  \end{equation}
  Fix $\varepsilon>0$. Define the stopping times
  \[
    \sigma_n(\omega):=\inf\{ t\in[0,T] : m+\varepsilon+(H^n\cdot S)_t(\omega) +Z(\omega^t)=0\}\wedge T 
  \]
  and notice that  
  \begin{equation}\label{dynbound2} 
    (H^n\cdot S)_{\sigma_n}=(\tilde{H}^n\cdot S)_T
    \quad\text{for }  \tilde{H}^n:=\sum_{i=1}^{N} h_i^n \1_{\{\sigma_n\geq \tau_i\}}
    \1_{(\tau_i\wedge \sigma_n,\tau_{i+1}\wedge \sigma_n]}\in\mathcal{H}^f,
  \end{equation}
  where $H^n=\sum_{i=1}^{N} h_i^n \1_{(\tau_i,\tau_{i+1}]}$. Fix $\omega\in\Omega$. Then $\omega\in K_j$ for some $j\in\mathbb{N}$ and therefore, by \eqref{dynbound} it follows that $\sigma_n(\omega)=T$ whenever $n\geq j$. Hence, we have 
  \[
    m+\varepsilon+(\tilde{H}^n\cdot S)_T(\omega)
    = m+\varepsilon+ (H^n\cdot S)_T(\omega) 
    \geq X_n(\omega) \quad\text{for }n\geq j. 
  \]	
  As $\omega$ was arbitrary, it follows that $\liminf_n ( m+\varepsilon+(\tilde{H}^n\cdot S)_T)\geq X$. Moreover, it follows from~\eqref{dynbound2} that 
  \[
    m+\varepsilon + (\tilde{H}^n\cdot S)_t(\omega)
    \geq - Z(\omega^{t\wedge \sigma_n(\omega)})
    \geq - Z(\omega^t)\quad\text{for all }(t,\omega)\in[0,T]\times\Omega,
  \]
  which shows that $\phi(X)\leq m+\varepsilon$. Finally, since $m>\sup_n\phi_n(X_n)$ and $\varepsilon>0$ was arbitrary, we conclude that $\phi(X)\leq\sup_n\phi_n(X_n)$, which shows that all inequalities in~\eqref{eq:phi.supkphik} are equalities. In particular, $\phi(X)=\sup_{Q\in\mathcal{M}_c(\Omega)} E_Q[X]$, which shows~\eqref{eq:duality}.
  \smallskip

  Step (c): We finally show that $\mathcal{M}_c(\Omega)$ can be replaced by the set $\mathcal{M}_Z(\Omega)$, and $\mathcal{H}^f$ by $\mathcal{H}$. To that end, fix $X\colon\Omega\to(-\infty,+\infty]$ satisfying $X\geq -Z$
  for some $\lambda\in\mathbb{R}$, $Q\in\mathcal{M}_Z(\Omega)$,  and $(H^n)$ in $\mathcal{H}$ such that 
  $\lambda+(H^n\cdot S)_t(\omega)\geq -Z(\omega^t)$ for all $(t,\omega)\in[0,T]\times\Omega$ and $\lambda+ \liminf_n (H^n\cdot S)_T \geq X$. Define 
  \begin{equation*}
    H^{n,K}:=\sum_{k=1}^K h^n_k \1_{(\tau^n_k,\tau^n_{k+1}]}\in\mathcal{H}^f\quad \text{and}\quad H^n=\sum_{k=1}^{\infty} h^n_k \1_{(\tau^n_k,\tau^n_{k+1}]}.
  \end{equation*}
  Therefore, one gets
  \[
    \lambda+(H^{n,K}\cdot S)_T(\omega)
    =\lambda+(H^{n}\cdot S)_{\tau_{K+1}^n(\omega)}(\omega)
    \geq -Z(\omega^{\tau_{K+1}^n(\omega)})
    \geq -Z(\omega),
  \]
  where the last inequality holds by assumption.
  Hence, by Lemma~\ref{lem:weak} and Fatou's lemma, it follows that
  \begin{align*}
    \lambda&=\lambda+\liminf_n\liminf_K E_Q[(H^{n,K}\cdot S)_T]\ge \liminf_n E_Q[\lambda+ \liminf_K (H^{n,K}\cdot S)_T]\\
    &=\liminf_n E_Q[\lambda+ (H^{n}\cdot S)_T]\ge E_Q[\lambda+ \liminf_n(H^{n}\cdot S)_T]\ge E_Q[X].  
  \end{align*}
  This shows 
  \begin{align*}
    &\inf\left\{ \lambda\in\mathbb{R} \,:\, 
    \begin{array}{l}
    \text{there is a sequence $(H^n)$ in $\mathcal{H}^f$ such that}\\ \text{$\lambda+(H^n\cdot S)_t(\omega)\geq -Z(\omega^t) \text{ for all } (t,\omega)\in[0,T]\times\Omega$ and }\\ \lambda+ \liminf_n (H^n\cdot S)_T(\omega) \geq X(\omega) \,\text{ for all } \omega\in\Omega
    \end{array}
    \right\}\\
    &\qquad\ge  
    \inf\left\{ \lambda\in\mathbb{R} \,:\, 
    \begin{array}{l}
    \text{there is a sequence $(H^n)$ in $\mathcal{H}$ such that}\\ \text{$\lambda+(H^n\cdot S)_t(\omega)\geq -Z(\omega^t) \text{ for all } (t,\omega)\in[0,T]\times\Omega$ and }\\ \lambda+ \liminf_n (H^n\cdot S)_T(\omega) \geq X(\omega) \,\text{ for all } \omega\in\Omega
    \end{array}
    \right\}\\
    &\qquad \ge \sup_{Q\in\mathcal{M}_Z(\Omega)} E_Q[X]\ge \sup_{Q\in\mathcal{M}_c(\Omega)} E_Q[X],
  \end{align*}
  where the first and last terms coincide by the previous steps (a) and (b). 
\end{proof}

The proof of Corollary~\ref{cor:omega.is.whole.space} is a consequence of the following lemma.

\begin{lemma}\label{lem:compact.supported.mm.are.dense}
  Let $\Omega=C([0,T],\mathbb{R}^d)$, $Q\in\mathcal{M}(\Omega)$, and $X\colon\Omega\to\mathbb{R}$ be bounded and Borel.
  For every $\varepsilon>0$ there exists $K\subset\Omega$ compact and $\tilde{Q}\in\mathcal{M}(K)$ such that $|E_Q[X]-E_{\tilde{Q}}[X]|\leq\varepsilon$. In particular, $\sup_{Q\in\mathcal{M}(\Omega)} E_Q[X]=\sup_{Q\in\mathcal{M}_c(\Omega)} E_Q[X]$.
\end{lemma}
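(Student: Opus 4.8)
The statement is an approximation result: an arbitrary martingale measure $Q$ on the full path space $C([0,T],\R^d)$ should be approximable, in the weak sense relevant for computing $E_Q[X]$ of a bounded Borel $X$, by martingale measures living on a compact set. The natural strategy is to exploit that $C([0,T],\R^d)$ is itself $\sigma$-compact (via, e.g., Corollary~\ref{cor:sigma compact examples}, since typical price trajectories live in H\"older-type spaces), so that for any $Q$ we may find, by tightness, a compact $K\subset\Omega$ with $Q(K)\ge 1-\varepsilon'$. The problem is that the conditional measure $Q(\cdot\,|\,K)$ need not be a martingale measure, since conditioning destroys the martingale property. The key idea to repair this is to \emph{stop} the canonical process upon exiting $K$, or rather upon leaving a large compact ball in an appropriate norm, so that the stopped process remains a martingale while its law is concentrated on a compact set.

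Concretely, I would proceed as follows. First, use $\sigma$-compactness of $\Omega=C([0,T],\R^d)$ together with tightness of $Q$ to choose, for a given $\varepsilon>0$, a compact set $L\subset\Omega$ with $Q(L^c)$ small. One may take $L$ to be a ball $\{\omega:\|\omega\|_{\beta}\le R\}$ in a H\"older norm $\|\cdot\|_\beta$ for suitable $\beta\in(0,1/2)$ and large $R$; this is compact in the uniform topology by Arzel\`a--Ascoli. Second, define a stopping time $\tau$ adapted to the filtration that detects when the running H\"older-type modulus of $S$ exceeds $R$ (or more simply, when $\max_{s\le t}|S_s|$ exceeds $R$, if one is content with a weaker compact set controlled only by the sup-norm together with an a priori modulus of continuity), and let $\tilde Q:=Q\circ (S^\tau)^{-1}$ be the law of the stopped path $S^\tau$. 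Then $S^\tau$ is a $Q$-martingale (optional stopping for bounded stopping times, or localization), so $\tilde Q\in\mathcal M(\Omega)$, and by construction $\tilde Q$ is supported on a compact set $K$, i.e.\ $\tilde Q\in\mathcal M_c(\Omega)$. Third, estimate $|E_Q[X]-E_{\tilde Q}[X]|=|E_Q[X(S)]-E_Q[X(S^\tau)]|\le 2\|X\|_\infty\, Q(\tau<T)$, and observe that $Q(\tau<T)$ can be made $\le\varepsilon/(2\|X\|_\infty)$ by choosing $R$ large, since $\{\tau<T\}$ is contained in the event that $S$ leaves the compact set, which has small probability. This yields the claimed bound, and the ``in particular'' statement about equality of suprema follows immediately, the inequality $\ge$ being trivial since $\mathcal M_c(\Omega)\subset\mathcal M(\Omega)$.

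The main obstacle is the second step: making sure the stopped law is genuinely supported on a \emph{compact} set and not merely on a bounded-in-sup-norm set. Stopping when $|S|$ exceeds $R$ controls the range but not the modulus of continuity of the stopped path, so compactness is not automatic. The cleanest fix is to choose the compact set $L$ and the stopping time using a control that already encodes equicontinuity — for instance run $\tau$ when the quantity $\sup_{s\le t}|S_s| + \sup_{0\le u<v\le t}|S_v-S_u|/c(u,v)^{1/p}$ (a running $p$-variation-type modulus, with $p>2$ so that martingale paths have finite such modulus $Q$-a.s.) first exceeds $R$; then the stopped paths lie in a fixed ball of $C^{p\var,c}([0,T],\R^d)$, which is compact in the uniform topology by Corollary~\ref{cor:sigma compact examples}, and $Q(\tau<T)\to 0$ as $R\to\infty$ by monotone convergence since this running modulus is finite $Q$-a.s. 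One must check that such a $\tau$ is an $(\mathcal F_t)$-stopping time (it is, being the hitting time of a closed set by an adapted right-continuous increasing process) and that optional stopping applies (it does, $\tau$ being bounded and $S$ a true martingale on $[0,\tau]$, or via uniform integrability of $S^\tau$ once $|S^{\tau}|\le R$). Apart from this, everything reduces to routine tightness and bounded-convergence estimates.
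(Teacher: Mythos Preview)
Your overall strategy---find a compact set carrying most of the $Q$-mass, stop the canonical process upon leaving it, and push $Q$ forward under the stopping map---is exactly what the paper does. The estimate $|E_Q[X]-E_{\tilde Q}[X]|\le 2\|X\|_\infty Q(\tau<T)$ and the verification that $\tilde Q\in\mathcal M(\Omega)$ are also correct.

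There are, however, two issues. First, a minor one: $C([0,T],\R^d)$ is \emph{not} $\sigma$-compact (no infinite-dimensional Banach space is). What you actually need, and what the paper uses, is that it is Polish, so every Borel probability is tight; this already gives the compact $L$ with $Q(L^c)$ small.

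The more serious gap is in your choice of modulus. You propose taking $L$ to be a H\"older ball $\{\|\omega\|_\beta\le R\}$ or, in the ``cleanest fix'', a $C^{p\var,c}$-ball and stopping when the running modulus $\sup_{u<v\le t}|S_v-S_u|/c(u,v)^{1/p}$ exceeds $R$. For this to work you need that modulus to be finite $Q$-a.s.\ for an \emph{arbitrary} $Q\in\mathcal M(\Omega)$, and this fails: a general continuous martingale need not have H\"older paths (time-change a Brownian motion by a continuous increasing function which is not H\"older at a point), and for $C^{p\var,c}$ with a \emph{fixed} control $c$ the same obstruction applies. Finite $p$-variation alone (without control) does hold for $p>2$, but as the paper itself remarks after Corollary~\ref{cor:sigma compact examples}, balls in $C^{p\var}$ are not compact, so that does not help either.

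The paper sidesteps this by reversing the order: it first takes the tight compact $K$, then invokes Arzel\`a--Ascoli to produce a modulus of continuity $f$ (a general continuous increasing function with $f(0)=0$, not a power) such that $K\subset\tilde K:=\{\|\omega\|_\infty\le a,\ |\omega(t)-\omega(s)|\le f(|t-s|)\}$. The stopping time is then the first exit from $\tilde K$, so $\{\tau<T\}\subset K^c$ and compactness of the support of $\tilde Q$ is automatic. In other words, let the compact set dictate the modulus rather than fixing a modulus and hoping the paths comply.
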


\begin{proof}
  If $X = 0$, there is nothing to prove. Otherwise, since $\Omega$ is a Polish space, there exists  $K\subset\Omega$ compact such that $Q(K^c)\leq \varepsilon/ \|X\|_\infty$. By an Arzel{\`a}-Ascoli type theorem there exist $a\in\mathbb{R}$ and a continuous increasing function $f\colon[0,+\infty)\to[0,+\infty)$ such that 
  \[
    K\subset \tilde{K}:=\big\{\omega\in\Omega: \|\omega\|_\infty \leq a \text{ and } |\omega(t)-\omega(s)|\leq f(|t-s|)\text{ for all }s,t\in[0,T] \big\} 
  \]
  and $\tilde{K}$ is compact. Now define the stopping time
  \[
    \tau:=\inf\{ t\geq 0 : |S_t|>a \text{ or } |S_t-S_s|> f(|t-s|)\text{ for some }s\in\mathbb{Q}\cap[0,t] \}\wedge T
  \]
  so that $\tilde{K}=\{\tau=T\}$. Then, for $\tilde{Q}:=Q\circ (S^\tau)^{-1}\in\mathcal{M}(\tilde{K})$ one has 
  \[
    |E_{\tilde{Q}}[X]-E_Q[X]|\le |E_Q[X(S^\tau)1_{K^c}]| + E_Q[X1_{K^c}]\leq 2\varepsilon.
  \]
  In particular, $\sup_{Q\in\mathcal{M}(\Omega)} E_Q[X]=\sup_{Q\in\mathcal{M}_c(\Omega)} E_Q[X]$. 
\end{proof}

\begin{proof}[Proof of Corollary~\ref{cor:omega.is.whole.space}]
  Denote by $\mathcal{K}$ the set of all compact subsets $K\subset\Omega$.
  For $K\in\mathcal{K}$ define $\tilde{K}:=\{\omega^t : t\in[0,T]\text{ and }\omega\in K\}$ which is compact due to (the proof of) Lemma~\ref{lem:compacts.stopped}. For $K\in\mathcal{K}$ and every bounded upper semicontinuous  function $X\colon \Omega\to\mathbb{R}$ define
  \[
    \phi_K(X):=\inf\left\{\lambda \in\mathbb{R} : 
	\begin{array}{l}
	\text{there is }H\in\mathcal{H}^f\text{ and } c\geq 0\text{ such that}\\
	\lambda + (H\cdot S)_T(\omega)\geq -c \text{ for all }\omega\in \Omega\text{ and}\\
	\lambda + (H\cdot S)_T(\omega)\geq X(\omega)\text{ for all }\omega\in K 
	\end{array}\right\}.
  \]
  Then, one has
  \begin{align*}
    \sup_{K\in\mathcal{K}}\phi_K(X)=\sup_{K\in\mathcal{K}}\phi_{\tilde{K}}(X)=\sup_{K\in\mathcal{K}} \max_{Q\in\mathcal{M}(\tilde{K})} E_Q[X]=\sup_{K\in\mathcal{K}} \max_{Q\in\mathcal{M}(K)} E_Q[X]	= \sup_{Q \in {\cal M}_c(\Omega)}E_Q[X].
  \end{align*}	
  The first and third equalities follow from $K\subseteq \tilde K$, the second one follows by  $\phi_{\tilde{K}}(X)=\max_{Q\in\mathcal{M}(\tilde{K})} E_Q[X]$ as in \eqref{rep:UZ} for every $K\in\mathcal{K}$ and the last equality follows by definition of ${\cal M}_c(\Omega)$. Now, use Lemma~\ref{lem:compact.supported.mm.are.dense} to conclude.
\end{proof}

\begin{proof}[Proof of Theorem~\ref{thm:main.integrals.Z}]
  Step (a): For $n\in\mathbb{N}$ and every function $X\colon\Omega\to\mathbb{R}$ define
  \[ 
    \phi_n(X):=\inf\left\{ \lambda\in\mathbb{R} \,:\, 
    \begin{array}{l}
    \text{there is $H\in\mathcal{H}^f$ and $c>0$ such that}\\
    \text{$(H\cdot S)_T\geq -c$ and $\lambda+(H\cdot S)_T\geq X -Z/n$}
    \end{array} \right\}.
  \]
  It follows from Lemma~\ref{lem:weak} that
  $\phi_n(X)\geq \sup_{Q\in\mathcal{M}_Z(\Omega)} \big( E_Q[X]-E_Q[Z]/n\big)$
  for every Borel function $X$ which is bounded from below. Moreover, if $(X_k)$ is a sequence
  in $C_b$ decreasing pointwise to $0$, then $\phi(X_n)\downarrow \phi(0)$. Indeed,
  fix $\varepsilon>0$ arbitrary and $H\in\mathcal{H}^f$ with $(H\cdot S)_T\geq -c$ for some $c\geq 0$ such that
  \[ 
    \varepsilon + \phi_n(0)+ (H\cdot S)_T+ Z/n\geq 0. 
  \]
  Now define $\tilde{c}:=\|X_1\|_\infty-\varepsilon - \phi_n(0)+c$ so that
  $\tilde{c}+\varepsilon +\phi_n(0)+ (H\cdot S)_T\geq X_1$.
  Since $\{Z\leq \tilde{c}n\}$ is compact, it follows from Dini's lemma that $X_k \1_{\{Z\leq \tilde{c}n\}}\leq\varepsilon$
  for $k$ large enough. Hence
  \begin{align*}
    X_k \leq X_k \1_{\{Z\leq \tilde{c}n\}}+X_1 \1_{\{Z>\tilde{c}n\}}
    &\leq \varepsilon+(\varepsilon+\phi_n(0)+(H\cdot S)_T+ Z/n)\1_{\{Z>\tilde{c}n\}}\\
    &\leq 2\varepsilon+\phi_n(0)+ (H\cdot S)_T+ Z/n
  \end{align*}
  so that $\phi_n(X_k)\leq \phi_n(0)+2\varepsilon$ for $k$ large enough which shows that $\phi_n(X_k)\downarrow \phi_n(0)$. 
  Now, a computation similar to the one in the proof of Theorem~\ref{thm:main.integrals} shows that
  \begin{align}
  \label{eq:rep.phi.n.Z}
    \phi_n(X)=\max_{Q\in\mathcal{M}_Z(\Omega)}\big (E_Q[X]-E_Q[Z]/n\big)
  \end{align}
  for every bounded upper semicontinuous function $X\colon\Omega\to\mathbb{R}$.
  Indeed, first notice that since by assumption $Z\geq \|\cdot\|_\infty$, the set
  $\mathcal{M}_Z(\Omega)$ coincides with the set of all local martingale
  measures which integrate $Z$. Therefore, the same arguments as in the proof of Theorem~\ref{thm:main.integrals}
  show that
  \[
    \phi_n^\ast(Q):=\sup_{X\in C_b} (E_Q[X]-\phi_n(X))
    =\begin{cases}
    E_Q[Z]/n, &\text{if } Q\in\mathcal{M}_Z(\Omega),\\
    +\infty,&\text{else},
    \end{cases}
  \]
  and thus that~\eqref{eq:rep.phi.n.Z} is true, at least whenever $X\in C_b$. As for the extension to upper semicontinuous functions, notice that
  $\phi(X)=\max_{Q\in\Lambda_{2c}} (E_Q[X]-E_Q[Z]/n)$ for every $X\in C_b$ satisfying $|X|\leq c$ where $\Lambda_{2c}:=\{ \phi_n^\ast\leq 2c\}$.
  Using the fact that $Z$ has compact sublevel sets and Proposition~\ref{lem:polar}, it follows that $\Lambda_c$ is (sequentially) compact. The rest follows analogously to the proof of Theorem~\ref{thm:main.integrals}.  \smallskip
  
  Step (b): For $X\in C_{\delta\sigma}$ define 
  \[
    \phi(X):=\inf\left\{ \lambda\in\mathbb{R} \,:\, 
    \begin{array}{l}
      \text{there is $(H^n)$ in $\mathcal{H}^f$ and $c\geq 0$ such that $(H^n\cdot S)_T\geq -cZ$}\\
      \text{for all $n$ and }\lambda+ \liminf_n (H^n\cdot S)_T \geq X
    \end{array} \right\}.
  \] 
  Fix $X\in C_{\delta\sigma}$ bounded from below and $X_n$ upper semicontinuous bounded from below such that
  $X=\sup_n X_n$. Then, it follows from Fatou's lemma and Lemma~\ref{lem:weak} that
  \begin{align*}
    \phi(X)
    &\geq \sup_{Q\in\mathcal{M}_Z(\Omega)} E_Q[X]
    =\sup_{Q\in\mathcal{M}_Z(\Omega)}\Big(\sup_n  E_Q[X_n] -E_Q[Z]/n\Big)\\
    &=\sup_n\sup_{Q\in\mathcal{M}_Z(\Omega)}\big(  E_Q[X_n] -E_Q[Z]/n\big)
    =\sup_n \phi_n(X_n).
  \end{align*}
  On the other hand, if $m>\sup_n\phi_n(X_n)$, then for every $n$ there exists $H^n\in\mathcal{H}^f$ such that $m+(H^n\cdot S)_T\geq X_n-Z/n$. Hence, $(H^n\cdot S)_T\geq -cZ$ for $c:=\|X_1\wedge 0\|_\infty+m+1$ and 
  $m+ \liminf_n(H^n\cdot S)_T\geq \liminf_n (X_n-Z/n)=X$, which completes the proof.
\end{proof}

\appendix
\section{Appendix}\label{sec:appendix}

\subsection{Kolmogorov continuity criterion}

In this section we briefly recall a version of the so-called Kolmogorov continuity criterion, which provides a sufficient condition for H\"older and Sobolev regularity of stochastic processes. The presented version is a slight reformulation of~\cite[Theorem~A.10]{Friz2010}. 

Let $(\tilde \Omega,\mathcal{F}, P)$ be a probability space, $X\colon [0,T]\times \tilde \Omega \to \R^d$ be a stochastic process, $T\in (0,+\infty)$, $(\R^d,|\cdot|)$ be the Euclidean space and $W$ be a $d$-dimensional Brownian motion.

\begin{theorem}\label{thm:Kolmogorov} 
  Let $q>r\geq1$ and suppose that there exists a constant $M>0$ such that
  \begin{equation*}
    E_P \big[ |X_t-X_s|^q \big]\leq M |t-s|^{\frac{q}{r}}\quad \text{for all}\quad s,t\in[0,T].
  \end{equation*}
  Then, for any $\alpha\in [0,1/r-1/q)$ and $\delta:=\alpha+1/q$ there exists a constant $C=C(r,q,\alpha,T)$ such that 
  \begin{equation*}
    E_P \big[ \|X\|_{\alpha}^q \big]\leq CM  \quad\text{and}\quad E_P \big[ \|X\|_{W^{\delta,q}}^q \big]\leq CM,
  \end{equation*}
  where we recall the semi-norms
  \begin{equation}\label{eq:Sobolev norm}
    \|X\|_{\alpha} := \sup_{s,t\in [0,T]}\frac{|X_t-X_s|}{|t-s|^\alpha}\quad \text{and}\quad
    \|X\|_{W^{\delta,q}} := \bigg(\int_{[0,T]^2} \frac{|X_t-X_s|^q}{|t-s|^{\delta q+1}}\dd s \dd t\bigg)^{\frac{1}{q}}.
  \end{equation}
\end{theorem}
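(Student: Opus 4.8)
The final statement to prove is Theorem~\ref{thm:Kolmogorov}, the Kolmogorov continuity criterion giving Hölder and fractional Sobolev regularity bounds.

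\medskip

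The plan is to reduce the two estimates to a single application of the classical Garsia--Rodemich--Rumsey inequality. First I would recall the GRR inequality in the form tailored to moduli of continuity: if $\Psi$ is a convex increasing function with $\Psi(0)=0$ and $p$ is a continuous increasing function with $p(0)=0$, and if
\[
  \int_0^T\!\!\int_0^T \Psi\!\left(\frac{|f(t)-f(s)|}{p(|t-s|)}\right)\dd s\dd t \le B < \infty,
\]
then for all $s,t\in[0,T]$ one has the pointwise bound $|f(t)-f(s)| \le 8\int_0^{|t-s|} \Psi^{-1}\!\big(4B/u^2\big)\dd p(u)$. Taking $\Psi(x)=x^q$ and $p(u)=u^{(\alpha+2/q)}$ (i.e.\ $p(u)=u^{\delta+1/q}$ up to the exact bookkeeping of exponents), a direct computation of the resulting integral shows $\|f\|_\alpha \le C(q,\alpha,T)\, B^{1/q}$ for a deterministic constant, where $B = \|f\|_{W^{\delta,q}}^q$ with the convention in~\eqref{eq:Sobolev norm}. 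This is the key deterministic input: it converts a fractional Sobolev bound pathwise into a Hölder bound pathwise.

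\medskip

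With that in hand, I would proceed as follows. \textbf{Step 1:} Bound the Sobolev semi-norm in expectation. By Tonelli's theorem and the hypothesis,
\[
  E_P\big[\|X\|_{W^{\delta,q}}^q\big]
  = \int_{[0,T]^2} \frac{E_P[|X_t-X_s|^q]}{|t-s|^{\delta q+1}}\dd s\dd t
  \le M \int_{[0,T]^2} |t-s|^{q/r - \delta q - 1}\dd s\dd t.
\]
Since $\delta = \alpha + 1/q$ and $\alpha < 1/r - 1/q$, the exponent satisfies $q/r - \delta q - 1 = q(1/r - 1/q - \alpha) - 1 > -1$, so the double integral converges to a finite constant $C(r,q,\alpha,T)$; this gives $E_P[\|X\|_{W^{\delta,q}}^q]\le CM$, the second asserted estimate. \textbf{Step 2:} Apply the GRR consequence from the first paragraph pathwise to $t\mapsto X_t(\tilde\omega)$ (using a separable/continuous modification, which exists because the GRR bound already forces the restriction to rationals to be uniformly continuous, hence extends continuously): $\|X(\tilde\omega)\|_\alpha \le C' \|X(\tilde\omega)\|_{W^{\delta,q}}$ for $P$-a.e.\ $\tilde\omega$. \textbf{Step 3:} Take $q$-th moments and combine with Step~1 to get $E_P[\|X\|_\alpha^q]\le (C')^q E_P[\|X\|_{W^{\delta,q}}^q]\le C'' M$, which is the first estimate.

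\medskip

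The routine parts are the exponent arithmetic in Step~1 and the explicit evaluation of the GRR integral $\int_0^h \Psi^{-1}(4B/u^2)\dd p(u)$ with the power choices, which is an elementary beta-function-type computation. The one genuine subtlety — the ``hard part'' — is the passage to a continuous version and the measurability of $\tilde\omega\mapsto\|X(\tilde\omega)\|_\alpha$ and $\tilde\omega\mapsto\|X(\tilde\omega)\|_{W^{\delta,q}}$; this is handled by working first with the countable index set $\mathbb{Q}\cap[0,T]$, where both semi-norms are suprema/integrals of countably many measurable functions, deriving the bounds there, and then noting that finiteness of the GRR bound yields uniform continuity on rationals and hence a unique continuous extension with the same semi-norms. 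Since the present paper only invokes Theorem~\ref{thm:Kolmogorov} as a citable black box (it is stated as ``a slight reformulation of~\cite[Theorem~A.10]{Friz2010}''), one may alternatively just cite~\cite{Friz2010} for the GRR step and present Steps~1--3 as the short reduction.
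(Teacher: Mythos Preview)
Your proposal is correct. The paper itself does not prove Theorem~\ref{thm:Kolmogorov}; it is merely stated in the appendix as ``a slight reformulation of~\cite[Theorem~A.10]{Friz2010}'' and used as a black box, exactly as you note in your final paragraph. Your argument via the Garsia--Rodemich--Rumsey inequality is in fact the standard route (and the one taken in~\cite{Friz2010}): the Sobolev estimate follows directly from Tonelli and the moment hypothesis, and GRR with $\Psi(x)=x^q$, $p(u)=u^{\alpha+2/q}$ then converts the Sobolev bound into the H\"older bound pathwise. One minor point: your GRR integral $\int_0^h u^{\alpha-1}\dd u$ diverges at the endpoint $\alpha=0$, but this is easily patched by bounding $\|X\|_0\le T^{\alpha'}\|X\|_{\alpha'}$ for any small $\alpha'>0$.
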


Applying Theorem~\ref{thm:Kolmogorov} to It\^o processes reveals the following regularity criterion.

\begin{corollary}\label{cor:ito processes}
  Let $X$ be a $d$-dimensional It\^o process of the form 
  \begin{equation*}
    X_t = x_0 + \int_0^t a_s \dd W_s , \quad t\in [0,T],
  \end{equation*}
  for a predicable process $a\colon [0,T]\times \tilde\Omega \to \R^{d\times d}$ and $x_0 \in \R^d$. Suppose $q\in (2,+\infty)$, $\alpha \in (0,1/2-1/(2q))$ and $\delta= \alpha-1/q$. If $ E_P\big[\int_0^T|a_s|^{q}\dd s\big]<+\infty$,
  then
  \begin{equation*}
    X\in C^{\alpha}([0,T],\R^d) \quad \text{and} \quad X\in W^{\delta,q}([0,T],\R^d),\quad P\text{-a.s.}
  \end{equation*}
\end{corollary}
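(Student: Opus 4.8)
The plan is to obtain both regularity statements from the Kolmogorov continuity criterion in the form of Theorem~\ref{thm:Kolmogorov}, applied to the process $X$. Since the H\"older exponent $\alpha$ and the fractional Sobolev exponent $\delta$ produced there are governed by the arithmetic $\alpha<1/r-1/q$, $\delta=\alpha+1/q$, the whole argument reduces to verifying a single moment bound on increments, namely $E_P[|X_t-X_s|^q]\le M|t-s|^{q/r}$ for an appropriate $r\in[1,q)$ and a finite constant $M$ depending only on $q$, $d$ and $E_P[\int_0^T|a_s|^q\dd s]$.

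To establish this bound I would first note that $X$ is a genuine martingale: by H\"older's inequality in time $\int_0^T|a_u|^2\dd u\le T^{1-2/q}\big(\int_0^T|a_u|^q\dd u\big)^{2/q}$, and a further application of H\"older together with the hypothesis gives $E_P[\int_0^T|a_u|^2\dd u]<+\infty$, so the stochastic integral $\int_0^\cdot a_s\dd W_s$ is well defined and square-integrable. Next, applying the Burkholder--Davis--Gundy inequality to the increment $X_t-X_s=\int_s^t a_u\dd W_u$ yields a constant $c_{q,d}>0$ with
\[
  E_P\big[|X_t-X_s|^q\big]\le c_{q,d}\,E_P\Big[\Big(\int_s^t|a_u|^2\dd u\Big)^{q/2}\Big].
\]
Because $q/2\ge1$, Jensen's inequality for the normalised Lebesgue measure $|t-s|^{-1}\dd u$ on $[s,t]$ gives the pathwise estimate $\big(\int_s^t|a_u|^2\dd u\big)^{q/2}\le|t-s|^{q/2-1}\int_s^t|a_u|^q\dd u$. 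Taking expectations and enlarging the integration domain to $[0,T]$ we arrive at
\[
  E_P\big[|X_t-X_s|^q\big]\le c_{q,d}\,E_P\Big[\int_0^T|a_u|^q\dd u\Big]\,|t-s|^{q/2-1}=:M\,|t-s|^{q/r},
\]
which is the desired moment bound, with the value of $r$ read off from $q/r=q/2-1$.

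Feeding this estimate into Theorem~\ref{thm:Kolmogorov} produces a finite constant $C$ with $E_P[\|X\|_\alpha^q]\le CM$ and $E_P[\|X\|_{W^{\delta,q}}^q]\le CM$ for the admissible pairs $(\alpha,\delta)$; in particular $\|X\|_\alpha<+\infty$ and $\|X\|_{W^{\delta,q}}<+\infty$ hold $P$-almost surely, which is exactly the asserted path regularity once the exponent conventions are matched. I expect the only point requiring genuine care to be precisely this bookkeeping: computing $1/r$ from the power of $|t-s|$ above, checking that $1\le r<q$ so that Theorem~\ref{thm:Kolmogorov} is applicable, and reconciling its normalisation $\delta=\alpha+1/q$ with the relation $\delta=\alpha-1/q$ in the statement (using that lowering $\delta$ enlarges $W^{\delta,q}$). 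The probabilistic content — reducing everything to one increment moment bound via BDG and H\"older's inequality in the time variable — is entirely routine.
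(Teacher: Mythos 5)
Your argument is exactly the paper's: the Burkholder--Davis--Gundy inequality followed by Jensen's inequality in the time variable gives $E_P\big[|X_t-X_s|^q\big]\le E_P\big[\int_0^T|a_u|^{q}\dd u\big]\,|t-s|^{q(\frac12-\frac1q)}$, and the conclusion is then read off from Theorem~\ref{thm:Kolmogorov}. The exponent bookkeeping you defer to the end (the value of $r$ and reconciling $\delta=\alpha+1/q$ with $\delta=\alpha-1/q$ via monotonicity of the spaces $W^{\delta,q}$) is precisely what the paper's proof also leaves implicit, so the two arguments coincide.
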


\begin{proof}
  Using the Burkholder-Davis-Gundy inequality and Jensen's inequality, one has
  \begin{equation*}
    E_P \big[ |X_t-X_s|^q \big]
    \leq E_P \bigg[\bigg(\int_s^t|a_r|^{2}\dd r\bigg)^{q/2}\bigg] 
    \leq E_P \bigg[\int_0^T |a_r|^{q}\dd r \bigg] |t-s|^{q(\frac{1}{2}-\frac{1}{q})}.
  \end{equation*}
  Therefore, Theorem~\ref{thm:Kolmogorov} implies the assertion.
\end{proof}

\subsection{Construction of counter-example}

The example (see Remark~\ref{rem:counterexample}) showing that bounded variation strategies and, in particular, simple trading strategies are not rich enough to obtain the pathwise pricing- hedging duality was based on a H\"older continuous function with exploding quadratic variation. The existence of such a function is ensured by the following lemma. 

\begin{lemma}\label{lem:example function}
  There exists a function $\tilde \omega \in C^{1/4}([0,T],\mathbb{R})$ for some $T>0$ and a refining sequence of partitions $(\tilde\pi_n)_{n\in \mathbb{N}}$ of the interval~$[0,T]$ such that 
  \begin{align*}
    & 0\leq \tilde \omega (t) \leq 1, \quad t\in [0,T],\\
    &\langle\tilde\omega\rangle_t:=\lim_n\langle\tilde\omega\rangle_t^n,
    \quad \text{where}\quad
    \langle \tilde\omega\rangle^n_t:=\sum_{[u,v]\in\tilde\pi_n} (\tilde \omega(u\wedge t)-\tilde\omega(v\wedge t))^2,
  \end{align*}
  exists for every $t\in [0,T)$ and $\langle\tilde \omega \rangle_t\to \infty$ as $t\to T$. 
\end{lemma}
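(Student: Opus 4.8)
The plan is to build $\tilde\omega$ as a lacunary-type series of scaled and shifted triangular ``tent'' bumps, exactly as one constructs Weierstrass-type H\"older functions, but with the bump heights and widths chosen so that the $1/4$-H\"older norm stays finite while the quadratic sums along a suitably adapted dyadic-type partition diverge as $t\uparrow T$. Concretely, I would fix $T=1$, partition $[0,1)$ into a sequence of disjoint blocks $I_k=[1-2^{-k},1-2^{-(k+1)})$, $k\ge 0$, each of length $2^{-(k+1)}$, and on each block $I_k$ glue in $N_k$ identical disjoint tent functions of height $a_k$ and base-width $w_k=2^{-(k+1)}/N_k$. Outside these tents $\tilde\omega$ is constant (equal to its left value at the start of the block), and one arranges that $\tilde\omega$ returns to the \emph{same} value at the end of each tent so that the running value stays in $[0,1]$; taking all tents to start and end at, say, level $0$ (so $\tilde\omega$ is supported on the tents) keeps $0\le\tilde\omega\le1$ provided $a_k\le 1$.

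The two competing requirements pin down the parameters. For $1/4$-H\"older regularity the worst case is the steepest tent: a tent of height $a_k$ over width $w_k$ contributes local increments of order $a_k$ over time-scales down to $w_k$, so it is $1/4$-H\"older with constant $\sim a_k w_k^{-1/4}$; since the bumps on different blocks live on disjoint intervals and their heights are uniformly bounded, $\|\tilde\omega\|_{C^{1/4}}<\infty$ follows once $\sup_k a_k w_k^{-1/4}<\infty$, i.e. $a_k\lesssim w_k^{1/4}$. For the quadratic variation I would take the partition $\tilde\pi_n$ to refine the natural mesh of the tents: on block $I_k$, once $n$ is large enough that $\tilde\pi_n$ resolves each tent of $I_k$ at the scale $w_k$, the sum $\sum (\Delta\tilde\omega)^2$ over $I_k$ picks up $\sim N_k$ terms each of size $\sim a_k^2$, i.e. a contribution $\sim N_k a_k^2 = N_k\,(2^{-(k+1)}/N_k)^{1/2}$-ish. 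Choosing $a_k=w_k^{1/4}$ gives a contribution $\sim N_k w_k^{1/2}=N_k (2^{-(k+1)}/N_k)^{1/2}=\sqrt{N_k}\,2^{-(k+1)/2}$ from block $I_k$, so picking $N_k=2^{k}\cdot k^2$ (say) makes the per-block contribution $\sim k\cdot 2^{-1/2}$, whose partial sums over $k\le K$ diverge like $K^2$. Hence $\langle\tilde\omega\rangle_t\to\infty$ as $t\to1$, while for each fixed $t<1$ only finitely many blocks are fully traversed and the remaining block contributes a bounded amount, so $\langle\tilde\omega\rangle_t$ is finite; taking the partitions $(\tilde\pi_n)$ refining (e.g. each $\tilde\pi_n$ obtained by bisecting $\tilde\pi_{n-1}$ and the tent breakpoints of the first $n$ blocks) makes $\langle\tilde\omega\rangle_t^n$ eventually constant in $n$ for each $t<1$, so the limit exists.

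The main obstacle I anticipate is bookkeeping rather than a genuine difficulty: one must (i) verify the H\"older bound uniformly across \emph{all} pairs $s<t$, including pairs straddling many blocks, which requires summing the oscillations across blocks and using that each block has diameter $\le a_k\le w_k^{1/4}\le (2^{-(k+1)})^{1/4}$ together with $|t-s|\ge$ the gap between the blocks — a geometric-series estimate — and (ii) choose the refining sequence $(\tilde\pi_n)$ carefully so that along it the quadratic sums converge (not merely $\liminf$) on $[0,1)$ while still blowing up at $T$; this is why I insist each $\tilde\pi_n$ contain all tent breakpoints of blocks $0,\dots,n$. Once the parameters $a_k=2^{-(k+1)/4}N_k^{-1/4}$, $N_k=k^2 2^k$ are fixed, both claims reduce to elementary estimates, so I would state the construction, record the H\"older bound and the block-wise quadratic-variation computation as two short lemmas-within-the-proof, and conclude.
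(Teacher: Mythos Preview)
Your construction has a genuine gap that is not mere bookkeeping. The function you build is \emph{piecewise linear}, hence Lipschitz on every compact subinterval $[0,t]$ with $t<T$ (the Lipschitz constant on $[0,t]$ being $\max_{k\le K} 2a_k/w_k$ if $t\in I_K$). But for any Lipschitz function $g$ with constant $L$ one has
\[
\sum_{[u,v]\in\pi}\big(g(v)-g(u)\big)^2 \le L^2\sum_{[u,v]\in\pi}(v-u)^2 \le L^2\, t\,\operatorname{mesh}(\pi),
\]
so along any refining sequence with mesh tending to zero the quadratic variation $\langle\tilde\omega\rangle_t$ equals $0$ for every $t<T$, and the conclusion $\langle\tilde\omega\rangle_t\to\infty$ as $t\to T$ fails. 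Your own prescription (``bisecting $\tilde\pi_{n-1}$'') drives the mesh to zero and therefore collapses the quadratic sums on every block you have already resolved; this directly contradicts your claim that $\langle\tilde\omega\rangle_t^n$ is ``eventually constant in $n$''. If instead you drop the bisection and let $\tilde\pi_n$ consist only of the tent breakpoints of blocks $0,\dots,n$, the limit does exist and is positive, but then the mesh stays bounded below by $w_0/2>0$. That would literally match the lemma as stated, yet it does not serve the intended application in Remark~\ref{rem:counterexample}, where the sequence of partitions is explicitly required to have mesh going to zero.

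The paper circumvents this obstruction by taking as building block not a tent but a sample path $f\in\bigcap_{\beta<1/2}C^\beta([0,1],\mathbb{R})$ of a Brownian excursion in $[0,1]$, chosen so that $f(0)=f(1)=0$, $0\le f\le 1$, and $f$ has strictly positive F\"ollmer quadratic variation along dyadics. One then sets $\tilde\omega(t)=n^{-1/2}f\big(n^2(t-t_{n-1})\big)$ on $[t_{n-1},t_n)$ with $t_n-t_{n-1}=n^{-2}$, so that block $n$ contributes $n^{-1}\langle f\rangle_1$ to the quadratic variation and the harmonic series produces the divergence at $T$. The scaling gives H\"older constant $n^{-1/2}(n^2)^{1/4}=1$ on each block, yielding $\tilde\omega\in C^{1/4}$. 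The essential point is that $f$ already carries nontrivial quadratic variation at every scale, so refining the partition with mesh tending to zero does not kill the accumulated sum; this is precisely what a piecewise-linear building block cannot provide.
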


\begin{proof}
  Let $(\pi_n)$ be the refining sequence of partition given by the dyadic points $\mathbb{D}_n := \{k 2^{-n} \,:\,k \in \mathbb{N}_0\}$ with $\mathbb{N}_0:=\mathbb{N}\cup \{0\}$ adding the stopping time $\tau (\omega) := \inf \{ t>0 \,:\, \omega(t)=0 \}$ for $\omega\in C([0,T],\mathbb{R})$.
  
  Recalling the properties of a Brownian motion~$W$, we know that the event of a Brownian motion~$W$ starting at~$0$, $\tau\geq 1$, and $W_t\in (0,1)$ for $t\in (0,\tau)$ has a strictly positive probability. This fact ensures the existence of a constant $T_0>1$ and a (nowhere constant) function $f\in C^{\beta}([0,T_0],\mathbb{R})$ for every $\beta \in (0,1/2)$ such that 
  \begin{enumerate}
    \item $f(0)=f(T_0)=0$,
    \item $0\leq f(t)\leq 1$ for all $t\in [0,T_0]$,
    \item the pathwise quadratic variation given by $\langle f\rangle_t:=\lim_n\langle f \rangle_t^n$ exists along $(\pi_n)$ for every $t\in [0,T_0]$ (as limit in uniform convergence) and $\langle f \rangle_{T_0}>0$.
  \end{enumerate}
  Without loss of generality we may assume $T_0=1$ since it is always possible to modify~$f$ to ensure this without loosing the others properties. 
  
  Setting $T:= \sum_{n\in \mathbb{N}_0}n^{-2}<\infty$ and iteratively $t_n:=t_{n-1}+n^{-2}$ for $n\in \mathbb{N}$ with $t_0=0$, we define
  \begin{equation*}
    \tilde \omega (t):= n^{-1/2} f(n^2(t-t_{n-1}))\quad \text{for }t\in [t_{n-1},t_n),
  \end{equation*}
  with $\tilde \omega (T):=0$.
  
  Let us first show that $\tilde \omega \in C^{1/4}([0,T],\mathbb{R})$. For $s,t\in [0,T)$ there exist $n,m\in \mathbb{N}$ such that $s\in [t_{n-1},t_n]$ and $t\in [t_{m-1}, t_m]$. Therefore, we get 
  \begin{align*}
    |\tilde \omega (t)-&\tilde \omega (s)|
    \leq \sum_{k=n}^m |\tilde \omega (t_k\wedge t)-\tilde \omega (t_{k-1}\vee s)|\\
    &\leq |\tilde \omega (t_m\wedge t)-\tilde \omega (t_{m-1}\vee s)| +  |\tilde \omega (t_n\wedge t)-\tilde \omega (t_{n-1}\vee s)|\\
    &\leq L_f m^{-1/2} |m^{2}( (t_m\wedge t)- (t_{m-1}\vee s))|^{1/4} +  L_f n^{-1/2}  |n^{2}((t_n\wedge t)-\omega (t_{n-1}\vee s))|^{1/4}\\
    &\leq 2 L_f  |t-s|^{1/4},
  \end{align*}
  where $L_f>0$ denotes the $1/4$-H\"older norm of~$f$. If $0\leq s<t=T$, choose $n,m\in \mathbb{N}$ such that $s\in [t_{n-1},t_n]$  and $m^{-1/2}\leq |t-s|^{1/4}$. This time, we get 
  \begin{align*}
    |\tilde \omega (t)-\tilde \omega (s)|
    &\leq |\tilde \omega (T)-\tilde \omega (t_{m-1}\vee s)| +  |\tilde \omega (t_n\wedge t)-\tilde \omega (t_{n-1}\vee s)|\\
    &\leq m^{-1/2} +  L_f n^{-1/2}  |n^{2}((t_n\wedge t)-\omega (t_{n-1}\vee s))|^{1/4}\\
    &\leq  (1+L_f)  |t-s|^{1/4}.
  \end{align*}
  Based on these two estimates, we see that $\tilde \omega \in C^{1/4}([0,T],\mathbb{R})$.
  
  To obtain the desired properties of the quadratic variation, we define the partition $\tilde \pi_m$ for $m\in \mathbb{N}$ as follows.
  For $n\leq m$, $\tilde\pi_m$ restricted to $[t_{n-1},t_{n}]$ consists of the point 
  \begin{equation*}
    \tau_{k}^m:=\inf\{ t\geq \tau_{k-1}^m \,:\, \tilde \omega (t)= f(k2^{-m})\} \quad \text{and}\quad \tau_0^m := t_{n-1},
  \end{equation*}
  for $n\geq m$, choose $\tilde\pi_m$ restricted to $[t_{n-1},t_{n}]$ to be empty, and $T$ is included in $\tilde\pi_m$. Note that $(\tilde\pi_m)$ is a refining sequence of partitions. Furthermore, by contraction of $(\tilde\pi_m)$ the pathwise quadratic variation of $\tilde \omega$ exists along  $(\pi_m)$ for all $t\in [0,T)$ and for $t_n$ we observe that 
  \begin{equation*}
    \langle \tilde \omega \rangle_{t_n} = \sum_{k=1}^n \frac{\langle f\rangle_{T_0}}{n}
  \end{equation*}
  which goes to infinity as $t_n\to T$ or in other word $n\to \infty$.
\end{proof}


\vspace{0.7cm}
\noindent Daniel Bartl, Universit\"at Wien, Austria\\ 
{\small \textit{E-mail address:} daniel.bartl@univie.ac.at}\bigskip

\vspace{-.1cm}
\noindent Michael Kupper, Universit\"at Konstanz, Germany\\
{\small\textit{E-mail address:} kupper@uni-konstanz.de}\bigskip

\vspace{-.1cm}
\noindent David J. Pr\"omel, University of Oxford, United Kingdom\\
{\small\textit{E-mail address:} proemel@maths.ox.ac.uk}\bigskip

\vspace{-.1cm}
\noindent Ludovic Tangpi, Princeton University, United States of America\\ 
{\small\textit{E-mail address:} ludovic.tangpi@princeton.edu}

\end{document}